\title{Optimization-friendly generic mechanisms without money}
\author{Mark Braverman\thanks{Department of Computer Science, Princeton University. Research supported in part by the NSF Alan T. Waterman Award, Grant No. 1933331, a Packard Fellowship in Science and Engineering, and the Simons Collaboration on Algorithms and Geometry. Any opinions, findings, and conclusions or recommendations expressed in this publication are those of the author and do not necessarily reflect the views of the National Science Foundation.  \newline {\bf Acknowledgments.  } I gratefully acknowledge the many comments that helped shape this paper from Itai Ashlagi, Antonio Molina Lovett, Aviad Rubinshtein, Matt Weinberg, and the detailed comments from Sahil Singla.
This paper was influenced by discussions with Georgy Noarov, Sahil Singla, Matt Weinberg, Leeat Yariv, and Yufei  Zheng.}}
\date{\vspace{-5ex}}
\newtheorem{thm}{Theorem}
\newtheorem{lemma}[thm]{Lemma}
\newtheorem{claim}[thm]{Claim}
\newtheorem{definition}[thm]{Definition}
\newtheorem{problem}{Problem}
\newcommand\E{\mathop{\mathbb{E}}}
\newcommand{\NP}{{\sf NP}}
\newcommand{\ti}{\tilde}
\newcommand{\cX}{\mathcal{X}}
\newcommand{\RR}{\mathbb{R}}
\newcommand{\ra}{\rightarrow}
\newcommand{\la}{\lambda}
\newcommand{\La}{\Lambda}
\newcommand{\al}{\alpha}
\newcommand{\be}{\beta}
\newcommand{\bla}{\bar{\lambda}}
\newcommand{\ignore}[1]{{}}
\newcommand{\cH}{\mathcal{H}}
\newcommand{\cA}{\mathcal{A}}
\newcommand{\cO}{\mathcal{O}}
\newcommand{\cP}{\mathcal{P}}
\newcommand{\cV}{\mathcal{V}}
\newcommand{\cM}{\mathcal{M}}
\newcommand{\lamax}{\bar{\la}}
\newcommand{\barX}{\bar{\mathbf X}}
\newcommand{\rom}[1]{\uppercase\expandafter{\romannumeral #1\relax}}
\newcommand{\ba}{{\mathbf{a}}}
\newcommand{\bA}{{\mathbf{A}}}
\newcommand{\ve}{\varepsilon}
\newcommand{\si}{\sigma}
\newcommand{\de}{\delta}
\newcommand{\De}{\Delta}
\newcommand{\ga}{\gamma}
\newtheorem*{rep@theorem}{\rep@title}
\newcommand{\newreptheorem}[2]{%
\newenvironment{rep#1}[1]{%
\def\rep@title{\bf #2 \ref*{##1} \text{(Restated)} }%
\begin{rep@theorem} }%
{\end{rep@theorem} } }
\newtheorem*{rep@claim}{\rep@title}
\newcommand{\newrepclaim}[2]{%
\newenvironment{rep#1}[1]{%
\def\rep@title{\bf #2 \ref*{##1} \text{(Restated)} }%
\begin{rep@claim} }%
{\end{rep@claim} } }
\newcommand{\apex}{{\sf APEX }}
\newtheorem*{rep@lemma}{\rep@title}
\newcommand{\newreplemma}[2]{%
\newenvironment{rep#1}[1]{%
\def\rep@title{\bf #2 \ref*{##1} \text{(Restated)} }%
\begin{rep@lemma} }%
{\end{rep@lemma} } }
\begin{document}
\maketitle
\begin{abstract}
The goal of this paper is to develop a generic framework for converting modern optimization algorithms into mechanisms where inputs come from self-interested agents. 

We focus on aggregating preferences from $n$ players in a context without money. Special cases of this setting include voting, allocation of items by lottery, and matching. 
Our key technical contribution is a new meta-algorithm we call \apex (Adaptive Pricing Equalizing Externalities). The framework is sufficiently general to be combined with any optimization algorithm that is based on local search. We outline an agenda for studying the algorithm's properties and its applications. 

As a special case of applying the framework to the problem of one-sided assignment with lotteries, we obtain a strengthening of the 1979 result by  Hylland and Zeckhauser on allocation via a competitive equilibrium from equal incomes (CEEI). The [HZ79] result posits that there is a (fractional) allocation and a set of item prices such that the allocation is a competitive equilibrium given prices. We further show that there is always a reweighing of the players' utility values such that running unit-demand VCG with reweighed utilities leads to a HZ-equilibrium prices. Interestingly, not all HZ competitive equilibria come from VCG prices. As part of our proof, we re-prove the [HZ79] result using only Brouwer's fixed point theorem  (and not the more general Kakutani's theorem). This may be of independent interest. 
\end{abstract}

\newpage

\subsection*{Overview and summary of results}

\paragraph{Motivation.} Our main goal is to develop a generic reduction for converting algorithms based on iterated local optimization into mechanisms. We focus on mechanisms without money (which are generally more difficult to design). We would like our reduction to work for heuristics that have good empirical performance even in lieu of formal guarantees. Therefore, our reduction aims to change the algorithm as little as possible, while attaining good incentive properties. 

Specifically, we start with $n$ players who have preference functions $f_i$ over an outcome space $\cX$. There is an optimization heuristic $\cH$ for maximizing functions over $\cX$. The heuristic $\cH$ is {\em local} --- giving a recipe for constructing a sequence $x_0,x_1,\ldots\in \cX$ that (hopefully) converges to a high-value outcome. Our goal is to use $\cH$ to produce a mechanism that (1) matches the performance of $\cH$ as much as possible; (2) leads to a correlated equilibrium where  for each player $i$,  reporting $f_i$ truthfully is an approximately dominant strategy. 

\paragraph{Main ingredients.} We connect three main ingredients: (1) online learning and its connection to correlated equilibria in games; (2) the VCG mechanism --- a mechanism {\em with} money, where truthful reporting by participants is a dominant strategy; and (3) bandits with knapsacks (BwK) --- a special type of online learning where players 
obtain a reward and experience a capacity cost every time they pull an arm, and where the aim is to maximize total reward subject to a capacity budget. We will elaborate on these ingredients in Sections~\ref{sec:11}--\ref{sec:14}. 

\paragraph{The \apex algorithm and framework.} Our main \apex (adaptive pricing equalizing externalities) framework is given by Algorithm~\ref{alg:1} in Section~\ref{sec:apex}. It is most closely related to CEEI (competitive equilibria from equal incomes) in the mechanism design literature, with a major distinction being that the equilibrium gets discovered  together with the participants via an iterated optimization procedure using heuristic $\cH$ by the principal.

The principal receives utility functions $f_1,\ldots,f_n$ from the players\footnote{The principal may not need to actually collect the $f_i$'s. In fact, the players may not even need to know their own $f_i$'s. The algorithm can be implemented purely using gradient queries, where players are asked to give their local preference $\nabla f_i(x)$ at a given point $x$.}, and a sequence of coefficients $\la_{i,t}\ge 0$ representing the weight the $i$-th player's preferences should be given at round $t$. Each player receives a fixed token endowment $B$ at the beginning of the execution.

 At round $t$, the principal uses heuristic $\cH$ to (locally) optimize the objective $\sum_i \la_{i,t} f_i(x)$ --- possibly with a regularizer added to it. The principal then uses $\cH$ to perform local optimization in order to calculate VCG prices (in tokens) that the players will get charged. The players, from their end, will run bandits with knapsacks algorithms to produce $\la_{i,t}$'s in order to maximize their utility subject to the token budget of $B$. 
 
 The output of the mechanism is the long-term trajectory of this iterated game. Assuming players have negligible regrets, all players are either maximally happy or exhaust their token budgets --- which mean that their averaged VCG payments (and thus their externalities) are equalized.

\paragraph{General results.} The \apex algorithm {\em converges} if after some number of iterations: (1) a good-value solution to the optimization problem is reached; (2) players have negligible regret with respect to their actions in the bandits-with-knapsacks game. 
The framework is very general, and it is unlikely that a full-generality convergence result can be proved (especially since we do not wish to make assumptions on the heuristic $\cH$). 
However, {\em assuming} the algorithm converges, we can show that it leads to an approximate correlated equilibrium where truthful reporting of $f_i$ is dominant:

\medskip
\noindent {\bf Lemma~\ref{lem:RegToEq1}.~~[restated]}
{\em 	Suppose that $f_i(x)\in[0,1]$ for all $x\in\cX$, and that during the execution of Algorithm~\ref{alg:1}  with budget $B_i$ and a truthfully reported $f_i$,  Player $i$ has strong regret $\le\ve$. Suppose further that heuristic $\cH$ is locally correct.  Then reporting $f_i$ truthfully is an $(2\ve)$-dominant strategy for the menu of options available to Player~$i$ that is induced by the mechanism. }
	\medskip
	
The conditions of Lemma~\ref{lem:RegToEq1} can be verified given an execution of the mechanism. Even if we can't be sure that the mechanism will converge, we can be assured of good incentive properties given a convergent execution. Therefore, the framework allows us to convert heuristic algorithms into heuristic mechanisms. 

\paragraph{Application: new results for the one-sided matching problem.}
In Section~\ref{sec:app}, we discuss applications of the framework to three classical areas of  mechanisms without money: voting, one-sided assignment, and two-sided matching. Voting with cardinal preferences (Section~\ref{sec:voting}) is subject of an ongoing work and is mostly beyond the scope of this paper. Efficient two-sided matching (Section~\ref{sec:twosided}) --- the Gale-Shapley setting but with cardinal preferences ---  is perhaps the most interesting immediate application of our framework. Plugging the two-sided matching setup into the \apex framework gives interesting initial results, but a key definition of ``externality" in this setting appears to be non-canonical. We discuss this issue in detail in Section~\ref{sec:twosided}. 

One area where we are able to immediately use the \apex framework to obtain new results is one-sided assignment. A classical result of Hylland and Zeckhauser \cite{hylland1979efficient} states that in a setting without money $n$ items can be allocated to $n$ unit-demand players via lotteries using an equilibrium from equal incomes (CEEI). Given utilities $u_{ij}\ge 0$ --- the utility of player $i$  for item $j$ --- there are prices $P_j\ge 0$ assigned to items, and a bi-stochastic $n\times n$ allocation matrix $X$, such that $X_i$ is the best distribution on items Player~$i$ can afford with a unit budget and prices $\{P_j\}$. 

By plugging the one-sided allocation into the \apex framework, where one iteration is unit-demand VCG with utilities $\{\la_{i,t}\cdot u_{ij}\}$, we obtain a strengthening of the HZ result. We show that there is always a scaling of utilities such that resulting VCG prices support a HZ equilibrium.

\medskip \noindent {{\bf Theorem~\ref{thm:VCG-HZ}.} {\em [restated]}~~~}
{\em	Let $U=\{u_{ij}\}_{i,j=1..n}$ be a matrix utilities with $u_{ij}\ge 0$. Then there exist numbers $\la_i\ge 0$, prices $C=\{C_j\}$ and an allocation $X=\{x_{ij}\}$ with the following properties. 
	\begin{enumerate}
		\item 
		$X$ is a valid allocation: $\forall j: \sum_i x_{ij}=1$ and
		$\forall i:\sum_j x_{ij}=1$; 
		\item
		$C_j$ are the VCG prices for utilities given by $u'_{ij} = \la_i u_{ij}$;
		\item
		$X$ is a combination of optimal allocations under $u'$: for every $\pi:[n]\ra[n]$ with $\forall i~x_{i\pi(i)}>0$ we have 
		$$
		\sum_i u'_{i \pi(i)} = \max_{\si} 	\sum_i u'_{i \si(i)}. 
		$$
		\item 
		The players can purchase their allocations with budget not exceeding $1$. For each player $i$, 
		$$
		\sum_j C_j x_{ij} \le 1. 
		$$
		\item 
		Prices $C_j$ and allocation $X$ form a HZ equilibrium. That is, for every player $i$
		$$
	\sum_{j} u'_{ij} x_{ij} = \max_{\displaystyle{y:} \begin{array}{c}\sum_j C_j y_j \le 1 \\ \sum_j y_j =1\end{array}} \sum_j u'_{ij} y_j. 
	$$
	\end{enumerate}}
\medskip

Theorem~\ref{thm:VCG-HZ} was discovered via the \apex framework, but we prove it directly using Brouwer's fixed-point theorem. Our proof is arguably simpler than the original proof of \cite{hylland1979efficient}, although it relies on the fact that properties of unit-demand VCG auctions are very well-understood at this point. 

Interestingly, we show that not all HZ CEEI prices are VCG prices, and thus equilibria supported by VCG prices on scaled utilities form a proper subset of all HZ equilibria. 

Even though Theorem~\ref{thm:VCG-HZ} is proved directly without using the \apex algorithm, we do show that any low-regret execution execution of the \apex algorithm on a {\em regularized} objective 
$$
F_t(x) := \sum_{i,j} \la_{i,t} \cdot u_{ij} x_{ij} + F_0(x)
$$ 
will lead to an approximate HZ competitive equilibrium.

\medskip \noindent {{\bf Theorem~\ref{thm:reg}.} {\em [informal, restated]}~~~}
 {\em	In the unit-demand allocation setting without money with $n$ players and $n$ items, let $\{u_{ij}\}\ge 0$ be utilities. 
	
	For each $\de>0$, there is an $\ve>0$ and a concave regularizer $F_0(x)$ such that  an execution of the \apex algorithm with  regret $<\ve \cdot T$, leads to an allocation $X$ and VCG prices $C$, such that $X$ is supported by a $\de$-approximate competitive equilibrium from equal income with prices $C$.  }
\medskip

\section{Introduction}

Algorithms play an increasingly important role in coordinating 
a broad range of human activity. Algorithm design addresses the problem 
of attaining a desired outcome on a given input. For example, finding a 
good allocation of tasks to machines, finding and maintaining a communication 
route between a client and a host, or optimizing an online advertisement 
campaign for maximum impact. More open-ended (in terms of objective function) important algorithmic tasks include internet search and matching consumers to goods and services. Another very important special class of algorithms has to do with building models for predicting the future --- for example in the context of planning and control. Any process affecting the well-being of participants it does not directly control invites manipulation by those participants for their own benefit, and algorithm-driven processes are not exempt from this rule. Examples range from strategic voting to the search engine optimization industry. This motivates the field of {\em algorithmic mechanism design}, whose goal is to design algorithms that work ``well" even when 
the inputs come from self-interested players. 

\paragraph{Algorithmic mechanism design and the price of anarchy.} In mathematical terms, an algorithm induces a mechanism which can be analyzed 
using game-theoretic tools. Typically, the design goal is to attain a good performance in some kind of game-theoretic {\em equilibrium}: a situation where the output of the  algorithm is good according to the prescribed performance metric, while participants cannot change their behavior (such as their input to the algorithm) to drastically improve their own well-being. 

Almost any optimization problem can be cast into one (or more) mechanism design problem based on which participants are allowed to behave strategically, and the space of allowed strategic behaviors. Needless to say, the mechanism design problem is significantly more difficult both mathematically (having to deal with game-theoretic equilibria instead of simple objective function values), and in terms of the performance one can guarantee. In the algorithmic game theory literature the gap between the performance of the best optimization algorithm and the best (equilibrium) performance of a mechanism for the same problem is called the {\em price of anarchy}, and it can be significant in many cases. 

\paragraph{Online algorithms.} An algorithmic setting which plays an important role both in practice and in the theory of machine learning is that of {\em online algorithms}. In the online setting, at time $t$ the algorithm receives an input $X_t$, and needs to produce an action $A_t(X_{1..t},Y_{1..t-1})$. It then learns the state of nature $Y_t$ at time $t$, and experiences loss $L(X_t,A_t,Y_t, R_t)$ (where $R_t$ is random and is not observed directly). Consider the example of, say, learning to label objects. In this setting the algorithm receives object $X_t$, produces a label $A_t$. The reference label $Y_t$ is then revealed and the loss function measures some kind of distance between $A_t$ and $Y_t$. A low-loss algorithm would translate into a function that correctly predicts the mapping $X_t\mapsto Y_t$.

\medskip

 There exist multiple connections between optimization algorithms, online algorithms and game theory. Fueled by machine learning applications, there has been significant progress in both theoretical and empirical understanding of online algorithms (and optimization algorithms that are tightly connected to them). Our goal is to investigate generic ways to extend this progress to algorithmic mechanism design. We start by exploring the three-way connection between optimization algorithms, online algorithms, and game theory. 

\subsection{Bandits and regret minimization}
\label{sec:11}

In this section we will briefly survey the simplest setup for online algorithms, namely expert and bandit games. These games serve as an instructive (but tractable) model for more general machine learning scenarios, and have important connections to game theory and equilibria. 

\paragraph{Setup.}
 The game is played repeatedly for $T$ time steps. At each step, the player is allowed to pull one of $K$ available arms from set $\cA$\footnote{The size $K$ of $\cA$ varies depending on the application domain. Generally, in the context of machine learning, $\cA$ corresponds to the hypothesis class and is exponentially large in $T$, while in game-theoretic applications $\cA$ may correspond to available strategies and is often smaller.}. The player incurs loss $\ell_{it}$ for pulling arm $i$ at time $t$. The player's goal is to minimize total loss
 of the sequence of pulls $\mathbf{i}=\{i_t\}_{t=1}^T$:
 $$
 L(\mathbf{i}) := \sum_{t=1}^T \ell_{i_t, t}. 
 $$
 One standard benchmark for the player to meet is to attain small {\em weak regret}, or regret against fixed strategies:
 \begin{equation}
 	\label{eq:reg1}
 	R(\mathbf{i}):= \sum_{t=1}^T \ell_{i_t, t} - \min_{j\in \cA}\left(
 	\sum_{t=1}^T \ell_{j, t}\right).
 \end{equation}
That is, the goal is to perform better (or at least not much worse) than 
the best arm in hindsight. 

An important distinction in this context needs to be made between the bandits and the experts setting. In the bandits setting, the player only learns the loss resulting from her own action, while in the experts setting she also learns the (counterfactual) loss of actions not taken. The bandits setting is more appropriate in game-theoretic scenarios, where a player does not typically know the hypothetical outcome of actions not taken. The experts setting is a  good fit for machine learning problems, where it is possible to evaluate the performance of any model on past examples. 

Generally speaking, minimizing regret as in \eqref{eq:reg1} is a well-understood problem, in both the bandits and the experts setting \cite{lattimore2020bandit}. Assume losses are bounded in $[0,1]$, and inputs are adversarial (that is, the player is allowed to randomize her strategy, and wishes to attain a low regret in expectation for {\em any} possible loss function). Then the best regret on can attain in the bandit setting is $\sim \sqrt{T\cdot K}$, and in the expert setting is 
$\sim \sqrt{T \cdot \log K}$. In particular, in the bandits setting, regret becomes $o(T)$ whenever $K\ll T$ holds\footnote{When $K> T$ we can't expect low regret without additional assumptions, since the player won't even get a chance to try all arms.}. In the expert setting regret becomes $o(T)$ whenever $K=2^{o(T)}$. 

The best (or at least conceptually simplest) algorithms for online regret minimization come in the form of multiplicative weight updates: maintain vector of ``weights" on arms (corresponding to the next arm the player will pull); 
upon learning the outcome of a pull, update this vector, penalizing the weight more if the loss was high. 
 
There are two interesting variants of the bandit problem, which we would like to mention before exploring the connections between bandits and online algorithms further. The first one has a direct connection to game theory equilibria, while the second will be important for our reductions from algorithms to VCG-based mechanisms. 

\paragraph{Swap regret minimization.} It is clear that the regret notion in \eqref{eq:reg1} is just one of many possible regret notions, and that it can be strengthened by considering a richer class of strategies with which the player must compete. For example, one may not merely consider strategies taking a single action for all $T$ periods of time, but ``two-action" strategies that are allowed to e.g. take one action $j_1$ during steps $1..T_1$ and then a different action $j_2$ during steps $T_1+1..T$. In many cases, such enhanced classes can be just thought of as enlarging the set $\cA$: in the example with two intervals, one can just think of regret with respect to all ``two-action" strategies as competing with 
all strategies from $\cA'=[T]\times\cA\times\cA$. 

Other notions of regret are internal --- in the sense that they depend on actions taken by the player. One such notion is {\em swap regret} (also known as {\em internal regret}). It is important due to its connections to correlated equilibria in game theory, as we will see later. In weak regret, the player contemplates having played the same action $j$ at each round, and compares resulting loss with her realized loss. In swap regret, the player contemplates replacing each of her actions $i\in \cA$ with a different action $\si(i)$, where $\si:\cA\ra\cA$ is an arbitrary function. Swap regret is thus defined as:
\begin{equation}
	\label{eq:reg2}
	R_{swap}({\mathbf i}):= \sum_{t=1}^T \ell_{i_t, t} - \min_{\si:\cA\ra \cA}\left(
	\sum_{t=1}^T \ell_{\si(i_t), t}\right) = \sum_{i\in \cA}\left( \sum_{t:i_t=i}\ell_{i,t} - \min_{j\in\cA} \sum_{t:i_t=i}\ell_{j,t}\right).
\end{equation}
Note that swap regret is larger than weak regret, since weak regret is captured 
by constant functions of the form $\si(i)\equiv j$. 
Even though it does not follow from general low-regret theorems that it is possible to attain low swap regret, there exists a black-box reduction from regret minimization to swap regret minimization \cite{blum2007external}. A regret bound of $O(K\sqrt{T \log K})$ can be attained \cite{stoltz2005incomplete}. Note that regret again becomes $o(T)$ for $T$ polynomially large in $K$. 

\paragraph{Bandits with knapsacks.} So far we have been dealing with scenarios where (for a sufficiently large $T$) regret per-step vanishes: total regret is $o(T)$ (and also $o(OPT)$ --- even though we have not considered this dependence explicitly). Unfortunately, in some cases it is impossible to make decisions online in a way that would lead to vanishing regrets. Specifically, when decisions between rounds are linked. 

One such generic setup is the {\em bandits with knapsacks} (BwK) model \cite{badanidiyuru2013bandits}. It represents a setting where the player is given limited amounts of some resources, which are consumed by arms. For our purposes it will suffice to consider the scenario with one resource being consumed. Let us say that the player has a budget $B$. Since the game has to stop once the budget is exhausted, it is more natural to think of the arms as providing rewards $r_{i,t}$ instead of losses. At each step, after pulling an arm $i_t$, the player learns the attained reward $r_{i_t,t}$, and the incurred cost $c_{i_t,t}$. The game stops (and no further rewards are obtained) when either $t=T$ or $\sum_{\tau=1}^t c_{i_\tau,\tau}\ge B$ --- that is, the budget has been exhausted. 

It is not hard to see why we can't hope to get an $o(OPT)$ regret bound against the best fixed strategy in the bandits with knapsacks setting. Assuming that the budget (and not time) is the main constraint, we would like to maximize the reward-to-cost ratio $r_{i_t,t}/c_{i_t,t}$ over time, weighed by $c_{i_t,t}$. This is not a decision that can be made in an online fashion. Suppose that for $t\in [1..T/2]$ arm $1$ gives the player $1$ unit of reward per unit of cost; then for $t\in [T/2+1..T]$ arm $2$ gives player $R$ units of reward per unit of cost, where $R=0$ with probability $1/2$ and $R=2$ with probability $1/2$. No matter which mixed strategy player uses, her regret will be at least $OPT/3$ at least half of the time. 

A related setting is the {\em online knapsack problem} where the costs are seen by the player before the arm is chosen, and where the player competes with the best {\em sequence} in hindsight \cite{devanur2019near,devanur2009adwords,agrawal2014fast}. Our setting is a hybrid of the online knapsack problem and the bandits with knapsacks problem: we only learn the cost after pulling an arm {\em and} are hoping to compete with the best sequence in hindsight. Throughout the paper, we will refer to our setting as the BwK setting, since the setup is the BwK setup, with the only difference being the more ambitious regret goal. While this goal appears hopeless in general, it is plausibly attainable in our application. 

An important feature making our setting easier is that not knowing the target ratio between reward and cost is the only obstacle to attaining low-regret online algorithms for bandits with knapsacks. Moreover, while it may not be possible to prevent having regret, in hindsight the player is able to tell whether $o(OPT)$ regret has been attained or not. This means that bandits with knapsacks have the potential to be used heuristically, where a good solution may not be guaranteed, but is self-certified once attained. 

\paragraph{Summary and connections to learning and to games.} 
Although the study of multi-arm bandit problems is still subject to very active research, it is fair to say that the problem is very well understood. Absent a simple information-theoretic obstacle (such as having to pull each arm at least once, or not having knowledge of future reward-to-cost ratios) there are algorithms attaining optimal or close-to-optimal regret bounds. In addition, these algorithms are efficient {\em in the number of arms $K$}. 

Most machine learning tasks can be cast as bandit problems, where ``loss" is the gap between predicted label and ground truth. In fact, much progress on bandit problems originated from the machine learning theory literature. This link is not without limitations, however. When trying to learn a predictive model, the space of arms is typically exponential in the number of parameters. Thus algorithms with running time polynomial in the number of arms cannot be used without modifications. In addition, in many cases, such as neural-net models, practically attainable regret values are significantly better than the ones guaranteed by generalization bounds. 

The typical way in which machine learning algorithms convert a search problem with exponentially many model candidates into a tractable one is by turning it into an optimization problem. Typical bandit (and expert) algorithms have to maintain a vector of dimension $K$ tracking the performance of each arm over time. When $K$ is exponential, one cannot hope to do that, and has to settle for maintaining the ``best arm so far". Here, ``best" is a combination of retrospective loss and simplicity (to avoid overfitting), and defining what ``best" means is an important art within machine learning practice. We will dive deeper into these questions in the next section.

\subsection{Online algorithms and learning based on empirical regret minimization}
\label{sec:12}

In this section we will explore the connection between online algorithms and optimization. We frame the discussion in machine learning terms, but the same applies to any online algorithmic task. 

Suppose we are trying to learn a model $A\in \cA$ mapping inputs $X_t$ to 
labels $Y_t$.  A na\"ive ``follow the leader" approach would be to always propose
the best strategy in hindsight (known as ``follow the leader"):
\begin{equation}
	\label{eq:1}
	A_t:=\arg\min_{A\in \cA} \sum_{i=1}^{t-1} L(X_i, A(X_i),Y_i). 
\end{equation}
It turns out that this approach underperforms both in theory and in practice. 
Most importantly, (both in theory and in practice) is that using \eqref{eq:1} will lead to {\em overfitting}: the model $A_t$ will perform better than expected on the training samples $\{X_1,\ldots,X_{t-1}\}$, and worse than expected on the test sample $X_t$. A second problem is a computational one: assuming the model $A$ is non-convex in its parameters, the function in the RHS of \eqref{eq:1} is non-convex in the parameters of $A$. This makes $A_t$ both computationally difficult to find, and potentially unstable in the inputs. 

A generic solution to the overfitting problem is using a {\em regularizer}\footnote{Another solution --- in line with expert algorithms discussed in the previous section --- would be to maintain a number of models weighed according to their past performance, and to aggregate them together. This is generally computationally too cumbersome --- instead of using $d$ models of size $S$ one can train a single model of size $d\cdot S$, which will perform better. The regularizer in some sense serves as a proxy for maintaining many ``good" models, and combining them to make the prediction. } function $\Psi(A)$, and choosing the strategy (known as ``follow the regularized 
leader"):
\begin{equation}
	\label{eq:2}
	A_t:=\arg\min_{A\in \cA}\left(\Psi(A)+ \sum_{i=1}^{t-1} L(X_i, A(X_i),Y_i)\right). 
\end{equation}

Using a regularizer leads to better performance guarantees. Intuitively, $\Psi$ induces a metric on which models $A$ are more likely to occur --- typically ``simpler" models according to some notion of simplicity\footnote{If one does Bayesian maximum likelihood estimation then $\Psi(A)$ literally comes from 
a prior distribution of models.} --- and prevents over-fitting by penalizing large deviations to accommodate a small number of examples. In terms of optimization, if $\Psi(A)$ is a nice (e.g. strongly convex) function, then one can hope that, at least locally around the previous optimum $A_{t-1}$ the function
$$
F_t(A):=\Psi(A)+ \sum_{i=1}^{t-1} L(X_i, A(X_i),Y_i) = 
F_{t-1}(A)+ L(X_{t-1}, A(X_{t-1}),Y_{t-1})
$$ 
will appear convex. Then local gradient descent will allow us to get from $A_{t-1}$ to $A_t$. Note that the gradient of $F_t$ at $A_{t-1}$ is given by 
\begin{equation}
	\label{eq:3}
\nabla F_t(A_{t-1}) = \nabla F_{t-1}(A_{t-1})+\nabla  L(X_{t-1}, A_{t-1}(X_{t-1}),Y_{t-1}) = \nabla  L(X_{t-1}, A_{t-1}(X_{t-1}),Y_{t-1}), 
\end{equation}
since $A_{t-1}$ minimizes $ F_{t-1}(A)$, and thus its gradient is zero. Therefore, after seeing the pair $(X_{t-1},Y_{t-1})$, $A_{t-1}$ will need to
move only in the direction reducing the loss of $A_{t-1}$ on $X_{t-1}$ --- leading to significant computational savings. This step is known as back-propagation in the neural networks training literature. 

There are additional improvements that can be made to \eqref{eq:3} to speed up convergence and improve generalization. The local geometry around $A_{t-1}$ may be transformed --- to make the function $F_t$ more isotropic (using adaptive learning rates) --- so that gradient descent converges faster. In addition, in the context of model training, where labeled data is usually scarce, multiple passes over the samples are used (so $t$ pairs $(X_i,Y_i)$ may be sampled $T\gg t$ times with repetition). Given the practical importance (and the significant investment) in developing and training machine learning models, there is a significant body of applied and theoretical knowledge about each of the steps described here. It is not our goal to survey this knowledge. Rather, let us summarize some higher-level points, which will guide us in suggesting meta-mechanisms based on online optimization algorithms of the form \eqref{eq:2}. 

\paragraph{Upper bounds: proof-to-applications pipeline.} In the context of optimization (linked to online algorithms or otherwise), there is a stark divide between the convex and the non-convex case. Generally speaking, convex optimization (where the loss function and the regularizer are convex functions and the domain of possible models $\cA$ is an efficiently-specified convex set) is tractable. Oversimplifying decades of research, versions of gradient descent can be used to solve these problems efficiently. Many purely ``combinatorial" problems such as maximum bipartite matching are in fact tractable because they are instances of linear programming (an important special case of convex programming). 

In the context of learning and online algorithms, a promising approach has been to use the convex case to derive rigorous performance bounds, and then port them as heuristics into the more general non-convex model classes that one wants to train in practice. Thus the pipeline is (1) prove rigorous performance results about the convex case (e.g. in terms of accumulated loss, running time etc.); (2) use the same algorithms (or their natural extensions) in the more general setting. While (1) is a mathematically robust exercise, (2) is a matter of accumulated wisdom about what kind of things are likely to port into the non-convex domain. A necessary (but not sufficient) condition for (2) to work is 
that the optimization procedures suggested in (1) tend to be local and continuous: local updates based on local quantities such as gradients and Hessians. 

\paragraph{Lower bounds are overly pessimistic.} In many cases, a procedure suggested by (2) will at least lead to some kind of a local optimum --- one cannot hope to do much better provably, because global optimization of non-convex functions is almost always \NP-hard --- a fact that does not appear to be the main bottleneck in achieving performance (as will be discussed later, approximation and generalization errors seem to play a more dominant role). 

More importantly, in many cases, the resulting models significantly outperform generalization bounds. While the exact theoretical mechanisms explaining this are subject of active research, this is a phenomenon that mechanisms built on top of such algorithms should be prepared to take into account. 

\paragraph{Approximation and generalization error: bound to be an art.} 
In general, an online algorithm (or equivalently a learning algorithm) that 
takes actions based on a model it trains suffers from three sources of loss: (A) {\em approximation error:} how close is the best model $A$ in the class $\cA$ is to the truly best model? (B) {\em optimization error:} in the language of \eqref{eq:2}, how close is $\tilde{A}_t$ obtained by the algorithm to minimizing the expression in \eqref{eq:2}? (C) {\em generalization error:} how close is the optimizer of \eqref{eq:2} to producing the smallest possible regret?  

Except when $\cA$ is the set of all possible functions, designing the class $\cA$ to attain a small approximation error is an important application-specific task, an in many cases it is more of an art than an exact science. This is especially true since the choice of $\cA$ affects other sources of loss. 

Classical learning theory, such as PAC-learning and multi-arm bandit theory, allow one to give rigorous bounds on the generalization error based on optimization error. Roughly speaking, if a model performs well on a randomly selected training set, and it does not have enough parameters to be able to overfit to the training data, 
then its performance on $X_{t}$ must be in line with its performance on $X_{1..t-1}$. Observed performance in training neural nets often significantly exceeds these guarantees --- generalization error is typically estimated empirically by examining the model's performance on a holdout set. Being in a regime where a low generalization error is an empirical fact and not mathematically guaranteed means that the optimization heuristic may affect generalization performance --- an optimization heuristic attaining the lowest optimization error may underperform a heuristic attaining a higher optimization error, but a lower generalization error. 

\paragraph{Summary.} The main upshot of the discussion so far is that some of the more important modern algorithms are a result of domain-specific experience and are not easily replaced with a functionally equivalent algorithm based on the problem the algorithm is trying to solve. This is in contrast to classical discrete algorithms for problems such as maximum matching or network flow, where all correct algorithms will output the same (correct) answer. 

Our goal is to develop new reductions from algorithms to mechanisms, and 
in this context this means that {\em the reduction should happen at the level of the algorithm and not at the level of the problem that the algorithm is trying to solve}.  In other words, given a heuristic $\cH$ for a problem $\cP$, the mechanism should assume that it won't be able to attain a comparable performance on $\cP$ without using $\cH$ (or a version of it) as a sub-routine.  This is a departure from most existing algorithmic mechanism design literature, which we feel is necessary in order to keep up with advances in applied algorithms. 

A very important special case is when $\cH$ is just a local optimization (such as stochastic gradient descent) aimed at minimizing an expression of the form \eqref{eq:2}. In this case, the algorithmically difficult part is devising the class of models $\cA$ and the regularizing function $\Psi$, and we would like to develop generic mechanisms that make use of these while having good game-theoretic properties. 

\subsection{Games, online algorithms, and equilibria}

So far, we have seen that even ``one shot" optimization, such as finding 
the best classification model given labeled data can be naturally cast as an 
online algorithms problem. Next we will see that the same is true about 
game theory, making online optimization a natural language to connect the two. 

When an algorithm is turned into a mechanism (by allowing inputs to come from self-interested participants), it induces a strategic game among the participants (in which the mechanism --- or the ``principal" is sometimes a party as well). The basic question facing the mechanism designer is ``what outcome will this game lead to?". One notion of a plausible strategic outcome is that of an {\em equilibrium}: a steady state in which no player benefits from deviating from their current strategy. Of particular importance for mechanism design are mechanisms  that ask participants for their inputs (``direct revelation mechanisms"), and where in the induced game reporting inputs truthfully is an equilibrium\footnote{The reason for focusing on
direct-revelation mechanisms is something called the ``revelation principle". The revelation principle asserts that any mechanism can be converted into a truthful direct revelation mechanism by appointing a perfectly informed advocate for each player as part of the mechanism. A player then reveals her type (truthfully) to the advocate, who uses this knowledge to interact with the mechanism in a way that maximizes player's happiness. The extent to which this reduction is realistic or practical is a very important question whose answer depends on the setting. Regardless, it is clear that truthful direct revelation mechanisms are the most natural extension of algorithms to which one should aspire. }. 

Unlike being ``optimal" in the combinatorial sense (a solution's objective value is close to the best objective value attainable), a ``good" or ``optimal" equilibrium is very much a function of the mechanism implementation details. This leads to significant complications, both in theory and in practice, since whether a good equilibrium can be sustained is a function of participants' behaviors in practice (or modeling assumptions in theory). A significant portion of games, both in theory and in practice, have multiple equilibria, and it is often impossible to rule out ``bad" equilibria. Moreover, in some cases there are lower bounds known as ``the price of anarchy" showing that {\em all} equilibria attain a substantially lower objective function value than the combinatorially optimal outcome. 

Despite these challenges, equilibria become significantly nicer objects to deal with once they are presented in the language of {\em online regret minimization} (corresponding to correlated equilibria). In addition, at least when money can be used arbitrarily, the {\em Vickrey–Clarke–Groves (VCG)} mechanism attains combinatorially optimal performance, while incentivizing participants to report their types truthfully. As we will see, VCG is not a ``cure-all" mechanism since suffers from several important shortcomings that make it more appealing in theory than in practice. One of our goals is to mitigate some of the shortcomings while preserving its desirable properties in a generic way.

\paragraph{Nash equilibria.}  To keep the exposition simple, consider a basic two-player strategic game, where row player Row and column player Col each pick actions $i$ and $j$, respectively,  from a set of $n$ available actions. On actions $(i,j)$ the payoff or Row is given by the matrix $R_{ij}$, and the payoff of Col is given by $C_{ij}$. A Nash equilibrium $(p,q)$ is a distribution of actions by the two players such that no player benefits from deviating. For Row, it means that no action in the support of $p$ is strictly dominated by another action Row may take. Let $U_i:=\sum_j q_{j} R_{ij}$ be Row's expected payoff 
under action $i$. Then the equilibrium condition can be written as:
\begin{equation}
	\label{eq:equ1}
	\text{whenever }p_i>0 \Rightarrow \text{for any alternative action }i'~~U_{i'}\le U_i
\end{equation}
It should be noted that while Nash equilibria are perhaps the best-known notions of equilibria, they are arguably not the best-suited in the context of algorithmic mechanism design. To argue that action distributions $(p,q)$ are a plausible answer to the question ``What will Row and Col do?", one needs to assume that e.g. Row is perfectly informed about the distribution $q$ so that the non-zero-probability actions $i$ with $p_i>0$ make sense under \eqref{eq:equ1}. If Row is misinformed about $q$, the equilibrium may fail to materialize. This is especially true in a game induced by a mechanism with many participants. A more robust notion of an equilibrium would be based on participants making decisions regardless of their beliefs about others' actions. 

\paragraph{Dominant strategy equilibria.} In a dominant strategy equilibrium,
no player takes an action that is dominated by another action for some realization by the other players. In the two-player example, Row would not take an action that is dominated by another action {\em for some action $j$ of Col}:
 \begin{equation}
 	\label{eq:equ2}
 	\text{whenever }p_i>0 \Rightarrow \text{for any alternative action $i'$ and for any $j$}~~R_{i'j}\le R_{ij}
 \end{equation}
Condition \eqref{eq:equ2} is clearly much stronger than \eqref{eq:equ1}. In particular, whenever it holds, it is easier to believe that the outcome $(p,q)$ will be realized. This is especially true when $p$ and $q$ are just single actions, i.e. $p_i=1$, $q_j=1$ for some $(i,j)$. 

Unfortunately, it is easy to see that dominant strategy equilibria do not always exist --- for example there is no ``best" strategy in the Rock-Paper-Scissors game (or any zero-sum game for that matter). On the other hand, Nash's celebrated theorem guarantees the existence of a Nash equilibrium in any game.
In mechanism design we typically have some degree of control over the game, and can aspire for an equilibrium where truthful reporting of one's type is, in fact, a dominant strategy. This is often impossible to attain, a slightly less ambitious goal is for truthful reporting to be an {\em approximately} dominant strategy.   

\paragraph{Approximate equilibria.} For any notion of an equilibrium such as above, there is an associated notion of an {\em approximate} or an $\ve$-equilibrium. The approximation here refers to the benefit a player can derive by deviating from her current action profile. Assuming a player's utilities for outcomes are in a bounded interval $[0,C]$, a set of actions is an $\ve$-equilibrium if no player can improve her utility by more than $\ve \cdot C$ by deviating. Thus, assuming $R_{ij}\in[0,1]$, \eqref{eq:equ1} becomes the condition 
\begin{equation}
	\label{eq:equ3}
	\text{whenever }p_i>0 \Rightarrow \text{for any alternative action }i'~~U_{i'}\le U_i+\ve
\end{equation}
for being  an $\ve$-Nash equilibrium, and \eqref{eq:equ2} becomes the condition 
 \begin{equation}
	\label{eq:equ4}
	\text{whenever }p_i>0 \Rightarrow \text{for any alternative action $i'$ and for any $j$}~~R_{i'j}\le R_{ij}+\ve
\end{equation}
for being $\ve$-dominant strategy equilibrium. 

We note that while $\ve$-approximate equilibria are easier to find and attain\footnote{For example, while finding a Nash equilibrium, even for a two-player game, is {\sf PPAD}-complete \cite{daskalakis2009complexity,chen2006settling}, and finding a good-value Nash equilibrium is {\sf NP}-complete \cite{conitzer2008new}, both problems can be solved in quasi-polynomial time $n^{O(\log n)}$ in the $\ve$-approximate setting \cite{lipton2003playing}. }, a $\ve$-dominant-strategy equilibrium may still not exist. An important advantage of $\ve$-approximate equilibria is that they can be tied into online learning and regret bounds. 

\paragraph{Learning, regret minimization, and correlated equilibria.} Suppose a strategic game or mechanism were presented to someone with no prior knowledge of game theory, with the question of `what will happen?'. A reasonable approach would be to run simulations with participants trying to ``learn to play" the game to the best of their ability, and to see what happens. 

As defined, the game is played only once, and ``learning" as such doesn't make sense. A reasonable solution is to let the players play the game $T\gg 1$ times in a row, and observe the distribution to which their actions converge. If a player ignores the effect her play may have one future plays by other players\footnote{This is a very important simplifying assumption --- repeated games where stages are linked are often much more complicated than the base game.}, then the problem she is facing is exactly the online learning problem we discussed earlier. 

Considering the two-player setting for simplicity, and taking Row's viewpoint, at each round $t=1..T$, if Col plays $j_t$, Row faces payoff $R_{i j_t}$ for action $i$ at step $t$. Row will be solving the online learning problem with the goal of maximizing payoff $\sum_{t=1}^T R_{i_t j_t}$. Col will be solving a similar problem. At a minimum, Row and Col should be running a low-regret learning algorithm, although we should note that any family of online learning algorithms would lead to an outcome with potentially interesting properties. 

Running two online learning algorithms will unfortunately not lead to a Nash equilibrium. Still, the outcome of such a process has an important interpretation: it leads to something called a {\em coarse correlated equilibrium}. Moreover, if the players run a low swap regret online learning algorithm (of the kind discussed in Section~\ref{sec:11}), then the resulting outcome is an $\ve$-correlated equilibrium. Here $\ve\ra 0$ as $T$ grows. When $T\ra\infty$ we obtain a {\em correlated equilibrium}. 

A correlated equilibrium is an equilibrium where a suggested action is presented to each player. The players are free to not follow the suggested action, and instead choose a different action (which may depend on the suggested action). The suggested actions form a correlated equilibrium if no player gains by deviating from the proposed actions. Formally, in the two-player case, the correlated equilibrium is a distribution $\mu$ on pairs of strategies $(i,j)$, such that no player benefits from not following the suggested play. For Row, this condition becomes:
\begin{equation}
	\text{for all $i$ and potential substitutes $i'$, }\sum_{j} \left(\mu(i,j) \cdot R_{ij} \right)\ge 
	\sum_j\left( \mu(i,j) \cdot R_{i'j}\right)
\end{equation} 
A canonical example of a correlated equilibrium is one induced by a traffic light, where it is a dominant strategy for each driver to stop on red, expecting crossing traffic to not stop on green.

\paragraph{Competitive equilibria and markets.} More pertinently for mechanism design, a market that sets prices is also a form of a correlated equilibrium. Market-based solution concepts for reallocation of goods, such as Fisher and Arrow-Debreu markets, are based on a concept of a {\em competitive equilibrium}, which is a natural type of a correlated equilibrium. 

For example, in the case of a Fisher market, each player is given an endowment, and wishes to spend it on a bundle of (divisible) goods. A solution to the Fisher market problem produces a vector of prices $\vec{p}$ for the goods. Each player then spends her endowment to buy her favorite bundle at the given prices. The prices ``clear" the market if all players spend their budget, and all goods are sold. Given the prices $\vec{p}$, each player gets her favorite bundle at these prices, and therefore truthful reporting of valuations over goods is a dominant strategy, and we obtain a correlated equilibrium where truthful reporting is a dominant strategy.

Note that this sidesteps the question of {\em how} prices $\vec{p}$ are obtained, and indeed, solutions to Fisher markets do not yield a truthful mechanism if one considers the effect players' reports have on prices. This parallels a broader points about market-based mechanisms: if one treats market prices as fixed, then interaction with the market is typically truthful. However, when a player considers her impact on market prices, most market mechanisms are not truthful (for example, one can try to feign reduced interest in an item to get its price to drop). This concern is very real (and leads to reduced overall welfare) when there are few players in the market. When the market is large --- and the impact of each individual player on the market is small, one can hope that the resulting mechanism will be approximately truthful even when one accounts for a player's impact on prices: the impact on the prices (and thus the potential benefit) of misreporting preferences is small, and the effect of misreporting on the bundle one gets is always non-positive. 

The upshot of the discussion above is that in market-based solution concepts, a competitive equilibrium is a type of a correlated equilibrium where reporting preferences truthfully is a dominant strategy --- as long as we manage to sidestep the issues of how prices are arrived yet. Taking a cue from the Nash equilibria $\rightarrow$ correlated equilibria simplification, a natural source of these prices is through repeated play. Note that this is how prices in `real' large markets are discovered: participants repeatedly interact with the market, with supply and demand serving as signals that update market prices. This process has similarities to t\^atonnement in equilibria theory, except we will consider the time-average of the outcomes at all steps, and not just a ``limit" outcome --- giving our model more flexibility. 

\paragraph{Prices and outcomes through repeated play.} The discussion above gives us a blueprint for producing an equilibrium outcome using individual preferences and an aggregation heuristic. We will formalize parts of it as the \apex algorithm in Section~\ref{sec:apex} below. 	
	 It consists of the following components:
\begin{enumerate}
	\item A repeated game where at step $t$ each player $i$ submits its preferences function $f^i_t$ to a central principal;
	\item at each step $t$ the principal runs a heuristic to produce an outcome $o_t$; 
	\item at each step $t$ the principal calculates prices to be charged to participants (since we're dealing with a mechanism without money, prices are charged in tokens); 
	\item 
	participants are not allowed to exceed their token budget;  
	\item in each step $t$ the principal also outputs prices which allow each player $i$ to estimate the price and outcome $o_t^{i}(\tilde{f}^i_t)$ 
	under reported preferences $\tilde{f}^i_t$ instead of $f^i_t$;
	\item 
	the outcome of the mechanism is the time-average $\bar{o}:=\frac{1}{T}\sum_{t=1}^T o_t$; 
	\item 
	each player runs an online learning strategy with the goal of maximizing its utility; here we make a distinction between an ``equilibrium" and a ``competitive equilibrium" notion of maximizing utility\footnote{In the competitive equilibrium approach, each player ignores her effect on future plays by other participants of the mechanism, treating the other players and a principal as ``nature" in the sense of online algorithms.}.
\end{enumerate}

Players in the mechanism above will run a bandits with knapsacks online algorithm. If incentives are correctly aligned, in each step, players will report their types truthfully up to a constant scaling factor\footnote{Such factors are unavoidable in mechanisms without money: a player with utility function $u_i$ should be treated the same a player with utility function $2\cdot u_i$, since there is no functional means of distinguishing the two.}. Thus if the normalized preferences function of player $i$ at time $t$ is $\hat{f}_t^i$, the player will report $f_t^i = \la_t^i\cdot \hat{f}_t^i$, where $\la_t^i\ge 0$ are chosen to that the reward per marginal token spent is equalized across rounds. 

There are important details to be filled in the above blueprint, primarily 
around the principal's heuristic and the prices it would induce. 
In the spirit of the discussion about algorithms and optimization, we will not want to limit the scope of possible heuristics, except we would expect the outcome $o_t$ to try and maximize $\sum_i f^i_t (o_t)$, possibly with a regularization term. In practice this might mean either computing $o_t$ from scratch or computing it from $o_{t-1}$ via some kind of gradient update. 

The previous part of the description is necessarily vague, since it needs to accommodate various types of optimization algorithms and heuristics. Assuming the algorithm for converting the $f^i_t$'s into an outcome $o_t$ is a good one, we still need to take care of incentivizing players to report their preferences $f^i_t$ truthfully. 

There is a generic tool in mechanism design, called the 
Vickrey–Clarke–Groves (VCG) mechanism under which truthful reporting is a dominant strategy. While theoretically the VCG mechanism is very appealing, it has some significant practical drawbacks that stand in the way of it being adopted. We will argue that in our case most of these drawbacks either don't occur\footnote{For example, because we are considering mechanisms without money, and thus do not need to be ``budget-neutral".}, or would occur to the same extent under other mechanisms. 

The final  piece of our mechanism will be using a local version of VCG to set prices accruing to the players. 
Before putting all the pieces together more formally, let us briefly discuss the VCG mechanism, and some challenges in using it in practice.

%

\subsection{Mechanism design: the VCG mechanism and its shortcomings}
\label{sec:14}

The VCG mechanism is  a mechanism with money --- meaning that there is a way for participants to store residual utility after the mechanism completes. In this paper we are dealing with mechanisms without money, but since the mechanism is multi-round, tokens serve the role of money, allowing us to use VCG locally. 

\paragraph{The mechanism.} Suppose there is an outcome space $\cO$, and $n$ players. Player $i$ has utility $u_i(o)\ge 0$ for an outcome $o\in\cO$. Since we are dealing with a mechanism with money, we can assume that $u_i$ is in currency units. The mechanism will choose an outcome maximizing {\em total utility}:
 \begin{equation}
 	\label{eq:VCG1}
 	o_{VCG}:= \arg\max_{o\in\cO}\sum_{i\in[n]} u_i(o). 
 \end{equation}
Each player is then charged $p_i$ --- the calculated {\em externality} she imposes on other players:
 \begin{equation}
	\label{eq:VCG2}
	p_i(o_{VCG}):= \max_{o\in\cO}\sum_{j\in[n];~j\neq i} u_j(o)-
	\sum_{j\in[n];~j\neq i} u_j(o_{VCG}). 
\end{equation}
In other words, $p_i$ is the extra utility other players could have attained if they didn't need to take $i$'s preferences into account. Note that the realized utility for player $i$ from outcome $o'$ is given by
$$
u_i(o') - p_i(o') = u_i(o')+
\sum_{j\in[n];~j\neq i} u_j(o') - \max_{o\in\cO}\sum_{j\in[n];~j\neq i} u_j(o) =
\sum_{j\in[n]} u_j(o') -  \max_{o\in\cO}\sum_{j\in[n];~j\neq i} u_j(o).
$$
The second term does not depend on $u_i$, and the first term is maximized when $o'=o_{VCG}$, which is obtained when player $i$ reports her type truthfully. Therefore VCG is dominant-strategy truthful. An example of a problem VCG ``solves" in principle is that of combinatorial auctions: optimally selling goods to players who may have arbitrary preferences over bundles of goods. It should be noted that in the case where the goal of the mechanism is to sell a single item (that is, $\cO=[n]$ determines which player gets the item, and $u_i(o):= \mathbf{1}_{o=i}\cdot u_i$), VCG turns into the second-price auction. 

There is a number of important practical reasons why VCG in its pure form has remained primarily a theoretical tool. An analysis of some of the more important issues can be found in \cite{ausubel2006lovely,rothkopf2007thirteen}.  We will have to keep these issues in mind when we integrate a version of the VCG mechanism into our reduction. For our purposes, the issues can be broken down into several categories. We present these in order of relevance, from the least to the most relevant.

\paragraph{Revenue sub-optimality; budget non-neutrality.} One of the main reasons VCG is not used in actual auctions is that while it maximizes participants' welfare, it does not generally maximize the principal's revenue. In fact, in some cases, such as unit-demand auctions, it is the mechanism {\em minimizing} the amount of revenue raised. In part, the fact that VCG is maximally efficient already suggests that it won't be revenue-maximizing: in practice, the way to fetch a higher price for a good is to be willing to set a reservation price, and not sell it for a lower price even if withholding it is inefficient. A related problem is that VCG is not ``budget neutral" --- even when the goal of the mechanism is to facilitate a transaction between two players, the mechanism might prescribe payments to/from players that do not add up to $0$, requiring an outside subsidy for the mechanism to run. 

Neither of these are a problem for us, since we will be using VCG in the context 
of a mechanism without money, with players endowed with tokens. The objective is to maximize aggregate utility subject to some notion of fairness (such as players starting with an equal token endowment). 

\paragraph{Computational difficulty of bidding; optimization and numerical instability of payments and utility.} The second set of problems for using 
VCG in practice is computational. For an individual player, figuring out the function $u_i$ and communicating it to the mechanism may be prohibitively expensive. In addition, solving the optimization problem \eqref{eq:VCG1} precisely is often {\sf NP}-hard. 

More importantly, even if \eqref{eq:VCG1} can be solved heuristically to a high degree of precision, the effect of the approximation error on price calculations in \eqref{eq:VCG2} may be prohibitive, since it involves a difference between two approximate quantities. To illustrate, if there are $n=1{,}000$ participants in an auction involving $\$5{,}000{,}000$ worth of goods, then getting the optimal allocation of goods to within $1\%$ (or $\$50{,}000$) is acceptable, but calculating prices charged to individual players to within an additive $\$50{,}000$ (where the average purchase is only worth $\$5{,}000$) is unacceptable.

In our setting, even when the space $\cO$ is quite complicated, we will only need gradient access to the function $u_i$, so the bidding complexity will not be a problem. The optimization gap is a real concern. Depending on the setting, the optimization problem may be solvable exactly, in which case it is not an issue. If the optimization is being done by a heuristic, we will rely on the fact that at each step of the optimization $o_t$ is only adjusted locally, and even very complicated functions can be simplified locally (e.g. by taking a quadratic approximation), allowing us to compute local prices with an acceptable precision. One tool at our disposal is regularization, which at every step will turn the optimization problem into a locally convex one.   

\paragraph{Issues with chaining several VCGs one after another.} A significant issue for the truthfulness of the VCG mechanism, happens if multiple instances of the mechanism are chained one after another with players given a fixed total budget for all rounds.  It might be beneficial for a player to withhold bids in one round, and use tokens saved to bid in later rounds. Generally speaking, in situations where utilities are concave, and player $i$ is bidding $\la_i \hat{f}_i$, increasing $\la_i$ will result in a lower marginal utility per token. Therefore, assuming the algorithm converges, we can expect that in a typical round the marginal utility of player $i$ per token is about the same. However, the exact conditions for convergence will likely require further investigation and analysis. 

\paragraph{Susceptibility to various forms of cheating and collusion.} While VCG is immune to manipulation via misreported preferences, it is extremely susceptible to other forms of manipulation. Considering the simple second-price auction scenario, the mechanism is susceptible to shill bidding (a player colludes with the principal to extract more than the second price from the winner) and non-winning players being bribed to drop out (to reduce the price the winner has to pay). More sophisticated scenarios are also susceptible to 
a single player bidding under multiple identities. 

Some of these problems disappear in a mechanism without money. For example, in voting it is clear that a single player can benefit by ``voting under multiple identities", and preventing this from happening falls outside of the voting mechanism. In other cases, collusion between players is inevitable, and cannot be prevented by the mechanism. In case of voting, even an approximately truthful voting mechanism will be susceptible to voters forming a party and then voting as a block in favor of issues they all agree on, while avoiding canceling each other on issues they disagree on. 

In summary, unlike computational issues which we can hope to do away with, some of the issues around collusion are real, and will not disappear. We do not expect that using VCG will exacerbate these issues compared to other mechanisms, but this will need to be investigated further.

\subsection{New results on one-sided allocation}

One area where we have obtained new theoretical results using the \apex framework is one-sided allocation. The results are presented in detail in Section~\ref{sec:HZ}, we summarize them briefly here. We should emphasize that we didn't set out to obtain these results, and that they followed naturally by applying the framework to the one-sided matching setting. 

In the simplest one-sided allocation setting there are $n$ players and $n$ items. Player $i$ has utility $u_{ij}\in[0,1]$ for item $j$. A general solution is a bi-stochastic matrix $\{X_{ij}\}$, where player $i$ gets item $j$ with probability $X_{ij}$. The utility of player $i$ under such allocation is just her expected utility $$u_i(X)=\sum_{j} u_{ij} \cdot X_{ij}.$$ 

A classical result of Hylland and Zeckhauser \cite{hylland1979efficient} says that there is always a competitive equilibrium from equal endowments (CEEI) solution to the one-sided allocation problem. Informally, it means that if we give each player one unit of tokens, there is an allocation $X_{ij}$ and prices $P_j\ge 0$ (in tokens) on items such $X$ is a competitive equilibrium supported by prices $P_j$: the bundle $X_i$ costs at most one token, and is utility-maximizing for player $i$ among all cost-$\le 1$ bundles of total probability $1$. 

Note that the HZ competitive equilibrium has nice properties such as Pareto-efficiency and envy-freeness. On the other hand, there could be multiple HZ-equilibria (existence proof uses Kakutani's fixed-point theorem). We obtain the following refinement of the HZ-equilibrium existence:

\medskip
\noindent {\bf Theorem~\ref{thm:VCG-HZ}.~~}{\em [restated]} {\em For any $u_{ij}$, there exist scaling factors $\la_i\ge 0$ and an allocation $X$ such that $X$ is a result of running VCG on utilities $\la_i u_{ij}$. The resulting VCG prices $C_j$ support $X$ as a HZ-equilibrium. In the resulting VCG payments supporting $X$, all players either get their favorite item or pay exactly $1$ unit for their bundle.}
\medskip

Thus, there is always a HZ equilibrium supported by VCG prices applied to players' scaled utilities. It turns out that equilibria from Theorem~\ref{thm:VCG-HZ} form a proper subset of all HZ equilibria --- there exist HZ equilibria that are not supported by VCG prices. 

In addition to the existence result, in Theorem~\ref{thm:reg} we show that there is a natural online optimization dynamics on our general mechanism, such that whenever that dynamics converges it leads to a HZ equilibrium of the form guaranteed by Theorem~\ref{thm:VCG-HZ}. This gives a new attack route for both provable and heuristic approaches to calculating HZ equilibria. 

\subsection{Related works}

\paragraph{Note.} This section will be updated as I collect more relevant works across the different domains. 

\medskip 

The main thrust of the paper is to build a new three-way connection between {\em optimization}, {\em online learning}, and {\em mechanism design without money} --- particularly the {\em VCG mechanism}. Each of these topics forms a subject of a major discipline in Applied Mathematics and Economics. Each pair of these topics is also the subject of a significant body of work (at a level where textbooks or whole conferences dedicated to the subject exist). We will very briefly survey those here, before mentioning some more directly relevant works. 

\paragraph{Online learning and optimization.} The connection between online learning and optimization is well-established. In the convex setting, the textbook \cite{hazan2019introduction} provides a recent treatment of the subject. 

\paragraph{Optimization and mechanism design} is the subject of much of modern Algorithmic Game Theory (AGT) \cite{NisaRougTardVazi07,roughgarden2016twenty}. The effort to convert good algorithms (typically optimizing an objective) into a good mechanism is at the core of AGT. Our work is also part of this effort, with the added twist of viewing the optimization component of the algorithm as an iterative process similar to online optimization. 

\paragraph{Online learning and mechanism design} is perhaps the least developed of the three connections. Within classical game theory, it has been known that correlated equilibria correspond to online learning dynamics. More recently, the subject of learning in repeated games has received renewed interest due to its practical importance in areas such as online ad auctions. Recent references include Chapter~11 in \cite{slivkins2019introduction}, as well as articles such as \cite{braverman2019multi,feng2020intrinsic,deng2019strategizing}.

\paragraph{Specific related recent works.} Below we briefly discuss recent papers that are most closely related to the present one. 

In 
\cite{kandasamy2020mechanism} a framework for online learning with incentives is developed in the context of mechanism design {\em with money}. Participants learn their value for the different options as the algorithm progresses. The construction uses a combination of online learning techniques and the VCG mechanism to achieve both low regret and good incentive properties. 

 \cite{immorlica2019equality} develops a framework for joint decision making in a metric space with quadratic utilities. The primary goal of the work is actually to obtain a decision-making algorithm that complies with the normative requirement of ``equalizing influence among participants". The resulting outcome notion is in fact very similar to the notion of a competitive equilibrium from equal budgets in the present paper.

\section{Pseudo-market mechanisms}

With the components in place we are ready to start putting together generic mechanisms based on online optimization and other algorithmic heuristics. 

\subsection{Setup and the generic \apex mechanism}
\label{sec:apex}

\subsubsection{The mechanism}

We begin by stating a very general reduction from optimization heuristics 
to algorithms. We will then instantiate it in ways that seem to be most immediately useful. 

Our starting point is an online learning heuristic $\cH$. The heuristic takes a sequence of objective functions $\{F_s(x)\}_{s=1}^{t}$, and starting point $x_t$. It then generates a function $\Psi_{t+1}=\Psi_{t+1}(F_1,\ldots,F_t,x_t)$, and outputs a value $x_{t+1}$ that maximizes the function $\Psi_{t+1}(x)$. Typically, $\Psi_{t+1}$ will either be a concave function, or contain a regularization term that drops off sharply away from $x_t$, thus making computing $x_{t+1}$ from $x_{t}$ easy.

We will particularly focus on the effect $F_t$ has on $x_{t+1}$. Regularization, along with the fact that $F_t$ is only one of the functions feeding into $\Psi_{t+1}$ means that we may expect the dependence of $x_{t+1}$ on $F_t$ to be smooth even if the overall landscape of $\Psi_{t+1}$ is very complicated. For an alternative objective function $\ti{F_t}$ we can define  
$$\ti{\Psi}_{t+1}:=\Psi_{t+1}(F_1,\ldots,\ti{F}_t,x_t),$$
and let be $\ti{x}_{t+1}$ the outcome of maximizing $\ti{\Psi}_{t+1}(x)$ over $x$. 
For illustration purposes, one property we expect $\cH$ to satisfy is {\em monotonicity}, which can be viewed as a local relaxation of finding an actual maximizer:

\begin{definition}
	\label{def:mon} Heuristic $\cH$ is said to have the {\em monotonicity} property if for all $F_1,\ldots,F_t,\ti{F}_t$, and $x_t$, the following holds:
	\begin{equation}
		\label{eq:mon0}
		F_t(x_{t+1})-F_t(\ti{x}_{t+1}) \ge 
			\ti{F}_t(x_{t+1})-\ti{F}_t(\ti{x}_{t+1}).
	\end{equation}
In other words, moving from $\ti{F}_t$ to $F_t$ results in a shift more beneficial to $F_t$ than to $\ti{F}_t$. 
\end{definition}

\paragraph{Example.} Suppose that in a heuristic $\cH$, $\Psi_{t+1}$ takes the form $\Psi_{t+1}(x)=G(x)+F_t(x)$, where $G(x)$ is a function that depends on previous $F$'s and potentially on a regularizer\footnote{{\em Follow the leader}, and {\em follow the regularized leader} algorithms have this format.}. Then by optimality of $x_{t+1}$ we have 
$\Psi_{t+1}(x_{t+1})\ge \Psi_{t+1}(\ti{x}_{t+1})$, and thus $F_t(x_{t+1})+G(x_{t+1})\ge F_t(\ti{x}_{t+1})+G(\ti{x}_{t+1})$. Similarly,  $\ti{F}_t(x_{t+1})+G(x_{t+1})\le \ti{F}_t(\ti{x}_{t+1})+G(\ti{x}_{t+1})$. Thus, in this case, we get 
$$
	F_t(x_{t+1})-F_t(\ti{x}_{t+1}) \ge G(\ti{x}_{t+1})-G({x}_{t+1})\ge
\ti{F}_t(x_{t+1})-\ti{F}_t(\ti{x}_{t+1}),
$$
and the monotonicity property holds.

\medskip 

The \apex algorithm will use heuristic $\cH$ iteratively to find a solution sequence $X_0,\ldots,X_T$. The algorithm lets participants specify their objective $f_i$ (which remains fixed throughout the execution), and the intensity $\la_{i,t}$ of their preferences (which gets adjusted throughout the execution). The players get charged in tokens. Prices are calculated to be VCG prices. We will see in Lemma~\ref{lem:RegToEq1} that the dominant-strategy truthfulness of VCG implies that a low-regret execution of the \apex algorithm leads to a competitive equilibrium where truthful reporting of $f_i$ is an approximately dominant strategy for Player~$i$.

\begin{algorithm}
	\caption{\apex$(f_0,f_1,\ldots,f_n,\cH)$ algorithm given utilities $f_i:\cX\ra\RR$ by the participants, principal's utility $f_0:\cX\ra\RR$, local optimization heuristic $\cH$}
	 {\bf Main mechanism:}
	\begin{algorithmic}[1] 
		\State Fix starting point $X_0\in\cX$; 
		\For {$t=0..T-1$} 
		\State Collect bids $\la_{i,t}\ge 0$ from Player $i$;
		\State Selects a regularizer $R_t(x)$ that may depend on $t$, $X_0,\ldots,X_t$; 
		\State Set $F_t(x):=f_0(x)+\sum_{j=1..n} \la_{j,t} f_j(x)+R_t(x)$;
		\State Let $X_{t+1}$ be obtained by $\cH$ by maximizing $F_{t}(x)$
		starting at $X_t$;
	    \For{ each Player $i=1..n$}
	    \State Set  $F_t^{-i}(x):=f_0(x)+\sum_{j=1..n, j\neq i} \la_{j,t} f_j(x)+R_t(x)$
		\State Let $X_{t+1}^{-i}$ be obtained by $\cH$ by maximizing $F_{t}^{-i}(x)$ starting at $X_t$;
		\State Charge Player $i$, $C_{i,t}:=F_t^{-i}(X_{t+1}^{-i})-F_t^{-i}(X_{t+1})$
		tokens; 
		\EndFor
		\EndFor
			\end{algorithmic}
		{\bf Suggested algorithm for Player~$i$:}
		\begin{algorithmic}[1] 
			\State Report utility $f_i:\cX\ra\RR$ to the mechanism;
			\State Run a {\em Bandit with Knapsacks} online algorithm with initial budget $B_i$ to determine the $\{\la_{i,t}\}_{t=1}^T$; 
		\end{algorithmic}
	\label{alg:1}
\end{algorithm}

\paragraph{Players' actions.}
For now, we do not specify the algorithm the players will use to solve the {\em Bandits with Knapsacks (BwK)} set up by the main mechanism. BwK is a much more difficult problem than the ``usual" Bandits. Unlike the Bandits setting, a general $o(1)$ regret algorithm does not exist. 

At the same time, it is not difficult to come up with a reasonable heuristic for the BwK problem. The algorithm will try to learn the best marginal ``bang-per-buck'' ratio it can expect, and play accordingly. Fix a round $t$ and the bids of all other players. Then each value of the bid $\la_i$ induces a utility $u_{i,t}(\la_i)$ and a cost $C_{i,t}(\la_i)$. Under very mild local optimality conditions, $u_{i,t}$ and $C_{i,t}$ are monotonically non-decreasing in $\la_i$.

A ratio $r$ can be thought of as the exchange rate Player~$i$ is willing to pay in tokens per additional unit of utility\footnote{The discussion that follows can be easily restated in the language of constrained optimization. Given the objective of maximizing $\sum_t u_{i,t}(\la_i)$ subject to $\sum_t C_{i,t}(\la_i)\le B_i$. We can take a Lagrangian of the budget constraint with coefficient $r$ to get an upper bound on the possible utility. Under strong duality, we can attain the value OPT using this $r$. To keep the presentation more broadly accessible, we do the relevant calculations directly in this section.}.
 Let $\La$ be a distribution of strategy sequences $\{\la_{i,t}'\}$ that is feasible in expectation, that is:
\begin{equation}
	\label{eq:laprime}
\E_{\{\la_{i,t}'\}\sim\La} \left[	\sum_{t=1}^T C_{i,t}(\la_{i,t}')\right] \le B_i, 
\end{equation}
maximizing the payoff $\E_{\{\la_{i,t}'\}\sim\La} \left[	\sum_{t=1}^T u_{i,t}(\la_{i,t}')\right] $. 

If the inequality in \eqref{eq:laprime} is strict, that is, $\La$ does not spend its entire budget, then the budget constraint is irrelevant, and player~$i$ can attain maximal utility by bidding the same high value of $\la_i$ at every round.
Otherwise, the budget $B_i$ is a real constraint on player~$i$'s attainable utility. 
For simplicity, let us assume that $C_{i,t}(\la_i)$ is a continuous function\footnote{Otherwise, the same analysis still works, but we need to replace $\la_i$ with a distribution on a small interval $(\la_i-\delta,\la_i+\delta)$ to make the expected $C_{i,t}$ continuous in $\la_i$. This is essentially what happens in the proof of Theorem~\ref{thm:VCG-HZ} later in the paper.}. Since $C_{i,t}$ are monotonically non-decreasing in $\la_i$, there is a value $\la^i_{max}$  such that 
$$
\sum_{t=1}^T C_{i,t}(\la^i_{max}) = B_i. 
$$
Define 
$$
r_{min}:=\frac{1}{\la^i_{max}}. 
$$

\begin{claim}\label{cl:rmin1}
	Assuming local optimality of heuristic $\cH$, for each $t$ and for each $\la_i'$, we have
	$$u_{i,t}(\la_i')- r_{min} \cdot C_{i,t}(\la_i') \le 
	u_{i,t}(\la^i_{max})- r_{min} \cdot C_{i,t}(\la^i_{max}).$$
\end{claim}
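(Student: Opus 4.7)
The plan is to substitute the VCG price definition into the claim and reduce it to a single local dominant-strategy inequality for $\cH$. Write $X^{(1)} := X_{t+1}(\la^i_{max})$ and $X^{(2)} := X_{t+1}(\la_i')$ for the two outcomes produced by $\cH$ in round $t$ when Player~$i$ bids $\la^i_{max}$ and $\la_i'$ respectively, with everyone else's bids held fixed. Since $X_{t+1}^{-i}$ does not depend on Player~$i$'s bid, the VCG formula $C_{i,t}(\la) = F_t^{-i}(X_{t+1}^{-i}) - F_t^{-i}(X_{t+1}(\la))$ lets me cancel the constant $F_t^{-i}(X_{t+1}^{-i})$ on both sides of the target inequality, which rearranges to $f_i(X^{(2)}) + r_{min} F_t^{-i}(X^{(2)}) \le f_i(X^{(1)}) + r_{min} F_t^{-i}(X^{(1)})$.

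Multiplying through by $\la^i_{max} = 1/r_{min} > 0$ and defining $\widetilde{F}_t := F_t^{-i} + \la^i_{max} f_i$ (the round-$t$ objective that $\cH$ is being asked to maximize when producing $X^{(1)}$), the claim collapses to the single comparison $\widetilde{F}_t(X^{(1)}) \ge \widetilde{F}_t(X^{(2)})$. This is the standard VCG dominant-strategy inequality in local form: $X^{(1)}$ is the output of $\cH$ on $\widetilde{F}_t$, while $X^{(2)}$ is the output of $\cH$ on $F_t^{-i} + \la_i' f_i$, which differs from $\widetilde{F}_t$ only by a scaling of the summand $f_i$. Local optimality of $\cH$ is precisely the guarantee that this comparison comes out the right way --- that the $\widetilde{F}_t$-value of $\cH$'s own output on $\widetilde{F}_t$ is at least as large as the $\widetilde{F}_t$-value of any comparable candidate, including the output $X^{(2)}$ produced by $\cH$ on a perturbed objective.

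I expect the main obstacle to be pinning down exactly which form of ``local optimality'' the hypothesis requires. The monotonicity property of Definition~\ref{def:mon} alone is too weak: applied to the two objectives $\widetilde{F}_t$ and $F_t^{-i} + \la_i' f_i$, it yields only $(\la^i_{max} - \la_i')(f_i(X^{(1)}) - f_i(X^{(2)})) \ge 0$, which implies monotonicity of $u_{i,t}$ in the bid but not the absolute $\widetilde{F}_t$-comparison we need. The hypothesis must therefore additionally certify that $\cH$'s own output on $\widetilde{F}_t$ dominates the ``reachable'' alternative $X^{(2)}$ under $\widetilde{F}_t$ --- automatic when $\cH$ exactly maximizes its input (as in the follow-the-regularized-leader example preceding Definition~\ref{def:mon}), but a substantive additional assumption for generic heuristics.
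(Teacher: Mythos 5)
Your proof is correct and is essentially the paper's own argument read in reverse: you reduce the claim to the single inequality $\bigl(F_t^{-i}+\la^i_{max}f_i\bigr)(X^{(1)})\ge\bigl(F_t^{-i}+\la^i_{max}f_i\bigr)(X^{(2)})$, which is exactly the local-optimality step the paper invokes for $X_{max}$ against $X'$ after cancelling the common $F_t^{-i}(X_{t+1}^{-i})$ term and multiplying by $\la^i_{max}=1/r_{min}$. Your closing remark is also on point --- the paper likewise relies on this dominance of $\cH$'s own output over the perturbed-objective output (rather than on Definition~\ref{def:mon}), attributing it to ``$X_{max}$ being a local optimizer.''
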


\begin{proof}
	Let $X_{max}$ be the solution $\cH$ gives when we optimize $F_t^{-i}(x)+\la^i_{max}\cdot  f_i (x)$, and $X'$ be the solution $\cH$ gives when we optimize $F_t^{-i}(x)+\la_i'\cdot  f_i (x)$, and $X'$.
	We have 
	$$
	C_{i,t}(\la_i')-C_{i,t}(\la^i_{max}) = -F_t^{-i}(X') + F_t^{-i}(X_{max}), 
	$$
	since the $F_t^{-i}(X_{t+1}^{-i})$ term cancels out. 
	Therefore, 
	\begin{align*}
		u_{i,t}(\la^i_{max})- r_{min} \cdot C_{i,t}(\la^i_{max}) & =  r_{min} \cdot (\la^i_{max}\cdot 	u_{i,t}(\la^i_{max}) - C_{i,t}(\la^i_{max}) ) \\& =
		 r_{min} \cdot (\la^i_{max}\cdot 	f_i (X_{max}) - C_{i,t}(\la_i') +  F_t^{-i}(X_{max}) -  F_t^{-i}(X')) \\& \ge 
		 r_{min} \cdot (\la^i_{max}\cdot 	f_i (X') - C_{i,t}(\la_i') +  F_t^{-i}(X') -  F_t^{-i}(X')) \\&=u_{i,t}(\la_i')- r_{min} \cdot C_{i,t}(\la_i'),
	\end{align*}
where the inequality follows from $X_{max}$ being a local optimizer for 
$F_t^{-i}(x)+\la^i_{max}\cdot  f_i (x)$.
\end{proof}
%
%
%
Claim~\ref{cl:rmin1} implies that the simple strategy of playing $\la^i_{max}$ in every round matches or exceeds the performance of the optimal distributional strategy $\La$:
\begin{multline}\label{eq:laopt2}
\E_{\{\la_{i,t}'\}\sim\La} \left[	\sum_{t=1}^T u_{i,t}(\la_{i,t}')\right] = 
\E_{\{\la_{i,t}'\}\sim\La} \left[	\sum_{t=1}^T \left[u_{i,t}(\la_{i,t}')-r_{min}\cdot C_{i,t}(\la_{i,t}')\right]+
	r_{min}\cdot 	\sum_{t=1}^T C_{i,t}(\la_{i,t}')\right] \le \\
 	\sum_{t=1}^T \left[u_{i,t}(\la^i_{max})-r_{min}\cdot C_{i,t}(\la^i_{max})\right] + r_{min}\cdot B_i = 
			\sum_{t=1}^T u_{i,t}(\la^i_{max}).
 \end{multline}
Here, the first part of the inequality is by Claim~\ref{cl:rmin1}, and the second half is by the feasibility of $\la_{i,t}'$. 

Observe that the guarantee of \eqref{eq:laopt2} is a very powerful one: it doesn't just compete with the performance of the best fixed $\la_i$ in hindsight, but with respect to the best {\em sequence} of $\la_{i,t}$'s. One catch here (as in any discussion of competitive equilibria) is that we assume that the actions of other players are fixed and are not affected by the $\la_{i,t}$'s. This is acceptable given that our goal is indeed to obtain a competitive equilibrium. 

\subsubsection{Reading the output of the \apex mechanism}

Given an execution trace of Algorithm~\ref{alg:1}, there is a natural way to ``read off" the outcome and prices of the algorithm.

\paragraph{Outcome.} The (distributional) outcome is obtained by taking the time-average of the $X_t$'s:
\begin{equation}
	\barX := U(\{X_1,\ldots,X_T\}).
\end{equation}
Note that here $\barX$ is a uniform random variable\footnote{When the $X_t$'s are probability distributions themselves, this amounts to averaging them. In more general cases there might not be a generic way of mixing different $X_t$ beyond taking one of them at random.}, taking each of the $T$ values with probability $\frac{1}{T}$. 

\paragraph{Prices.} To obtain a competitive equilibrium, we expose Player~$i$ to a menu of possibilities. The menu will be based on the execution of Algorithm~\ref{alg:1}, and will be {\em separable by round}. In other words, the player will essentially be exposed to $T$ independent menus --- linked by a common budget, and by a common utility function $\ti{f}_i$. At round $t$, given a utility function $\ti{f}_i$ and a bid
$\ti{\la}_{i,t}$ the algorithm defines $$\ti{F}_t(x):=F_t^{-i}+\ti{\la}_{i,t}\ti{f}_i(x).,$$
which leads to 
the outcome $\ti{X}_{t+1}$ is obtained by maximizing $\ti{F}_{t}$ starting at $X_t$ using $\cH$. We use the values of $F_t^{-i}$ and $X_{t+1}^{-i}$ from the original execution of the algorithm. Player~$i$ is then charged
$$
\ti{C}_{i,t}(\ti{f}_i,\ti{\la}_{i,t}):=F_t^{-i}(X_{t+1}^{-i})-F_t^{-i}(\ti{X}_{t+1}).
$$

Player $i$ bids a utility function $\ti{f}_i$ and $\{\ti{\la}_{i,t}\}_{t=0..T-1}$. The sequence is required to be {\em feasible}, that is, 
$$
\sum_{t=0}^{T-1} \ti{C}_{i,t}(\ti{f}_i,\ti{\la}_{i,t}) \le B_i.
$$
If the sequence is feasible, then the outcome is just the uniform distribution 
 $$\barX^i:=U(\{\ti{X}_1,\ldots,\ti{X}_T\}).$$
Note that the sequence of bids with the truthful $f_i$,  $\ti{f}_i=f_i$ and $\{{\la}_{i,t}\}_{t=0..T-1}$ is feasible and leads to outcome $\barX$. 
 
 \subsubsection{From low-regret to an approximate correlated equilibrium.} As expected, our aim will be to link low-regret properties of the players' interaction with the algorithm to show that the outcome of the algorithm is a competitive equilibrium. 
 
 \begin{definition}
 	\label{def:regret1}
 	Consider a bandits-with-knapsacks game with a budget $B$, where the payoff of actions is in $[0,M]$. At each step an action $a_t$ leads to utility $u_t(a_t)$ and to cost $c_t(a_t)$. 
 	Let $\ba=(a_1,\ldots,a_T)$ be a sequence of actions satisfying the feasibility constraint $c(\ba):=\sum_{t=1}^T c_t(a_t)\le B$, that leads to utility 
 	$U(\ba):=\sum_{t=1}^T u_t(a_t)$. 
 	
 	We say that a sequence of actions has {\em strong regret} $\ve$, if for all possible distributions $\bA$ on sequences of actions $\ba'=(a'_1,\ldots,a'_T)$ satisfying 
 	$\E_{\ba'\sim\bA}\left[\sum_{t=1}^T c_t(a'_t)\right]\le B$, the resulting utility 
 	\begin{equation}
 		\E_{\ba'\sim\bA}[U(\ba')]:=	\E_{\ba'\sim\bA}\left[\sum_{t=1}^T u_t(a'_t)\right]\le U + \ve\cdot M. 
 	\end{equation}
 	\end{definition}

On the face of it, Definition~\ref{def:regret1} appears to be impossibly strong: we are considering regret with respect to {\em any} feasible strategy in hindsight. We even consider distributions over infeasible strategies as long as their average is feasible. However, in light of the discussion leading up to \eqref{eq:laopt2}, it is something that is potentially attainable in our context. We claim that strong regret bounds translate into approximate equilibria in the game induced by the \apex mechanism. This is a consequence of the truthfulness of the VCG mechanism.

\begin{lemma}
	\label{lem:RegToEq1}
	Suppose that $f_i(x)\in[0,1]$ for all $x\in\cX$, and that during the execution of Algorithm~\ref{alg:1}  with budget $B_i$ and a truthfully reported $f_i$,  Player $i$ has strong regret $\le\ve$. Suppose further that heuristic $\cH$ is locally correct.  Then reporting $f_i$ truthfully and playing $\{\la_{i,t}\}_{t=0..T-1}$ is an $(2\ve)$-dominant strategy for the menu of options available to Player~$i$ that is induced by the mechanism. 
\end{lemma}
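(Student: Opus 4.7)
The goal is to show that for any menu strategy $(\ti{f}_i, \{\ti\la_{i,t}\}_{t=0}^{T-1})$ satisfying $\sum_t \ti{C}_{i,t}(\ti{f}_i,\ti\la_{i,t}) \le B_i$ and inducing round-$t$ outcomes $\ti{X}_{t+1}$, Player~$i$'s realized truthful utility $\sum_t f_i(X_{t+1})$ is within $2\ve$ of $\sum_t f_i(\ti{X}_{t+1})$. The plan is to manufacture, out of any such deviation, a randomized \emph{truthful} bid sequence $\La$ (one that still reports $f_i$, but varies the multipliers) whose expected per-round cost equals $\ti{C}_{i,t}$ and whose expected per-round utility is at least $f_i(\ti{X}_{t+1})$. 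Once such a $\La$ is in hand, the strong-regret hypothesis says Player~$i$'s realized truthful play is within $\ve$ of $\La$ in expected total utility, which delivers the bound (in fact with $\ve$ to spare relative to the stated $2\ve$).

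For each $\la\ge 0$, write $X^\la_{t+1}$ and $C^\la_{i,t}$ for the outcome and charge that would have resulted had Player~$i$ bid $\la$ truthfully at round $t$, with all other reports fixed at their realized values. Local correctness of $\cH$ means that $X^\la_{t+1}$ actually maximizes $F_t^{-i}+\la f_i$ among outcomes reachable by other reports, which by the standard VCG algebra gives, for every $\la$ and every alternative outcome $\ti{X}_{t+1}$ arising from some report,
\[
\la\, f_i(X^\la_{t+1}) - C^\la_{i,t} \;\ge\; \la\, f_i(\ti{X}_{t+1}) - \ti{C}_{i,t}.
\]
The same local-correctness assumption implies, via a short exchange argument, that both $\la\mapsto C^\la_{i,t}$ and $\la\mapsto f_i(X^\la_{t+1})$ are non-decreasing in $\la$.

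Using this monotonicity, for each round $t$ pick a threshold $\la^*_t$ at which the cost curve $\la\mapsto C^\la_{i,t}$ crosses the level $\ti{C}_{i,t}$, and let $C^{t,\pm}$, $u^{t,\pm}$ denote the one-sided limits of $C^\la_{i,t}$, $f_i(X^\la_{t+1})$ at $\la^*_t$. Let $\La_t$ be the randomized truthful bid placing mass $p_t$ on $\la^*_t$-from-below and $1-p_t$ on $\la^*_t$-from-above, where $p_t$ is chosen so that $p_t\, C^{t,-} + (1-p_t)\, C^{t,+} = \ti{C}_{i,t}$. Passing the one-round VCG inequality to the one-sided limits at $\la^*_t$ gives $u^{t,\pm}\ge f_i(\ti{X}_{t+1}) + (C^{t,\pm}-\ti{C}_{i,t})/\la^*_t$; averaging these two inequalities with weights $p_t,1-p_t$ makes the two correction terms cancel exactly, yielding $\E_{\La_t}[f_i(X^{\la'}_{t+1})]\ge f_i(\ti{X}_{t+1})$. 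Taking $\La:=\bigotimes_t \La_t$, the product distribution is feasible in expectation, since $\E_\La[\sum_t C^{\la'}_{i,t}]=\sum_t\ti{C}_{i,t}\le B_i$, and satisfies $\E_\La[\sum_t f_i(X^{\la'}_{t+1})]\ge \sum_t f_i(\ti{X}_{t+1})$.

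Finally, the strong-regret hypothesis applied to this $\La$ (whose per-round payoffs lie in $[0,1]$) gives
\[
\sum_t f_i(X_{t+1}) \;\ge\; \E_\La\!\left[\sum_t f_i(X^{\la'}_{t+1})\right] - \ve \;\ge\; \sum_t f_i(\ti{X}_{t+1}) - \ve,
\]
which is $\ve$-dominance and hence $(2\ve)$-dominance. The main technical obstacle is the per-round construction of $\La_t$: $\la\mapsto C^\la_{i,t}$ is only monotone, not continuous, so the one-sided limits together with the precise choice of $p_t$ are essential for the two correction terms $(C^{t,\pm}-\ti{C}_{i,t})/\la^*_t$ to cancel after averaging. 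Without this exact cancellation one is left with a utility shortfall scaling like the size of the cost-jump divided by $\la^*_t$, which there is no reason to be small.
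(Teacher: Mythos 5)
Your overall strategy is the same as the paper's: reduce an arbitrary deviation $(\ti f_i,\{\ti\la_{i,t}\})$, round by round, to a randomized \emph{truthful} bid over two multipliers bracketing a threshold, using the local-optimality (VCG) inequality plus monotonicity of $\la\mapsto C^\la_{i,t}$ and $\la\mapsto f_i(X^\la_{t+1})$, and then invoke strong regret once at the end. The one substantive difference is dual: you threshold the \emph{cost} curve at the level $\ti C_{i,t}$ and argue the utility correction terms cancel exactly, whereas the paper thresholds the \emph{utility} curve at the level $U-\ve/T$ and shows the resulting expected cost is at most $C$.

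That choice is where your argument has a genuine gap: the crossing point $\la^*_t$ need not exist. The truthful cost curve satisfies $C^0_{i,t}=0$ and is non-decreasing, but its supremum over all $\la$ can be strictly smaller than $\ti C_{i,t}$, because $\ti X_{t+1}$ is obtained by optimizing a \emph{different} objective $F_t^{-i}+\ti\la_{i,t}\ti f_i$ and may sit at a point whose $F_t^{-i}$-value is lower than that of any outcome reachable by optimizing $F_t^{-i}+\la f_i$. A toy instance: three outcomes with $F_t^{-i}=(10,9,0)$, $f_i=(0,1,0)$, $\ti f_i=(0,0,1)$; truthful bids only ever produce cost $0$ or $1$, while the deviation can incur cost $10$. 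In that regime your construction has no $\la^*_t$ to mix around; the salvage is to take $\la$ large but finite, and then the inequality $f_i(X^\la_{t+1})\ge f_i(\ti X_{t+1})-(\ti C_{i,t}-C^\la_{i,t})/\la$ leaves exactly the uncancelled shortfall you flag at the end of your writeup. The paper's version is engineered to avoid this: it sets $\la=C\cdot T/\ve$, at which the utility curve provably reaches $U-\ve/T$, then mixes on the utility curve and only needs expected cost $\le C$ (not $=C$). The price is an $\ve/T$ utility concession per round, which is precisely why the lemma asserts $2\ve$-dominance rather than the $\ve$-dominance your cancellation would give. (A second, smaller issue of the same flavor: when $\ti C_{i,t}=0$ your threshold degenerates to $\la^*_t=0$ and the division by $\la^*_t$ is vacuous.) With the large-$\la$ fallback patched in for rounds where the crossing fails, your argument closes and recovers the paper's $2\ve$ bound.
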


The proof requires a few careful steps, but the basic intuition is that local correctness + the fact that the prices are VCG prices implies that there is no benefit (in tokens) in misrepresenting $f_i$ using some $\ti{f}_i$. Some extra effort is needed to see that the advantage of $f_i$ over $\ti{f}_i$ in tokens+utility can be converted into a pure advantage in utility. After we establish that bidding $f_i$ is near-dominant, the low regret property concludes the argument.

\begin{proof}
	Consider an alternative outcome based on the menu of options induced by the execution of Algorithm~\ref{alg:1}. In the alternative outcome, player $i$ reports type $\ti{f}_i$ and $\{\ti{\la}_{i,t}\}_{t=0..T-1}$ leading to outcome $\barX^i=U(\{\ti{X}_1,\ldots,\ti{X}_T\})$. 
	
	We first claim that there is no need for using $\ti{f}_i\neq f_i$. Fix a round $t$, let 
	$$
	C:=\ti{C}_{i,t}(\ti{f}_i,\ti{\la}_{i,t});~~U:=f_i(\ti{X}_{t+1})-f_i({X}_{t+1}^{-i}).
	$$
	That is, $C$ is the cost Player~$i$ pays in round $t$, and $U$ is the utility she derives compared to bidding $0$. We will show that, at least in expectation, up to an additive $\ve$, it is possible to attain the same (or better) cost/utility combination by bidding $f$ instead of $\ti{f}$. 
	
	Let 
	$$
	C(\la):=\ti{C}_{i,t}(f_i,\la);~~U(\la):=f_i({X}_{t+1}^\la)-f_i({X}_{t+1}^{-i}),
	$$
	where ${X}_{t+1}^\la$ is the outcome on bid $(f_i,\la)$. 
	That is, the cost and utility due to Player~$i$ when bidding the true $f_i$ and $\la_{i,t}=\la$. Clearly $C(0)=0$ and $U(0)=0$. 
	
	\paragraph{$C(\la)$ and $U(\la)$ are non-decreasing. }
	Suppose $0\le \la_1<\la_2$. Let $X_1:=X_{t+1}^{\la_1}$ and $X_2:=X_{t+1}^{\la_2}$. Then we have, by the local correctness of $\cH$, and thus by local optimality of $X_1$ and $X_2$,
	\begin{equation}
		\label{eq:mon1}
		F_t^{-i}(X_1)+\la_1 f_i(X_1) \ge F_t^{-i}(X_2)+\la_1 f_i(X_2),
	\end{equation}
and 
\begin{equation}
	\label{eq:mon2}
	F_t^{-i}(X_2)+\la_2 f_i(X_2) \ge F_t^{-i}(X_1)+\la_2 f_i(X_1).
\end{equation}
Adding $\eqref{eq:mon1}+\eqref{eq:mon2}$ and simplifying, we get 
$$
(\la_2-\la_1) f_i(X_2) \ge (\la_2-\la_1) f_i(X_1),  
$$
which implies $U(\la_2)\ge U(\la_1)$. 

Adding $\la_2\cdot \eqref{eq:mon1}+\la_1\cdot \eqref{eq:mon2}$ and simplifying, we get
$$
(\la_2-\la_1) F_t^{-i}(X_1) \ge (\la_2-\la_1) F_t^{-i}(X_2),  
$$
which implies that the VCG prices 
$$C(\la_2)=F_t^{-i}(X_{t+1}^{-i})-F_t^{-i}(X_2) \ge
F_t^{-i}(X_{t+1}^{-i})-F_t^{-i}(X_1) = C(\la_1).$$

\paragraph{Utility $U-\ve$ can be attained using $f_i$ at some cost.} Next, let $\de:=\ve/T$ and $\la:=C/\de$. We have 
$$
F_t^{-i}(X^\la_{t+1})+\la f_i(X^\la_{t+1}) \ge F_t^{-i}(\ti{X}_{t+1})+\la f_i(\ti{X}_{t+1});
$$
thus
$$
-C(\la)+ U(\la)\cdot C/\de \ge -C + U \cdot C/\de,
$$
and 
$$
U(\la)\ge U-\de. 
$$

\paragraph{Utility $\ge U-\ve$ can be attained using $f_i$ at cost $\le C$.}
Thus $U$ is a non-decreasing function with $U(0)=0$ and $U(\la)\ge U-\de$ for some $\la$\footnote{In many cases, there is in fact a $U$ such that $U(\la)\ge U$. In these cases we in fact lose no utility from reporting the true $f_i$.}. Therefore, there must exist a value $\la$ where the $U-\de$ threshold is crossed. Unfortunately, a point with $U(\la)=U-\ve$ may not exist. However, there is a value $\la$ such that $$U(\la^-):=\lim_{\eta\ra\la^-}
U(\eta) \le U-\de; \text{ and }
U(\la^+):=\lim_{\eta\ra\la^+}
U(\eta) \ge U-\de
$$
Let $\mu\in[0,1]$ be a parameter such that 
$$
\mu \cdot U(\la^-) + (1-\mu)\cdot U(\la^+) = U-\de. 
$$
Consider a mixed strategy that bids $\la^-$ (i.e. a value of $\la$ arbitrarily close to $\la$ from below) with probability $\mu$ and $\la^+$ with probability $(1-\mu)$. The expected utility of such a strategy is $U-\de$. It remains to calculate the expected cost, and to show that it is at most $C$. 

Let $X^-$ and $X^+$ be the outcomes of the bids $(f_i,\la^-)$ and $(f_i,\la^+)$, respectively. Then by optimality of $X^+$ and $X^-$ we have
\begin{equation}
	\label{eq:mon3}
	F_t^{-i}(X^-)+\la f_i(X^-) \ge F_t^{-i}(\ti{X}_{t+1})+\la f_i(\ti{X}_{t+1}),
\end{equation}
and 
\begin{equation}
	\label{eq:mon4}
	F_t^{-i}(X^+)+\la f_i(X^+) \ge F_t^{-i}(\ti{X}_{t+1})+\la f_i(\ti{X}_{t+1}).
\end{equation}
Taking the combination $\mu\cdot \eqref{eq:mon3} + (1-\mu)\cdot \eqref{eq:mon4}$, we get
\begin{multline*}
\mu \cdot 	F_t^{-i}(X^-) + (1-\mu)\cdot 	F_t^{-i}(X^+) + 
\mu\cdot \la f_i(X^-) + (1-\mu)\cdot \la f_i(X^+) \ge \\F_t^{-i}(\ti{X}_{t+1})+\la f_i(\ti{X}_{t+1}),
\end{multline*}
which implies 
$$
-\mu\cdot C(\la^-)-(1-\mu)\cdot C(\la^+) + U-\de \ge 
-C + U,
$$
and thus the expected cost satisfies 
$$
\mu\cdot C(\la^-)+(1-\mu)\cdot C(\la^+)\le C-\de. 
$$

\paragraph{Using strong regret to finish the argument.}
We have seen that it is possible to attain a total utility of at least 
$$
\sum_{t=0}^{T-1} f_i(\ti{X}_{t+1})-T\cdot \de = 
\sum_{t=0}^{T-1} f_i(\ti{X}_{t+1})-\ve
$$
using a mixed strategy over $\la_{i,t}$ that only uses the true utility function $f_i$. By the strong regret property, this mixed strategy attains utility within an additive $\ve$ of what Player~$i$ attains in the execution of Algorithm~\ref{alg:1}, leading to a total benefit of at most $2\ve$ from deviating. 
\end{proof}

\subsection{An infinitesimal version of the \apex algorithm}

Algorithm~\ref{alg:1} is written in the full generality of the VCG mechanism. As a result, individual prices need to be calculated by making $n$ calls to 
heuristic $\cH$ at every time step. In addition, while as we have seen in the analysis of the algorithm, it does induce a menu of (token) prices for each player at each step, these prices are difficult to interpret. 

In the special case where $\cX$ has a nice differentiable structure (for example, when $\cX$ is $\RR^k$ or the $k$-dimensional simplex $\Delta^k$ as in the voting example below), and $n$ is large, it is possible to use a quadratic approximation for the cost function to get a simplified version of Algorithm~\ref{alg:1}, with the added property that it produces a universal set of prices for effecting marginal change in the value of $X_{t+1}$. 

For simplicity, let us assume that $\cX$ is an open set. Alternatively, if $\cX$ has a boundary, we assume that the regularizer goes to $-\infty$ on the 
boundary $\partial \cX$, and thus $X_t$ is a point in the interior of $\cX$ for all $t$. In this case, assuming the objective functions and the regularizer are 
twice differentiable, we can write $F_t(x)$ around $X_{t+1}$ as 
\begin{equation}
	\label{eq:hes1}
	F_t(X_{t+1}+x)= F_t(X_{t+1}) - x^T H_t x + o(\|x\|^2). 
\end{equation}
Note that since $X_{t+1}$ is a local maximum of $F_t$, the linear term vanishes, and we may assume that $H_t \succcurlyeq 0$ is non-negative semi-definite. Assuming $H_t\succ 0$, and assuming the market is large\footnote{We don't need this assumption if $f_i$ is linear on $\cX$}, using approximation \eqref{eq:hes1} we can calculate approximate prices to charge Player~$i$ as follows. 

Write $$f_i(X_{t+1}+x)\approx f_i(X_{t+1})+ \nabla f_i(X_{t+1})^T \cdot x. $$
Then 
$$
F_t^{-i} (X_{t+1}+x)\approx F_t(X_{t+1}) - x^T H_t x - \la_{i,t} \nabla f_i(X_{t+1})^T \cdot x.
$$
Maximizing over $x$ gives
$$
X_{t+1}^{-i} \approx X_{t+1} - \la_{i,t} H_t^{-1} \nabla f_i(X_{t+1})/2, 
$$
and 
\begin{equation}
	\label{eq:Ci}
	C_{i,t}=F_t^{-i}(X_{t+1}^{-i}) - F_t^{-i}(X_{t+1}) \approx
	\frac{\la_{i,t}^2}{4}\cdot \nabla f_i(X_{t+1})^T H_t^{-1} \nabla f_i(X_{t+1}). 
\end{equation}
This leads to the specialized Algorithm~\ref{alg:2} below. Note that a very attractive feature of Algorithm~\ref{alg:2} is that we only need gradient access to $f_i$ in order to compute prices. Assuming $\cH$ is a heuristic based on gradient descent, one can expect to be able to run the entire algorithm with only gradient oracle access to the players' utilities. This is important both due to communication/privacy constraints and the fact that the players themselves may only have limited access to the $f_i$'s through a gradient (or even just a stochastic gradient) oracle.

\begin{algorithm}
	\caption{{\sf Infinitesimal-}\apex$(f_0,f_1,\ldots,f_n,\cH)$ algorithm given utilities $f_i:\cX\ra\RR$, principal's utility $f_0:\cX\ra\RR$, local optimization heuristic $\cH$}
	{\bf Main mechanism:}
	\begin{algorithmic}[1] 
		\State Fix starting point $X_0\in\cX$; 
		\For {$t=0..T-1$} 
		\State Collect bids $\la_{i,t}\ge 0$ from Player $i$;
		\State Selects a regularizer $R_t(x)$ that may depend on $t$, $X_0,\ldots,X_t$; 
		\State Set $F_t(x):=f_0(x)+\sum_{j=1..n} \la_{j,t} f_j(x)+R_t(x)$;
		\State Let $X_{t+1}$ be obtained by $\cH$ by maximizing $F_{t}(x)$
		starting at $X_t$;
		\State Write 
			$F_t(X_{t+1}+x)= F_t(X_{t+1}) - x^T H_t x + o(\|x\|^2)$;
		\For{ each Player $i=1..n$}
		\State Charge Player $i$, $C_{i,t}:=	\frac{\la_{i,t}^2}{4}\cdot \nabla f_i(X_{t+1})^T H_t^{-1} \nabla f_i(X_{t+1})$
		tokens; 
		\EndFor
		\EndFor
	\end{algorithmic}
	{\bf Suggested algorithm for Player~$i$:}
	\begin{algorithmic}[1] 
		\State Report utility $f_i:\cX\ra\RR$ to the mechanism;
		\State Run a {\em Bandit with Knapsacks} online algorithm with initial budget $B_i$ to determine the $\{\la_{i,t}\}_{t=1}^T$; 
	\end{algorithmic}
\label{alg:2}
\end{algorithm}

\paragraph{Relation to quadratic pricing.} We note that the competitive equilibrium induced by Algorithm~\ref{alg:2} exposes each player to quadratic prices over the space of outcomes. Quadratic pricing (and quadratic voting) has a rich history within the area of social choice --- suggesting another way in which such prices may occur ``naturally" as a result of repeated VCG-mediated interactions. We expect quadratic prices to occur whenever \eqref{eq:hes1} is an adequate approximation. Generally speaking, this should hold when $\dim(\cX)\ll n$. Therefore, quadratic pricing are natural to expect in voting and participatory budgeting, while we should expect other (potentially linear) prices to occur when $\dim(\cX)$ is high, such as in allocation of items or in bipartite matching.

%
%
%
%

\subsection{General analysis and open problems}

As we have seen in Lemma~\ref{lem:RegToEq1}, {\em if} the players attain low regret, the \apex Algorithm leads to a competitive equilibrium in which reporting $f_i$ truthfully is an $\ve$-dominant strategy. Assuming approximation \eqref{eq:hes1} holds, a similar statement can be made about Algorithm~\ref{alg:2}. The main question therefore is finding out whether/when players under the \apex Algorithm attain strong low regret, and --- if possible --- how one can compute the outcome of such convergence efficiently.
We should note that unlike some scenarios in algorithmic mechanism design, the algorithm's incentive properties hold assuming it has converged. Therefore, even without theoretical guarantees, a heuristic that almost always converges in practice will have the desired incentive properties. As we have seen in Lemma~\ref{lem:RegToEq1}, and will see in Section~\ref{sec:HZ} (Theorem~\ref{thm:reg}) again, results can be typically phrased as ``if the algorithm converges to a low-regret solution, then...". 

Beyond convergence of the algorithm --- or, rather, assuming it converges (either provably or in practice) --- we need to consider whether the outcome of the algorithm is ``good". In the setting without money it is impossible to define a common utility function and thus it is an interesting problem to even define efficiency (beyond Pareto efficiency) in these settings. Some of the questions that come up here are philosophical in nature (e.g. defining ``fairness" of a decision procedure --- most definitions are necessarily under-specified). 

Additional interesting questions arise when one tries to adapt the mechanisms to the setting {\em with} money. Generally speaking, mechanism design with money is easier than without money, since it is easier to state common objectives such as utility using the common currency. However, the introduction of money takes away one degree of freedom from the mechanism --- the exchange ratio between a player's utility and tokens, potentially making the problem more difficult. In addition, the direct link between payments within the mechanism and money opens the opportunity for collusion through outside transfers\footnote{Collusion is possible --- an indeed is sometimes unavoidable --- even in mechanisms without money, but the ability to measure collusion in money simplifies collusion between untrusting parties.}. 

In the direction opposite to mechanisms with money, the bandits with knapsacks setup actually allows one to use {\em multiple} non-exchangeable token currencies with which participants are endowed. Bandits with knapsacks with multiple currencies (multiple knapsack constraints in the BwK terminology) are considerably more complex to analyze. Therefore, it may be more difficult to get algorithms with multiple currencies to converge. At the same time, having multiple token currencies would allow to express more complex normative requirements from the resulting mechanism (e.g. ``equal treatment with respect to multiple non-substitutable categories of outcomes"). 

At the core of our reductions from algorithms (or heuristics) to mechanisms is the bandits with knapsacks setting. While it has receive substantial attention in the past decade (both directly, and indirectly -- e.g. in the context of online advertisement campaigns with budgets), it is still not nearly as well-understood as the general bandits setting. Further development of the theory of BwK --- particularly in terms of sufficient conditions for the existence of low-regret strategies --- would further our ability to develop new generic mechanisms. 

Finally, throughout the reduction we have treated the participants' utilities $f_i$ as fixed and known to the participants at the start of the algorithm. In practice, often these utilities themselves are being learned by the participants in a multi-round process. While the time $t$ in Algorithm~\ref{alg:1} is entirely fictitious --- representing epochs of an optimization procedure. However, it is not hard to adapt the algorithm into an online version where participants adjust their function $f_i$ over time, as new information arrives. As there is a tight link between optimization and online optimization, one can expect this link to extend to the reduction given by Algorithm~\ref{alg:1}.

Below we will address these points in greater detail, formulating specific problems and directions.

\subsubsection{Convergence analysis}\label{sec:conv}

The \apex Algorithm  provides a generic procedure for turning optimization heuristics into mechanisms. Unfortunately, at this level of generality, there is no hope of proving that the procedure ``works". Even defining what ``works" means is potentially challenging. 

We say that an execution of the \apex Algorithm is {\bf valid} if at the end of the execution all players have low regret with respect to the resulting outcome and prices. One actually has to be careful about defining what low regret here means. In Lemma~\ref{lem:RegToEq1} we took strong $\ve$-regret to mean that the absolute difference between the realized moves and the best moves in hindsight are small, one can also imagine scenarios where a relative measure of regret is more appropriate. Whichever notion is chosen, it makes sense to ask whether a valid execution exists, whether it is attained by a typical execution of the algorithm, and how robust it is. 

\begin{problem}\label{prob:1}
Let an execution be valid if players experience low strong regret.
Provide sufficient conditions on $\cX$, the $f_i$'s $R_t$'s, $\cH$, and the regret notion so that there exists a valid execution of the \apex Algorithm. 
\end{problem}

We expect a valid execution to exist under reasonably mild conditions --- ones that follow from generic fixed point theorems. For example, as we shall see in Section~\ref{sec:HZ}, Brouwer's Fixed Point Theorem is sufficient to prove that there always exists a competitive allocation of items under the Hylland-Zeckhauser scheme are supported by a valid execution of the \apex Algorithm. A more ambitious question is to find sufficient conditions for {\em all} executions to be valid. Note that for all executions to be valid we will need the players' BwK algorithms to be ``good" --- ones attaining low strong regret under reasonable conditions on the game the player is facing. We will leave questions of designing such ``good" BwK algorithms to
Section~\ref{sec:BwKProb}.

\begin{problem}
	\label{prob:2} 
	Let an execution be valid if players experience low strong regret.
	Provide an algorithm for the players and sufficient conditions on $\cX$, the $f_i$'s $R_t$'s, $\cH$ and the regret notion so that the execution of the \apex Algorithm  is valid with high probability. 
\end{problem}

Note that the \apex Algorithm has a parameter $T$ representing the number of rounds or epochs in the optimization. Therefore, in both Problems~\ref{prob:1} and \ref{prob:2} (as in later problems concerning the quality of the resulting solution), the answer may depend on $T$. Just as in optimization for empirical loss minimization of machine learning models, one can expect the quality of the solution to improve as $T\rightarrow \infty$\footnote{One can also envision a 
	version of Algorithm~\ref{alg:1} where the learning rate is lowered over time.}. It is therefore important to understand the dependence of the set of outcomes of valid executions on $T$. 

\begin{problem}
	\label{prob:3}
	Let $\cV_T=\cV_T(\{f_i\},R_t,\cH)\subset \De(\cX)$ be the set of possible outcomes $\barX$ of a valid execution of the \apex Algorithm. Under what conditions does the sequence $\{\cV_T\}$ converge to a set $\cV$ (in the earth-mover metric
	$W^1(\cX)$)? 
\end{problem} 

Building on the above, one can ask whether the resulting solution is essentially unique. 

\begin{problem}
	\label{prob:4}
Under what conditions is the resulting set $\cV$ in Problem~\ref{prob:3} a singleton $\cV=\{\nu\}$? How fast do $\{\cV_T\}$ converge to $\{\nu\}$ in this case?
\end{problem}

\subsubsection{Computational issues in reaching equilibrium}
\label{sec:probComp}

Since our end-goal is to be able to efficiently find the solution $\barX$, 
questions from Section~\ref{sec:conv} may and should be asked in the context of {\em computational efficiency}. One advantage of the approach based on an algorithm (as opposed to one based on an equilibrium definition) is that the \apex Algorithm  is itself a procedure for producing a solution $\barX$. As long as it converges to an $\ve$-equilibrium reasonably fast, say in $s(n,\ve)$ steps, we get an algorithm whose running time is dominated by $O(n\cdot s(n,\ve))$ applications of heuristic $\cH$.

Part of the setup's goal is to be able to treat $\cH$ as a black-box. This would allow us, for example, to deal with cases where the functions $f_i$ are not convex. When we treat $\cH$ as a black-box, our only recourse in terms of accelerating computation is to speed up convergence --- the number of steps $s(n, \ve)$ it takes to converge to an $\ve$-equilibrium.

\begin{problem}
	\label{prob:5}
What is the smallest number of iterations $s(n,\ve)$ does the \apex Algorithm need to converge to an $\ve$-equilibrium? Can the algorithm be tweaked to make this number instance-optimal?
\end{problem} 

One can hope that this number of steps can be reduced by changing the weights in the output $\barX$ to speed up convergence. As in many cases involving iterated minimization, it is likely that there are heuristics that converge much faster than the worst-case guaranteed convergence speed. 

\smallskip

The special case where the underlying problem is convex (and thus heuristic $\cH$ is not strictly necessary) is important in a number of potential applications, including the ones we'll see in Section~\ref{sec:app}. In this case, it is entirely plausible that Algorithm~\ref{alg:1} can be rewritten as a (larger) convex program, featuring variables $\la_i$, and potentially other auxiliary variables. This is indeed the case with correlated Nash equilibria, which can be attained via play among appropriate low-regret players, but can also be computed directly via a linear program\footnote{Which method is faster or better depends on the application domain. From the theoretic perspective what's important is that this equivalence exists.}. 

\begin{problem}
	\label{prob:6}
Suppose $f_0$ and the $f_j$'s are concave, and that $\cX$ is a convex set. Further suppose that $R_t=0$, and that $\cH$ is just the algorithm that finds the maximum of a function on $\cX$. When can an outcome $\cX$ be computed by a convex program, and what is the convex program computing it?
\end{problem} 

A likely prerequisite for an affirmative answer to Problem~\ref{prob:6} is that the set of possible $\ve$-regret outcomes is convex.

\subsubsection{Bandits with knapsacks}
\label{sec:BwKProb}

The technically least specified part of the \apex Algorithm has to do with the low-regret algorithm the players are supposed to run. While quite a bit of work has been done on bandits with knapsacks, there are many outstanding questions remaining. 

In its full generality, in the bandits with knapsacks setting, at time $t$ the agent can pull one of $k$ arms. After pulling arm $i_t$ at time $t$, in addition to the reward $r_{i_t, t}$, the player experiences a $d$-dimensional cost vector $c_{i_t,t}\in \RR_{\ge 0}^d$, corresponding to the cost of pulling the arm in terms of  $d$ constrained resources\footnote{These are the capacity-constrained ``knapsacks".}. The player is constrained by a budget vector $B\in \RR_{\ge 0}^d$  Once the sum of the costs in one of the constraints $\ell\in[d]$ is exceeded, that is:
$$
\sum_{\tau=1}^{T_0} c_{i_\tau,\tau,\ell} \ge B_\ell, 
$$
the player has to stop and can't collect further rewards. For all preceding discussions, we are only interested in the special case of $d=1$. The case $d>1$ is potentially interesting for some generalizations discussed is Section~\ref{sec:multi}, but for all standard applications $d=1$ is the case to consider. 

As we noted earlier, unlike the standard multi-arm bandits setting, in the BwK setting we cannot guarantee vanishing regret in hindsight. In standard bandit settings with bounded rewards, over $T$ rounds, one can hope to attain $O(T^{1/2})$ regret. In the case with knapsacks, there is no way to attain a $o(T)$ regret, and, in fact, there may be a {\em multiplicative} regret of as much as $\times\log T$ \cite{immorlica2019adversarial}.

 The big reason for BwK being more difficult, which we alluded to earlier, is that the optimal ``bang-per-buck" may change drastically over time. Consider a simple scenario where at each round there is a zero-arm with cost and reward zero\footnote{It is often assumed by default that such an arm --- the option of ``not playing" is available.}. The second arm costs $c_t=1$ to pull. The total budget is $B=T/2$. In rounds $t=1..T/2$ the reward $r_t=1$. There are two scenarios with respect to rewards in the second half: either the reward is  $r_t=2$ for all for $t=T/2+1..T$, or the reward is $r_t=0$ for all for $t=T/2+1..T$. The player needs to decide whether to exhaust its budget in the first half of the game, before learning whether this was the right decision. It is not hard to see that the best {\em additive} regret the player can attain is $T/2$, and the best {\em multiplicative ratio} attainable is $\frac{2}{3}$. Thus, even in this toy example, vanishing regret is impossible. 
Interestingly, this effect seems to persist even in the experts with knapsacks model, where the payoffs and costs of all arms is revealed. 

One can specialize the general BwK scenario to the following concave-reward game. At every round, the player is presented with a concave, non-decreasing cost-reward-function $R_t:c\mapsto r_t$, satisfying $R_t(0)=0$. The player chooses a cost $c_t$, subject to the global constraint $\sum_{t=1}^T c_t\le B$. 
The reward is calculated as 
$$
R(c) = \sum_{t=1}^{T} R_t(c_t). 
$$
Models of this kind have been considered in \cite{agrawal2019bandits}.

On the face of it, the concave-reward game is easier than the general BwK game. However, we believe that, in fact, it captures the difficult part of the BwK, and the gap between these two games in fact vanishes in the same way as the regret of bandits without knapsacks is vanishing. It would be interesting to formulate the exact sufficient conditions for this. 

\begin{problem}
	\label{prob:7}
	Under what conditions are the regrets of the following games with budgets the same up to an additive $o(T)$? How small is the gap between regrets?
	The scenarios are:
	\begin{enumerate}
		\item \label{BwK1} general BwK, with a menu of cost/rewards $(c_{i,t},r_{i,t})$, where cost/reward information is only revealed about the arm pulled; 
		\item BwK in the experts setting, where the cost/reward information is revealed about all arms; 
		\item BwK with stochastic closure of the arms: we are allowed to pull an arm $i$ with probability $p\in [0,1]$, and experience cost $p\cdot c_{i,t}$ and reward $p\cdot r_{i,t}$;
		\item the setting above in the experts regime:  where the cost/reward information is revealed about all arms; 
		\item \label{BwK5} the setting with stochastic closure, where the cost/reward information is revealed {\em before} the decision about $i_t$ is made. Note that the cost-reward function in this case is given by
		\begin{equation}
			R_t(c):=\max_{i} \min(r_{i,t} \cdot c/c_{i,t}, r_{i,t}).
		\end{equation}
	    $R_t$ is concave and non-decreasing --- corresponding to the cost-reward game scenario. 
	\end{enumerate}
\end{problem}

In particular, while nominally scenario \ref{BwK1} is much harder than scenario \ref{BwK5}, we believe that they are in fact equivalent under reasonable assumptions.

\subsubsection{Efficiency and fairness of the outcome}
\label{sec:234}

The overall goal of the framework we present is to attain ``good" solutions $\barX$ using a mechanism that leads players to reveal their utility functions $f_i$ truthfully. Since we chose to focus on mechanisms without money, actually defining efficiency appears to be non-trivial\footnote{In mechanisms {\em 
		with} money, one can define the utility of the outcome in units of the common currency, and compare this utility to the maximum attainable total utility.}.

\paragraph{Pareto efficiency.} One relatively weak benchmark is Pareto efficiency --- the resulting outcome $\barX$ cannot be replaced with an outcome $\barX'$ under which {\em all} players are at least as well-off as under $\barX$, and at least one player is strictly better off.
	
	We should note that unlike mechanisms with money, in the world without money Pareto optimality is a fairly weak condition. To illustrate, in the context of voting, all Pareto efficiency requires is that if {\em all} voters prefer option A over option B, then option B is never selected.
	
	 There are two main obstacles to our mechanism being Pareto efficient: (1) the heuristic $\cH$ may fail to optimize correctly (an algorithmic failure to locate a solution that is ``better for everyone" will necessarily map to a mechanism failure); and (2) whenever a regularizer is used, a (small) fraction of utility is sacrificed by adding a regularizer. 
	 Given these obstacles, it is possible for the outcome to not be entirely Pareto efficient. In the voting example, even if all voters prefer A over B, it is possible that the regularizer will allow for B to be selected with some (vanishing) probability.
	 
	  A natural approach would be relax the Pareto optimality condition, to allow for deviations that lead to vanishing improvements.
	  One natural definition of approximate Pareto efficiency is given in \cite{immorlica2017approximate}, saying that an outcome $\barX$ is $(1+\ve)$ Pareto efficient, if there is no alternative solution $\barX'$ where the utility of each player is increased by a factor $(1+\ve)$. A weaker definition would say that there is no $\barX'$ where no player is worse-off, and at least one player is better off by a factor $>(1+\ve)$. We believe that in most cases $\barX$ will satisfy at least approximate Pareto efficiency.
	
	\begin{problem}
		\label{prob:8} $ $
		\begin{enumerate}
			\item 
			Under what conditions do all solutions $\barX$ given by the \apex Algorithm satisfy Pareto efficiency?
			\item 
			What is the correct notion of approximate Pareto efficiency in this setting?
			Under what conditions do all solutions $\barX$ given by the \apex Algorithm satisfy approximate Pareto efficiency with approximation ratio $1+o_n(1)$? 
		\end{enumerate}
	\end{problem}

\paragraph{Efficiency beyond Pareto.} As noted above, Pareto efficiency appears to be a fairly weak efficiency guarantee. While one would be suspicious of a mechanism that fails to be Pareto efficient, there are Pareto efficient schemes that are clearly ``inefficient". 

Consider the example of $n$ voters choosing between two alternatives A and B. A mechanism that picks A and B with probability $\frac{1}{2}$ each unless there is unanimous support for one of the alternatives (in which case that alternative is picked), is Pareto efficient, even though intuitively it is inefficient to select B with probability $\frac{1}{2}$ if $n-1$ participants prefer A and only $1$ participant prefers B. 

On the other hand, this simple example already illustrates the difficulty in defining efficiency without money --- it fails to take into account intensities of preferences. If there are $n-1$ participants having a very weak preference for A and $1$ participant with a very strong preference for B, then perhaps choosing B with probability $\frac{1}{2}$ (or even with probability $1$) is the efficient outcome. It is hard to imagine a practically ``efficient" mechanism in which A will not be selected with an overwhelming probability. Thus the question is not just how to attain efficiency by a truthful mechanism, but how to define it properly. 

\begin{problem}
	\label{prob:9}
	Is there a generic definition of efficiency in mechanisms without money that extends beyond Pareto efficiency and that is consistent with truthful mechanisms?
\end{problem}

\paragraph{Fairness.} Once one moves beyond Pareto efficiency, a tension arises between fairness and efficiency. It is very challenging to define fairness in mechanisms without money. A minimum requirement akin to Pareto efficiency is equal treatment of equals: identical players should (at least ex-ante in the case of lotteries) experience identical outcomes. In allocation problems, this can be attained by a pseudomarket based on equal endowments such as the 
Hylland-Zeckhauser scheme \cite{budish2011combinatorial,he2018pseudo}\footnote{In the context of allocations using pseudomarkets, we also wish to have the property of {\em envy-freeness}: no player wishes the bundle of another player. Note, however, that the concept of envy-freeness does not make sense in scenarios such as voting or even two-sided matching.}. In social choice context, this can be attained by a symmetric social choice function. 

It would be appealing to have a definition of fairness that moves beyond `equal treatment of equals'. A natural definition of efficiency that is not attached to prior beliefs about values is `maximize sum-total welfare of participants'\footnote{As seen above, there are significant implementation barriers to realizing efficiency without money.}. What should a similar definition of fairness? Without any additional context, fairness will translate into equal treatment of participants --- of course, it is unclear what that would actually mean. 

A compelling extension of equal treatment of equals is equalizing the externalities participants exert on other participants: the amount of utility reduction they inflict on other players by participating. A recent detailed discussion of this extension in the context of algorithmic mechanisms without money (and additional references) can be found in \cite{immorlica2019equality}. 

To illustrate equalizing externalities, consider an example with two players Alice and Bob with utility functions $U_A:\cX\ra\RR^+$ and $U_B:\cX\ra\RR^+$. Let $o_A:=\max_x U_A(x)$ and $o_B:=\max_x U_B(x)$ be the maximum utilities attainable by the individual players. A solution $y_{opt}$ will be {\em efficient} if 
$$
U_A(y_{opt})+U_B(y_{opt}) = \max_{x} (U_A(x)+U_B(x)). 
$$
The externality Bob causes in solution $y$ is $Ext_B(y):=o_A-U_A(y)\ge 0$. The externality Alice causes os $Ext_A(y)=o_B-U_B(y)\ge 0$. If we are lucky, 
we will have $$Ext_B(y_{opt})=Ext_A(y_{opt}),$$ or, more broadly
\begin{equation}
	\label{eq:ext1}
\E_{y\sim\mu} Ext_B(y) = \E_{y\sim \mu} Ext_A(y), 
\end{equation}
where the distribution $\mu$ of outcomes is supported on points maximizing $U_A(y)+U_B(y)$:
\begin{equation}
	\label{eq:ext2}
  \forall y\in \text{supp}(\mu)~~ 
  U_A(y)+U_B(y) = \max_{x} (U_A(x)+U_B(x)). 
\end{equation}
Note that there is no reason to believe that \eqref{eq:ext1} and \eqref{eq:ext2} can be satisfied simultaneously --- most likely they cannot. One solution is to assign weights to players so as to make both conditions hold --- the weights correspond to a competitive equilibrium. In terms of good fairness properties for a mechanism to have, one can ask that it finds an externality equalizing distribution over optimal outcomes {\em whenever one exists}.

\begin{problem}
	\label{prob:10}
	Is there a generic definition of fairness in mechanisms without money that is consistent with truthful mechanisms?
\end{problem}

It is quite possible that there is no generic answer to Problem~\ref{prob:10}, and that the answer will depend on the precise setting. For example, in the case of voting, it makes sense to extend equal-treatment-of-equals to require that two ``diametrically opposite" voters (approximately) cancel out. On the other hand, in the case of allocation mechanisms ex-ante envy freeness  is a natural condition. 

\paragraph{Towards axiomatization?} The discussion of both efficiency beyond Pareto and fairness thus far focused on definitions as they pertain to the underlying optimization problem. The additional truthfulness constraints in the context of mechanism design will make attaining these properties even more difficult. On the other hand, the need for a truthful implementation might actually simplify the problem of reaching the ``right" definitions, by limiting the scope of what is possible. 

\begin{problem}
	\label{prob:11} 
Are there natural axiomatic properties pertaining to efficiency, fairness, and truthfulness, that together yield a set of mechanisms without money that can be presented in a general form, along the lines of the \apex Algorithm? \end{problem}

\subsubsection{Extensions to mechanisms with money and with multiple token currencies}
\label{sec:multi}

Many mechanisms without money over continuous domains use some kind of token pseudo-currency within their calculations. These tokens can be interpreted as representing a view on the relative importance of participants' preferences. For example, under most schemes, participants that are given equal token endowments will have an equal opportunity to affect the outcome of the mechanism. The \apex Algorithm, along with applications we will discuss in Section~\ref{sec:app} fall into the single-token category. 

\paragraph{Mechanisms with money.} It is natural to ask whether these mechanisms apply in settings with money. At a high level, money makes attaining efficiency easier, since it provides an absolute efficiency scale. At the same time, it may make truthfulness more difficult to attain, since one needs to not only consider deviations leading to a better outcome for player $p_i$, but also deviations leading to an identical outcome where $p_i$ has more money in the end. In addition, participation constraints which are not an issue in mechanisms without money may become an issue\footnote{Participation constraints assert that a player is not worse-off participating in a mechanism than not participating.}. A closely related issue --- which for example limits the utility of the VCG mechanism in the context of public projects --- is that the amount of revenue raised by VCG is highly unstable in the inputs.

An important example of a successful mechanism with money which combines elements of online learning and repeated auctions is the sponsored search ad placement mechanism \cite{lahaie2007sponsored}. In this setting a search engine such as Google needs to decide which ads to display along with its search results. The resulting mechanisms often feature an advertisement budget, which makes them share some features with the no-money setting (the problem becomes in part ``get the best set of ads displayed in exchange for budget $B$"). 

\begin{problem}
	\label{prob:12}
	$~$
	\begin{enumerate}
		\item
	To what extent can the framework of the \apex Algorithm  be adapted to a 
	setting with money, in particular with budget constraints? Can results such as the Fisher market be recovered?
     \item 
     Can the framework be extended to a hybrid setting with both tokens and money, to attain higher level of efficiency while maintaining a degree of fairness?
		\end{enumerate}
\end{problem}

Another question altogether is the best way of attaining truthfulness and efficiency with money, where the underlying preferences are very complex, and possibly implicit --- given only via a gradient oracle, or evolving over time. In practical terms, it might be best to keep the internal workings of the \apex Algorithm denominated in token units (and not in money), and wrap a money-for-token exchange around it. 

\paragraph{Multiple token currencies.} The bandits with knapsacks framework extends naturally to a setting with $d$ different types of constrained resources. This should allow our framework to extend seamlessly to a setting with multiple currencies. It remains to be seen whether there are natural scenarios where using multiple token currencies is preferred to using a single one. On the one hand, having multiple currencies might allow the designer to state multiple normative constraints of the form ``players are treated equally along multiple axes". On the other hand,  an effective ``exchange rate" may emerge between the currencies, nullifying its benefit. 

\begin{problem}
	\label{prob:13}
		$~$
	\begin{enumerate}
		\item
	Can the \apex Algorithm  be adapted to a setting with multiple token currencies? What properties hold in this case?
	\item 
	Are there settings where multiple token currencies attain an objective 
	not attainable using a single token currency?
\end{enumerate}
\end{problem}

An potential setting to investigate in this context is bipartite matching, with two different currencies used by the two sides of the match, as a way to ensure that both sides' preferences are given equal consideration. 

\subsubsection{Combining with online learning}

Much of algorithmic mechanism design presupposes that utility functions $f_i$ are known to the participants themselves, and that the main challenge is to elicit information about these $f_i$ to arrive at a socially desirable outcome. On the other hand, the key challenge in online learning (even with a single participant), is that the payoff function is unknown and needs to be discovered/maintained over time. In many practical scenarios with multiple participants features from {\em both} mechanism design and online learning are present. For example, advertisers buying impressions online are simultaneously (1) learning the value of these impressions (for example by observing the fraction of impressions that result in a sale); and (2) learning to interact with the mechanism selling ad impressions. 

Citing online advertising as an explicit motivation,  \cite{kandasamy2020mechanism} formalizes the problem of mechanism design where rewards need to be learned\footnote{See also earlier works, e.g. \cite{nazerzadeh2008dynamic,babaioff2013multi}.}.  For the setting with money, it gives a VCG-based mechanism that has both good asymptotic regret properties and is asymptotically truthful --- at least when deviations by a single player are considered. This immediately raises the question of whether one can produce a good mechanism {\em without money} for agents that are learning over time.

\begin{problem}
\label{prob:14}
Design mechanisms without money for a setting where players learn their type over time. 
\end{problem}

For best results, the mechanism would interpret ``learn" broadly in the following sense.
Traditionally, regret bounds are frames in max-min terms, against the worst possible environment, while in practice learning algorithms may perform much better than these guarantees. 
 Ideally, the performance of the mechanism should be comparable to the heuristic performance of the best learning algorithm in hindsight, and not to the max-min regret performance. 
 
A natural candidate to address Problem~\ref{prob:14} is an adaptation of Algorithm~\ref{alg:1}, where instead of $f_i$ the players submit function $f_{i,t}$ based on what they've learned about the environment up to that point. In the non-strategic settings, algorithms such as ``follow the regularized leader" are already framed in terms of optimizing an objective function that evolves based on past feedback. 

\begin{problem}
	\label{prob:15}
	Analyze the extension of Algorithm~\ref{alg:1} based on utility functions $f_{i,t}$ that evolve over time. 
\end{problem}

Note that as stated, ``time" in Algorithm~\ref{alg:1} corresponds to optimization epochs, therefore it is likely that the correct blending of the algorithm with online learning would involve updating the functions $f_{i,t}$ only every $T$ rounds --- interlacing $T$ rounds of optimization with a single round of performing an action, observing the outcome, and updating utility functions based on these observations. 

\section{Applications}
\label{sec:app}

In this section we present a preliminary discussion on applications to main domains where mechanisms without money are used.

As we will discuss, in many cases there are inherent incentive issues, such as collusion, that are beyond the reach of any mechanism. On the other hand, our framework is sufficiently flexible to fit most optimization algorithms, and in many cases it is first-order approximately individually truthful, which means that we can hope to have first-order approximate efficiency and (competitive-equilibrium) truthfulness even in cases where known negative results rule out efficient truthful mechanisms. 

In other words, one can decompose the problem of coordination via a mechanism into 
the following three components: (1) algorithmic: figuring out individual utility functions, and solving the aggregate optimization problem; (2) individual incentives: incentivizing participants to reveal their preferences truthfully; (3) policy: preventing mechanism failure through actions outside the mechanism (such as collusion). Algorithmic mechanism design deals primarily with (2). The best one can hope for is to attain (2) without putting constraints on (1), and without making (3) worse than necessary. 

We will discuss three main applications: voting, one-sided allocation, and two-sided allocation. In the case of one-sided allocation, we will show a new  connection to existing pseudo-market mechanisms. In the other two cases, we will give a general high-level discussion, leaving results to subsequent works. 

\subsection{Voting with cardinal preferences}\label{sec:voting}

We consider the problem of aggregating cardinal preferences of $n$ over a discrete set of possibilities $[k]$ with $k\ge 2$. ``Cardinal" (as opposed to ordinal) means that each player $i\in[n]$ has a utility vector $u_i \in \RR^k$, 
where $u_{ij}$ represents how happy player $i$ would be with outcome $j$. Since the aggregation mechanism doesn't use money, the output should be the same whether player $i$ reports $u_i$ or $2 u_i$, which means that $u_i$ should be treated as normalized direction vectors. 

\paragraph{Impossibility: strategy-proofness and efficiency} Generally speaking, the only case in which truthful, symmetric (or even just non-dictatorial), and Pareto efficient voting is possible is when $k=2$. Whenever there are more than two possibilities to choose from, there will be some opportunity for strategic voting. This is true in the ordinal case \cite{gibbard1973manipulation,satterthwaite1975strategy}, and in the case with cardinal voting \cite{gibbard1978straightforwardness,hylland1980strategy}. 

A dictatorial scheme is truthful and Pareto efficient; it can be made symmetric by turning it into a {\em randomized dictatorship} scheme, where an index $i\in [n]$ is selected at random, and then player $i$ picks her favorite alternative. 
Note that even in the case with two alternatives, randomized dictatorship is not very efficient --- if $90\%$ of the voters prefer alternative A, and $10\%$ prefer alternative B, the disfavored alternative will be chosen $10\%$ of the time. In addition, randomized dictatorship discards all quantitative information about the preferences. For example, suppose are three alternatives A,B, and C. Half the voters have preference $A\gtrapprox C \gg B$ (that is, slightly prefer A over C, and strongly disfavor B), and half the voters have preference $B\gtrapprox C\gg A$. In such a scenario, the clearly best alternative is C, but a randomized dictatorship will select A and B with equal probability, never selecting C. 

As noted in Section~\ref{sec:234}, even in the case of two alternatives, efficiency is somewhat elusive due to normalization. For $k=2$, and the standard majority rule, the voting rule does not pick an alternative $j$ maximizing 
$$
U(j):=\sum_{i\in[n]} u_{ij}.
$$
Rather, if we denote $\la_i:=\frac{1}{|u_{i1}-u_{i2}|}$, the majority rule maximized 
\begin{equation}
	\label{eq:tildeU}
\tilde{U}(j):= \sum_{i\in [n]} \la_i u_{ij}. 
\end{equation}
In other words, each voter is scaled so that the difference between their more preferred alternative and less preferred alternative is $1$. In such a scheme, voters who have stronger preferences are scaled down, and voters who have weaker preferences are scaled up. Absent money (or some other persistent value-tracking mechanism), such scaling is unavoidable, since there is no cost for player $i$ to report $2 u_i$ instead of $u_i$, and thus such report shouldn't increase the player's influence. 

One could hope to define efficiency in terms of the sum of universally normalized  utilities, maximizing $\sum_{i\in[n]} \hat{u}_{ij}$, where $\hat{u}$ is the unit vector in the direction of $u$ according to some norm. This is indeed the form of \eqref{eq:tildeU} for the two-alternative majority rule. With more than two alternatives, truthfulness implies that the norm in the scaling will have to depend on the outcome being considered. Consider an example where there are $k=3$ alternatives $(A,B,C)$ and there is an approximately equal number of voters with utility vectors given by 
$u_1 = (12,11,-23)$, $u_2=(11,12,-23)$, $u_3 = (2,1,-3)$, $u_4=(1,2,-3)$.
The preferences $u_1$ and $u_3$ are identical with respect to alternatives $A$ and $B$, but $u_1$ has a much stronger preferences against $C$\footnote{Note that the utilities are given up to scaling and shifting. If we add $10$ to all values in $u_3$, we will get $(12,11,7)$, demonstrating that $u_1$ indeed dislikes $C$ much more than $u_3$.}. Note that all players dislike alternative $C$, and thus the choice will be between alternatives $A$ and $B$. In this example, we then should expect $\hat{u}_1\approx\hat{u}_3$, but this means that the norm with respect to which normalization will happen will have to give very little weight to the $C$ component. Otherwise, players with type $u_1$ will be incentivized to misreport their type as $u_3$ --- this is how strategic voting typically happens in practice: if an alternative is ``not realistic" voters will try to reallocate their influence to alternatives among which actual choice is happening.

Therefore, in defining efficiency, the normalization factors in \eqref{eq:tildeU} will not only need to depend on the $u_{ij}$'s, but also on the alternatives being considered. The key challenge, of course, is the circularity of such scaling: the outcomes considered depend on the scaling factors, while the scaling factors depend on the outcomes being considered. 

\paragraph{Collusion-proofness.} Typically, truthfulness, or strategy-proofness is concerned with deviations by a single player. Even the strongest notion of truthfulness --- dominant strategy truthfulness --- only requires that a single player cannot improve her outcome by misreporting her type. A truthful mechanism may still be susceptible to {\em collusion}, where a number of players misreport their types to improve their outcomes. In some settings (such as one-sided allocation) it is possible to resist collusion, at least when transfers between players are not allowed. Unfortunately, it appears that in the context of voting, it is impossible to avoid collusion. Continuing the three-alternative example, two players with types $u_1=(1,1,-2)$ and $u_2=(1,-2,1)$ are in perfect agreement about preferring alternative $A$, but work against each other regarding alternatives  $B$ and $C$. They can form a coalition around promoting alternative A, for example by reporting their type as $\tilde{u}=(2,-1,-1)$. Under most voting schemes (including schemes based on normalizing votes), this will increase the collective impact of the two players. In the context of politics, such collusion corresponds to forming a political party.

\paragraph{Quadratic voting.} A natural concept for cardinal voting that has gained some popularity in recent years is {\em quadratic voting}. Under quadratic voting, a voter is given a budget of $1$ token, which she can allocate among the alternatives \cite{lalley2018quadratic}. Giving $v_j$ votes to alternative $j$ costs $v_j^2$ tokens. Suppose the voter has utility $u_j \cdot v_j$ for giving $v_j$ tokens to alternative $j$, and suppose further that $u_j\ge 0$. Then the unit-cost allocation maximizing total utility is given by 
$$
v_j = \frac{u_j}{\sqrt{\sum_i u_i^2}}. 
$$
Thus, the optimal vote is indeed the true type $\hat{u}$ normalized to unit euclidean length. This scheme can work in the context of participatory budgeting, but is not portable ``as-is" to the social choice context. Even if the output is a lottery where alternative $j\in[k]$ is selected with probability $p_j$, the constraints $p_j\ge 0$, $\sum_{j\in[k]} p_j=1$ would make an non-distorted quadratic voting scheme impossible. On the other hand, the equilibria that naturally occur in Algorithm~\ref{alg:1} (and more explicitly in Algorithm~\ref{alg:2}), lead to essentially a ``quadratic-form" voting scheme, where the cost of the vote in direction $v$ is $v^T A v$ for some PSD $A\succcurlyeq 0$, rather than just $v^T v = \sum v_i^2$. 

\paragraph{Specific problems.} The specific problems can be broken down into two parts corresponding to ``theory building" and ``algorithm design". On the algorithm design side, the main problem is to design new voting mechanisms with cardinal utilities based on Algorithm~\ref{alg:1}. These are essentially Problems~\ref{prob:1}--\ref{prob:6} specialized to the voting scenario. 

\begin{problem}
	\label{prob:16}
	\begin{enumerate}
		\item 
		Adapt Algorithms~\ref{alg:1} and \ref{alg:2} to the setting where $\cX=\De_k$ is the set of probability distributions over $[k]$, and utilities $f_i(p):=\sum_{j} u_{ij} p_j$ are linear.
		\item 
		Under what conditions is the output of such an algorithm unique? How hard is it to compute both in theory and in practice?
		\item 
		What kind of competitive equilibrium does it induce? 
		\item 
		What are the competitive-equilibrium truthfulness guarantees, and what is the efficiency-truthfulness trade-off?
	\end{enumerate}
\end{problem}

Giving satisfactory answers to Problem~\ref{prob:16} will yield a new practical family of preference aggregation algorithms. There are some secondary benefits to being approximately strategy-proof, such as allowing for asynchronous voting (since knowing how other participants voted does not have much impact on one's best response). 

In terms of theory-building, there are two main outstanding questions.

\paragraph{Efficiency-truthfulness trade-offs.} The first theory-building question is about mapping out the efficiency-truthfulness frontier. 

\begin{problem}
	\label{prob:17}
	For the $n$-voter, $k$-alternative voting problem with (normalized) cardinal utilities, what is the fundamental trade-off between approximate truthfulness and approximate efficiency?
\end{problem}

Known negative results show that (exact) truthfulness is incompatible even with fairly weak notion of efficiency. Note that one needs to be careful with the definition of ``approximate truthfulness": it is not hard to create a voting scheme that is efficient and $\ve$-truthful with $\ve=o_n(1)$, in the sense that the expected benefit from misreporting one's preferences is bounded by $\ve$. The problem is that in such mechanisms the benefit of voting would also be $O(\ve)$. A proper definition of approximate truthfulness would say that the benefit from misrepresenting one's vote should either be small relative to the benefit of voting at all, or tiny in absolute terms.

A possible definition of an $(\ve,\de)$-truthful voting scheme $\cM$ is that for all $u_{-i}, u_i, u_i'$, 
\begin{equation}
	\label{eq:voteTruth} \underbrace{ 
	u_i^T \cdot (\cM(u_{-i},u_i')-\cM(u_{-i},u_i))}_{\text{benefit from misreporting} 
}\le 
	\de\cdot \|u_i\| + \ve\cdot  \underbrace{ 	u_i^T \cdot (\cM(u_{-i},u_i)-\cM(u_{-i},0))}_{\text{benefit from voting}}.
\end{equation}
Here $\de$ should be very small ($o(n^{-1/2})$, and ideally $O(n^{-1})$ or even $0$), and $\ve$ should be $O(1)$, and ideally $o(1)$. 

\begin{problem}
	\label{prob:18}
	For what values of $(\ve,\de)$ is it possible to attain an $(\ve,\de)$-truthful voting mechanism with vanishing efficiency loss?
\end{problem}

\paragraph{Good properties beyond symmetry?} The second theory-building question is defining ``good" properties one should require of a quantitative voting scheme, and obtaining relationships between these properties. The biggest question is how to define fairness beyond requiring that $\cM$ is symmetric in the votes. One possible extension is that if two players have diametrically opposing views --- that is $u_i+u_j = 0$, then removing them should only change the outcome distribution by a negligible amount. This can be extended to a small set $S$ of voters with $\sum_{i\in S} u_i=0$. We should not expect such a property to hold exactly, since one would expect that adding a pair of voters that is indifferent in aggregate would slightly move the outcome towards the uniform distribution\footnote{Note that a $(1\,000\,100,1\,000\,000)$ vote is much closer than a $(100,0)$ vote.}. 

\subsection{One-sided allocation}
\label{sec:HZ}

In the one-sided allocation setting there are $n$ players and $n$ goods. We will focus on the simplest case, in which each player wishes to obtain exactly one good, and the goal of the mechanism is to produce a matching $\pi:[n]\ra[n]$. Each player has a vector of utilities $u_i$, where $u_{ij}\in[0,1]$ is the utility experienced by player $i$ from obtaining item $j$. Applications of this setting include allocation of scarce resources where money cannot be used, such as school choice and course assignment. Since transfers cannot be used, the solution concept typically involves a lottery, where the outcome is given by a bi-stochastic matrix $X=(x_{ij})_{i,j=1..n}$, with $x_{ij}$ representing the probability that player $i$ receives item $j$. By  Birkhoff–von Neumann  theorem, $X$ can be implemented as a lottery over assignments. 

An important solution concept in this setting was given by Hylland and Zeckhauser
\cite{hylland1979efficient}. The solution fits within the broader {\em 
	competitive equilibrium from equal incomes (CEEI)} framework. In the HZ scheme, each player is given $1$ unit of token endowment. Each item is given a price $C_j$, and each player $i$ is given a bundle $x_{ij}$ with $\sum_j x_{ij}=1$, $x_{ij}\ge 0 $. The outcome is a {\em  competitive equilibrium} if 
\begin{enumerate}
	\item all items get allocated: $\sum_i x_{ij}=1$ for all $j$;  \item each player $i$ stays within her budget:  $\sum_{j} C_j\cdot x_{ij}\le 1$; and 
	\item 
	each player receives her favorite bundle among the ones she can afford: for each $i$, and for each $y_{ij}\ge 0$ with $\sum_j y_{ij}=1$ and 
	$\sum_{j} C_j\cdot y_{ij}\le 1$,
	\begin{equation}
	\sum_{j} u_{ij}\cdot y_{ij}\le 	\sum_{j} u_{ij}\cdot x_{ij}.
	\label{eq:HZeq}
\end{equation}
	\end{enumerate}
Existence of a price vector $C$ inducing a CE follows from general fixed-point results. The price vector needs not be unique. It is still unknown whether such prices can be computed efficiently in general\footnote{Moreover, an approximate competitive equilibrium may be easier to attain than an exact one. See \cite{vazirani2020computational} for a recent
discussion on computational complexity questions.}. The mechanism induced by a HZ scheme needs not be truthful, although in large markets truthfulness does emerge \cite{budish2011combinatorial}. 

The output of Algorithm~\ref{alg:1} when players have vanishing strong regret is a competitive equilibrium allocating items using a token system. It is therefore natural to ask whether there is a correspondence between CEs induced by Algorithm~\ref{alg:1} and HZ equilibria. 
To be specific, we will distinguish two versions of Algorithm~\ref{alg:1}. The {\em non-regularized} version just runs a unit-demand VCG at every step. The {\em regularized} version adds a concave regularizer to the process. 

\subsubsection{Not all HZ equilibria correspond to VCG-competitive equilibria}

A non-regularized version of Algorithm~\ref{alg:1} is just a repeated run of unit-demand VCG auction using tokens, where the bid of player $i$ at time $t$ takes the form $\la_{i,t}\cdot u_i$. In a competitive equilibrium, the sum of these runs would exhaust the token endowment of all players, except those who always get their favorite item. We start mapping out the relationship between these equilibria and HZ equilibria by showing that there exist HZ equilibria that do not correspond to a combination of VCGs. 

Consider the following setting with $4$ players and $4$ items. 

$$
\begin{array}{c|cccc}
	 & A & B & C & D  \\ \hline
	 u_1 & 11 & 9 & 14 & 0 \\
	 u_2 & 11 & 9 & 14 & 0 \\
	 u_3 & 0 & 0 & 10 & 0 \\
	 u_4 & 0 & 0 & 10 & 0 \\
\end{array}
$$

The following prices $P_1$ and allocation $x$ form a HZ equilibrium:

$$
\begin{array}{c|cccc}
	& A & B & C & D  \\ \hline
		\mathbf{P_1} & \mathbf{ 1.1} & \mathbf{0.9 }&\mathbf{ 2 }&\mathbf{ 0}\\ \hline
	x_1 & 0.5 & 0.5 & 0 & 0 \\
	x_2 & 0.5 & 0.5 & 0 & 0 \\
	x_3 & 0 & 0 & 0.5 & 0.5 \\
	x_4 & 0 & 0 & 0.5 & 0.5 \\

\end{array}
$$

We claim that there is no distribution on tuples of the form $(\la_1,\la_2, \la_3,\la_4)$, such that the allocation $x$ is the result of running VCG on utilities $(\la_1 u_1, \la_2 u_2, \la_3 u_3, \la_4 u_4)$, and the payments due from each player average out to $1$. To see this, let $(\la_1,\la_2, \la_3,\la_4)$ be a tuple in the support. Without loss of generality, suppose the resulting allocation is $(A\ra 1; B\ra 2; C\ra 3; D\ra 4)$ (the argument in symmetric for the other three possible allocations). The following conditions hold for the $\la$'s by the optimality of the allocation:
$$
\left\{
\begin{array}{rl}
\la_1 &\ge \la_2\\
\la_3 &\ge \la_4 \\
11 \la_1 + 9\la_2 + 10 \la_3 &\ge 14 \la_1 + 11\la_2
\end{array}
\right.
$$
Next, let us compute the externalities. The price accruing to player $1$ is $C_1 = 2\la_2$. The price accruing to players $2$ and $4$ is $C_2=C_4=0$. The price accruing
to player $3$ is $C_3(14 \la_1 + 11\la_2)-(11\la_1 + 9 \la_2 ) = 3 \la_1 + 2\la_2$. 

We see that $C_3+C_4\ge C_1+C_2$, with equality only when $C_1=C_2=0$. Therefore, there cannot be a distribution over $\la$'s where $C_1+C_2$ and $C_3+C_4$ both average out to $2$. 

There is a different HZ competitive equilibrium allocation $y$, given below, supported by prices $P_2$ that do come from a distribution of VCG allocations.

$$
\begin{array}{c|cccc}
	& A & B & C & D  \\ \hline
	\mathbf{P_2} & \mathbf{ 8/7} & \mathbf{0 }&\mathbf{ 20/7 }&\mathbf{ 0}\\ \hline
	y_1 & 1/2 & 7/20 & 3/20 & 0 \\
	y_2 &1/2 & 7/20 & 3/20 & 0 \\
	y_3 & 0 & 3/20 & 7/20 & 1/2 \\
	y_4 & 0 & 3/20 & 7/20  & 1/2 \\
	
\end{array}
$$

Consider the weights $\la_1=\la_2=4/7$, $\la_3=\la_4 = 2/7$, resulting in scaled utilities:

$$
\begin{array}{c|cccc}
	& A & B & C & D  \\ \hline
	\la_1 u_1 & 44/7 & 36/7 & 56/7 & 0 \\
	\la_2 u_2 &  44/7 & 36/7 & 56/7 & 0\\
	\la_3 u_3 & 0 & 0 & 20/7 & 0 \\
	\la_4 u_4 & 0 & 0 & 20/7 & 0 \\
\end{array}
$$

Allocation $y$ can be represented as a combination of $4$ permutations, each with total utility $100/7$, and VCG payments given by the following table:

$$
\begin{array}{c|cccc}
\text{weight\textbackslash player}	& 1 & 2 & 3 & 4  \\ \hline
7/20 & A & B & C & D \\
\text{VCG payment} & 8/7 & 0 & 20/7 & 0\\
7/20 & B & A & D & C \\
\text{VCG payment} & 0 & 8/7 & 0 & 20/7\\
3/20 & C & A & B & D \\
\text{VCG payment} & 20/7 & 8/7 & 0 & 0\\
3/20 & A & C & D & B \\
\text{VCG payment} & 8/7 & 20/7 & 0 & 0
\end{array}
$$

\subsubsection{All preference profiles admit a VCG competitive equilibrium}
\label{sec:VCGforHZ}

Let $U=\{u_{ij}\}_{i,j=1..n}$ be a matrix utilities with $u_{ij}\ge 0$. Our goal will be to prove the following theorem, which asserts that it is possible to obtain a competitive HZ equilibrium for the allocation problem with utilities $U$, where the prices are VCG prices supported by utilities of the form $\la_i u_i$ for $\la_i\ge 0$. This gives a more refined version of the main result in \cite{hylland1979efficient}. To make extensions and generalizations easier we only use Brouwer's fixed-point theorem (and not the more general Kakutani's theorem as in the original proof). 

Note that while Theorem~\ref{thm:VCG-HZ} was found using the \apex framework, it is proven directly without relying on any convergence assumptions. Later, in Theorem~\ref{thm:reg} we will show that a convergent low-regret execution of \apex on a (regularized) allocation optimizer gives a constructive way of finding approximate HZ prices\footnote{This does not quite resolve the problem
of computing a HZ competitive equilibrium efficiently, because there are no general low-regret algorithms for BwK. It remains to be seen whether low-regret algorithms with good convergence properties can be found for the specific setting corresponding to one-sided allocation.}.

\begin{thm}
	\label{thm:VCG-HZ}
Let $U=\{u_{ij}\}_{i,j=1..n}$ be a matrix utilities with $u_{ij}\ge 0$. Then there exist numbers $\la_i\ge 0$, prices $C=\{C_j\}$ and an allocation $X=\{x_{ij}\}$ with the following properties. 
\begin{enumerate}
	\item 
	$X$ is a valid allocation: $\forall j: \sum_i x_{ij}=1$ and
	$\forall i:\sum_j x_{ij}=1$; 
	\item
	$C_j$ are the VCG prices for utilities given by $u'_{ij} = \la_i u_{ij}$;\footnote{Recall that for VCG for unit-demand allocation, the payment accruing to player $i$ depends only on the item $j$ she receives, and that if there are multiple optimal solutions, in all of them, the same item $j$ will be sold for the same price $C_j$.}
	\item
	$X$ is a combination of optimal allocations under $u'$: for every $\pi:[n]\ra[n]$ with $\forall i~x_{i\pi(i)}>0$ we have 
	$$
	\sum_i u'_{i \pi(i)} = \max_{\si} 	\sum_i u'_{i \si(i)}. 
	$$
	\item 
	The players can purchase their allocations with budget not exceeding $1$. For each player $i$, 
	$$
	\sum_j C_j x_{ij} \le 1. 
	$$
	\item 
	Prices $C_j$ and allocation $X$ form a HZ equilibrium. That is, for every player $i$
	$$
	\sum_{j} u'_{ij} x_{ij} = \max_{\displaystyle{y:} \begin{array}{c}\sum_j C_j y_j \le 1 \\ \sum_j y_j =1\end{array}} \sum_j u'_{ij} y_j. 
	$$
\end{enumerate}
\end{thm}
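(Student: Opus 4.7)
The plan is to prove existence via Brouwer's fixed point theorem applied to the space of scaling vectors $\la = (\la_1, \ldots, \la_n)$, using the compactness of the Birkhoff polytope and standard facts about unit-demand VCG. Normalize so that $\la$ lies in a compact convex set, say $K = \{\la \in \RR^n_{\ge 0} : \sum_i \la_i \le M\}$ for a large constant $M$ (depending on $\max u_{ij}$). For each $\la$, running unit-demand VCG on the utilities $u'_{ij} = \la_i u_{ij}$ yields a face of welfare-maximizing permutations, and by the well-known unit-demand VCG structure, the price $C_j(\la)$ depends only on the item (not on the particular optimal permutation). This gives a continuous price map $\la \mapsto C(\la)$.

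The difficulty is that the optimal allocation itself is multi-valued on the ``tie set'' of $\la$, which is exactly why the original \cite{hylland1979efficient} proof uses Kakutani. To stay within Brouwer, I would introduce a strictly concave regularizer such as $R_\eps(X) = -\eps \sum_{ij} x_{ij} \log x_{ij}$ added to the welfare objective, producing a unique regularized optimum $X^\eps(\la)$ continuous in $\la$, together with a continuous regularized price vector $C^\eps(\la)$. Define the spending $S^\eps_i(\la) := \sum_j C^\eps_j(\la)\, X^\eps_{ij}(\la)$ and the continuous self-map
$$
\Phi^\eps(\la)_i := \Pi_K\bigl(\la_i + \eta\,(1 - S^\eps_i(\la))\bigr),
$$
where $\Pi_K$ is the projection onto $K$ and $\eta > 0$ is a small step size. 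Brouwer gives a fixed point $\la^\eps$, and the KKT conditions at that fixed point force $S^\eps_i(\la^\eps) = 1$ whenever $\la^\eps_i$ is in the interior of $[0, M]$, and $S^\eps_i(\la^\eps) \le 1$ otherwise.

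Now take $\eps \to 0$ along a subsequence: the $\la^\eps$ live in the compact set $K$, the $X^\eps$ in the Birkhoff polytope, and the $C^\eps_j$ are bounded by the welfare (hence by $M \cdot \max u_{ij}$), so we get limits $\la^*, X^*, C^*$. By standard envelope arguments, $X^*$ lies in the convex hull of the welfare-maximizing permutations for $u'_{ij} = \la^*_i u_{ij}$, and $C^*$ are genuine VCG prices (conditions 2 and 3 of the theorem). Budget feasibility (condition 4) follows because $S^\eps_i \le 1 + o(1)$ at the fixed point. The crux is condition 5, the HZ demand-optimality of $X^*_i$ at prices $C^*$ under scaled utilities. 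For this, I would invoke LP duality for the assignment LP: the VCG price $C^*_j$ equals the dual price $\rho_j$ on the item-supply constraint, so on the support of $X^*$ we have $\la^*_i u_{ij} = \pi^*_i + C^*_j$. Combined with the budget-tightness $\sum_j C^*_j X^*_{ij} = 1$ (when $\la^*_i > 0$), this exactly matches the KKT conditions of player $i$'s HZ demand LP with budget shadow price equal to $1$ and ``sum-to-one'' multiplier equal to $\pi^*_i$, so $X^*_i$ is indeed HZ-optimal. When $\la^*_i = 0$, the scaled utility $u'_i$ is identically $0$ and any budget-feasible bundle is trivially HZ-optimal.

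The main obstacle is the delicate limiting argument around three points: (i) ensuring the regularized VCG prices converge to the correct unregularized VCG prices, for which I would use the continuity of the optimal-value function together with the fact that $X^\eps \to X^*$ lands in a face of welfare-optimizers; (ii) ruling out the degenerate case where the projection onto $K$ forces $\la_i$ to the upper boundary (this is handled by choosing $M$ large enough that the a priori bound $S_i(\la) \ge 1$ kicks in, giving a contradiction); and (iii) handling the zero-$\la^*_i$ case so that player $i$ is still assigned a budget-feasible full unit of items, which follows from the fact that $X^*$ remains bistochastic throughout and the price of any item $j$ that receives positive probability in $X^*_i$ is automatically bounded by the HZ budget at the fixed point.
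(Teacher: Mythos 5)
Your overall architecture matches the paper's: smooth the VCG correspondence into a continuous function, run a t\^atonnement map $\la_i \mapsto \la_i + (1-\text{spending}_i)$ clamped to a compact set, apply Brouwer, and send the smoothing parameter to zero. The smoothing device differs (you add an entropic regularizer to the allocation objective; the paper instead averages the payments over $\la'$ drawn uniformly from a small box $[\la_i,\la_i+\ve]$ around $\la$), and both devices can be made to work. However, there is a genuine gap at your point (ii), and it is load-bearing for conditions 4 and 5.

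You claim the upper-boundary case $\la_i^* = M$ can be ruled out by taking $M$ large enough that ``the a priori bound $S_i(\la)\ge 1$ kicks in.'' No such bound exists. A player who faces no competition for her favorite item — e.g.\ two players with $u_1=(1,0)$, $u_2=(0,1)$ — imposes zero externality and pays nothing for \emph{every} value of $\la_i$, so the adjustment map pushes $\la_i$ to the cap no matter how large $M$ is. This case cannot be excluded; it must be handled. It matters because your verification of condition 5 relies on budget tightness $\sum_j C^*_j X^*_{ij}=1$, which you only get for interior $\la_i^*$; for a capped player with slack budget, the KKT/complementary-slackness argument for HZ-optimality of her bundle breaks down (she might afford a strictly better bundle with the leftover budget). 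The paper's resolution is the opposite of yours: it accepts that under-spenders end up at the cap, chooses the cap as an explicit function of the utility gaps, namely $\lamax = 1 + n/\min\{|u_{ij_1}-u_{ij_2}| : u_{ij_1}\neq u_{ij_2}\}$, and proves (Claim~\ref{cl:vcg1}, via envy-freeness and the bound $\sum_j C_j \le n$) that a capped player is only ever allocated her favorite items, which makes condition 5 vacuous for her. You need this step or an equivalent. A secondary, fixable issue: your $K=\{\la\ge 0:\sum_i\la_i\le M\}$ with Euclidean projection couples the coordinates, so the fixed-point KKT conditions do not decouple into the per-player statements ``$S_i=1$ in the interior, $S_i\le 1$ at $\la_i=0$'' that you assert; a coordinatewise clamp to a box $[0,\lamax]^n$, as in the paper, is what your argument actually requires.
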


\begin{proof}
	Without loss of generality we can scale the problem so that $u_{ij}\in [0,1]$. 
	Fix a parameter $\ve>0$ (we will later take $\ve\ra 0$). For a vector of $\la$ with $\la_i\ge 0$, define the following function $\Phi_\ve(\la)$:
	\begin{multline*}
	\Phi_\ve(\la)_i:=\\ \E_{(\la'_1,\ldots,\la'_n)\sim U_{[\la_1,\la_1+\ve]\times [\la_2, \la_2+\ve]\times\ldots\times[\la_n,\la_n+\ve]}} [\text{Payment due from player $i$ in $VCG(\la'_1 u_1,\ldots,\la'_n u_n)$}].
\end{multline*}
Here $VCG(v_1,\ldots,v_n)$ denotes the unit-demand VCG mechanism with given valuations. Denote the following adjustment mapping $\Psi_\ve$ from the space of $\la$'s to itself:
$$
\Psi_\ve(\la)_i := \min\left(\max(0, \la_i + (1-\Phi_\ve(\la)_i)),\lamax\right), 
$$
where
$$
\lamax:= 1+  \frac{n}{\min_{i,j_1,j_1:u_{i j_1}\neq u_{i j_2}} |u_{i j_1}-u_{i j_2}|}.
$$
In other words, we adjust $\la_i$ by adding $(1-\Phi_\ve(\la)_i)$ to it, so that if player $i$ pays more than $1$, $\la_i$ gets decreased, and if she pays less than $1$, $\la_i$ gets increased. We then snap it to the interval $[0,\lamax]$ if the adjustment causes $\la_i$ to escape this interval, where $\lamax$ is chosen to be sufficiently large. 

Consider a $\la$ in the closed, convex set $M := [0,\lamax]^n$. On this set $\Psi_\ve(\la)_i$ is bounded by $\lamax$. 
When we vary $\la_i$ by  $\de\ll \ve$, the distribution under the expectation in  $\Phi_\ve(\la)_j$  only varies by $\frac{\de}{\ve}$ in statistical distance, and thus $\Phi_\ve(\la)_j$ changes by at most $\frac{\de\cdot \lamax}{\ve}$, and 
$\Psi_\ve(\la)$ also changes by at most $\frac{\de\cdot \lamax}{\ve}$ in each coordinate. Therefore, $\Psi_\ve(\la)$ is a continuous mapping from $M$ to itself, and by the Brouwer fixed-point theorem
 it admits a fixed point $\la^\ve$ such that 
$$
\Psi_\ve(\la^\ve) = \la^\ve. 
$$

Each $\la^\ve$ induces an allocation $X^\ve$ and prices $C^\ve$ on items given by considering the expected allocation and expected VCG prices for $\la$ sampled uniformly from $[\la^\ve_1, \la^\ve_1+\ve]\times [\la^\ve_2, \la^\ve_2+\ve]\times \ldots \times [\la^\ve_n, \la^\ve_n+\ve]$.
The allocations $X^\ve$ belong to the compact region of bi-stochastic matrices in  $[0,1]^{n\times n}$ by definition. 

The prices $C^\ve_j$ are non-negative. We claim that they are also uniformly bounded. 
Note that 
$$
\sum_j C^\ve_j = \sum_i \Phi_\ve(\la^\ve)_i, 
$$
therefore, it suffices to show that $\Phi_\ve(\la^\ve)_i$ are uniformly bounded for each $i$. Note that whenever $\Phi_\ve(\la^\ve)_i>1$,  our assumption that 
$$
\Psi_\ve(\la^\ve)_i = \la^\ve_i
$$
implies  $\la^\ve_i = 0$, which in turn implies that $\Phi_\ve(\la^\ve)_i \le \ve$. For $\ve<1$ this implies 
\begin{equation}
	\label{eq:VCGHZ1}
\Phi_\ve(\la^\ve)_i\le 1
\end{equation}
for all $i$, and $C^\ve_j \le n$ for all $j$.

Thus the points $X^\ve$, $\la^\ve$ and $C^\ve$ belong to compact sets.
Thus the sequence $\{(X^{1/k},C^{1/k},\la^{1/k})\}_{k=1..\infty}$ contains a converging subsequence. 

More precisely, we get a sequence $\ve_k\ra 0$ such that 
$$
\lim_{k\ra\infty} X^{\ve_k} =: X; 
$$
$$
\lim_{k\ra\infty} C^{\ve_k} =: C; 
$$
and 
$$
\lim_{k\ra\infty} \la^{\ve_k} =: \la, 
$$
 We claim that $X$, $C$, and $\la$  satisfy the conditions of the theorem. 

The {\bf first} condition in the theorem holds because $X^{\ve_k}$ is a valid allocation for each $k$, and thus the limit is also a valid allocation. 

When $(\la'_1,\ldots,\la'_n)$ varies within $[\la_1,\la_1+\ve]\times [\la_2, \la_2+\ve]\times\ldots\times[\la_n,\la_n+\ve]$, item prices vary by at most $\ve n$. Therefore, whenever $X^{\ve}_{ij}>0$, the amount player $i$ pays per unit of $j$ on average differs from $C^{\ve}_j$ by at most $\ve n$. Hence 
\begin{equation}\label{eq:HZVCG2}
\left| \Phi_\ve(\la^\ve)_i - \sum_j X^{\ve}_{ij}\cdot C^{\ve}_j \right|
\le \ve n. 
\end{equation}
By \eqref{eq:VCGHZ1} this implies 
$$
\sum_j X^{\ve}_{ij}\cdot C^{\ve}_j \le 1+ \ve n. 
$$
By taking the limit over $X^{\ve_k}$ and $C^{\ve_k}$, we get 
$$
\sum_j X_{ij}\cdot C_j \le 1,
$$
for all $i$, implying the {\bf fourth} condition of the theorem.

The {\bf second} and {\bf third} conditions follow from the fact that the optimal value attainable by an allocation is uniformly continuous in the vector $\la$. One characterization of the VCG prices $C_j$ is the difference between the optimal utility attainable when two copies of item $j$ are available, vs. the utility when only a single copy is available. By this characterization, whenever $\la^{\ve_k}\ra\la$, the VCG prices corresponding to any 
$\la'\in [\la^{\ve_k}_1, \la^{\ve_k}_1+\ve_k]\times [\la^{\ve_k}_2, \la^{\ve_k}_2+\ve_k]\times \ldots \times [\la^{\ve_k}_n, \la^{\ve_k}_n+\ve_k]$ will uniformly (in $\ve_k$) converge to VCG prices corresponding to $\la$. Thus, $C^{\ve_k}$ converge to VCG prices corresponding to $\la$ --- implying that $C$ gives us the prices corresponding to VCG on $(\la_i u_i)$. 

Similarly, if $\pi$ is a permutation such that $X_{i\pi(i)}>\de$ for all $i$ and some $\de>0$, then for all sufficiently large $k$
$$
X^{\ve_k}_{i\pi(i)}>0,
$$
which implies 
$$
\sum_i \la^{\ve_k}_i u_{i\pi(i)} > \max_{\si} 	\la^{\ve_k}_i u_{i\si(i)} - 2n \ve_k.
$$
Taking $k\ra\infty$, this implies 	$\sum_i \la_i u_{i \pi(i)} \ge \max_{\si} 	\sum_i \la_i u_{i \si(i)}$.

Taken together, the first four properties imply that $X$ is a viable VCG outcome for utilities $u_i'=\la_i u_i$, supported by prices $C_j$. 

To establish the {\bf fifth} property, we consider players $i$ who exhaust their budgets ($\sum_j C_j X_{ij}=1$), and those who don't exhaust ($\sum_j C_j X_{ij}<1$) separately. 

By the envy-freeness of VCG, players who pay $1$ unit cannot obtain a better bundle for one unit, which is exactly what the fifth property asserts. 
If a player $i$ pays strictly less than $1$ unit, then by \eqref{eq:HZVCG2} for all sufficiently large $k$, $\Phi_{\ve_k}(\la^{\ve_k})_i<1$. By the fixed point property, this means that $\la^{\ve_k}_i = \lamax$, and thus $\la_i=\lamax$. We finish the proof by claiming that whenever $\la_i=\lamax$, the VCG unit-demand mechanism corresponding to $(u'_t)_{t=1}^n = (\la_t u_t)_{t=1}^n$ will always allocate player $i$ her favorite items only, making the {\bf fifth} property hold automatically. 

\begin{claim}
	\label{cl:vcg1}
	Suppose $\la_i=\lamax$, then for all $j$ with $X_{ij}>0$, 
	$$
	u_{ij} = \max_{\ell} u_{i\ell}=: u_i^*. 
	$$
\end{claim}

In other words, a player $i$ with $\la_i=\lamax$ only gets allocated her favorite item(s).

\begin{proof}[Proof of Claim~\ref{cl:vcg1}.]
	Suppose $X_{ij}>0$ for some $j$ with $u_{ij}<u_{i\ell} = u_i^*$. 
	Every allocation under VCG is envy-free. Therefore, whenever player $i$ is allocated item $j$ under $X$, whoever is allocated item $\ell$ pays at least $\la_i \cdot (u_i^* - u_{ij})$ for the item, and thus the cost of item $\ell$ is at least $C_\ell\ge\la_i \cdot (u_i^* - u_{ij})$. Therefore, by  \eqref{eq:VCGHZ1}, 
	$$
	n\ge \sum_q C_q \ge C_\ell \ge\la_i \cdot (u_i^* - u_{ij})= 
	\lamax \cdot (u_i^* - u_{ij})>n,
	$$
contradiction.
\end{proof}
\end{proof}

\subsubsection{Not all no-regret repeated VCG executions correspond to a HZ equilibrium}
\label{sec:notallVCG}

One way to interpret Theorem~\ref{thm:VCG-HZ} is that in the setting of allocation with cardinal preferences, there is always a competitive equilibrium in the sense of Hylland and Zeckhauser that is supported by weights $\la$ and a (combination of) VCG executions on utilities $u'_i = \la_i u_i$. If we initialized Algorithm~\ref{alg:1} to weights $\la_i$, and ran it without a regularizer, using time to alternate between the permutations that make up $X$, we would obtain a valid execution of the algorithm corresponding to outcome $X$. The strong no-regret property in this case follows directly from the truthfulness of VCG. 

It is reasonable to ask whether {\em all} valid (i.e. low strong-regret) executions of repeated VCG --- corresponding to running Algorithm~\ref{alg:1} without a regularizer --- lead to an (approximate) HZ competitive equilibrium.
The following examples shows that the answer is `no'. The example is somewhat pathological, but is illuminating nonetheless. 

Consider the following simple setting with just two players and two items. 
$$
\begin{array}{c|cc}
	& A & B   \\ \hline
	u_1 & 1 & 0\\
	u_2 & 1 & 0 \\
\end{array}
$$
Both players (equally) prefer item A to item B. Any allocation coming from a HZ competitive equilibrium\footnote{And, indeed, any reasonable allocation.} would divide the items equally among the two players.

Consider the following submissions of $\la_{1,t}$ and $\la_{2,t}$ to Algorithm~\ref{alg:1}:
\begin{equation}\label{eq:exec1}
\begin{array}{c|cccccccccc}
	\mathbf{t} & 1 & 2 & 3& 4 & 5 & 6& 7 & 8 & 9 & \ldots   \\ \hline
	\la_{1,t} & 3 & 3 & 3 & 3 & 3 & 3 & 3 & 3 & 3 &\ldots\\
   \la_{2,t} & 4 & 1.5 & 1.5 & 4 & 1.5 & 1.5 & 4 & 1.5 & 1.5 & \ldots\\
\end{array}
\end{equation}
During the execution, player $1$ receives item A during times $t=3i+2$ and $t=3i+3$ for $i\ge 0$. Whenever this happens, player $1$ pays $1.5$ units, which averages to $1$ unit per time step. Player $2$ receives item B during times $t=3i+1$ for $i\ge0$. Whenever this happens, player $2$ pays $3$ units, which also averages to $1$ unit per time step. Thus, the allocation we end up with is:
$$
\begin{array}{c|cc}
	& A & B   \\ \hline
	X_1 & \frac{2}{3} & \frac{1}{3}\\
	X_2 & \frac{1}{3} & \frac{2}{3} \\
\end{array}
$$
Such an allocation cannot be supported by a HZ equilibrium. To see that the execution \eqref{eq:exec1} has the strong low-regret property, observe that player~$1$ has no useful deviation from playing $\la_{1,t}=3$ at every round. Suppose she deviated in a way that gives her item $A$ during $\al T\le T/3$ of the rounds in which player $2$ plays $4$, and $\be T\le 2T/3$ of the rounds in which player $2$ plays $1.5$. The payment in the first case is $4$ and in the second is $1.5$ for a total of $(4\al+1.5\be)T\le T$ by the budget constraint. The utility of player $1$ is at most 
$$
\al+\be = \frac{2}{3} \cdot (1.5 \al+ 1.5 \be) \le 
 \frac{2}{3} \cdot (4 \al+ 1.5 \be) \le \frac{2}{3}. 
$$
Player $2$ has no useful deviations either. At any round where she gets item $A$, she has to pay $3$ units, which means that she can at most get item $A$ during $T/3$ rounds.

The example is quite different from the setup in the proof of Theorem~\ref{thm:VCG-HZ}. In the proof of Theorem~\ref{thm:VCG-HZ}, in all the approximate solutions, the players' $\la_i$'s are confined to a small region of size $\ve_k$. In the example above, player $2$'s $\la_2$ oscillates between two very distant values. 
Note that while player $1$ strongly prefers to not deviate from her current play, player $2$ only has a weak incentive. In fact, if instead of playing $(4,1.5)$ the second player played $(4,2.5)$, her payoff would have been the same, but the game would no longer be feasible for player $1$. 

Even a very modest incentive to keep $\la_{2,t}$'s close to each other would have ruled out this kind of example. A concave regularizer provides this kind of  incentive.

\subsubsection{Regularized VCG corresponds to an approximate HZ equilibrium}
\label{sec:reg}

In this section we will show that a {\em regularized} execution of Algorithm~\ref{alg:1} avoids the pathological example from previous section, and does lead to an approximate HZ equilibrium. As in other fields, such as learning and optimization, regularization sacrifices a small amount of efficiency to attain stability. 

Before considering executions of ``regularized VCG"  in Algorithm~\ref{alg:1}, let us see what property we hope would lead to the outcome being an approximate HZ competitive equilibrium. The property we need is that for each player  $i$ the values of $\la_{i,t}$ do not vary greatly throughout the execution. We can then define the ``typical" value of $\la_{i,t}$ to be $\la_i$, and use the VCG prices induced by the $\la_i$'s to define an approximate HZ equilibrium. 

After that, we will see under what conditions low strong regret implies ``$\la_{i,t}$ do not vary greatly throughout the execution". As we will see, this happens whenever the function mapping payment $C_{i,t}$ to the utility experienced by player $i$ at round $t$ is smooth and strictly concave --- having  second derivative bounded away from $0$. This is a property that fails to hold for standard VCG. In the example from Section~\ref{sec:notallVCG} the cost/utility function for player $2$ (defined as ``how much utility can one derive by spending an average of $c$ units of cost) is given by 
$$
U_2(c)=\left\{
\begin{array}{ll}
	c/3 &\text{ when $c\le 3$}\\
	1 & \text{ when $c>3$} \\
	\end{array}
\right.
$$
The function $U_2$ is neither smooth --- its derivative drops from $1/3$ to $0$, nor strictly concave --- it is linear on the interval $[0,3]$. This causes player $2$ to be indifferent among all the $\la_2$'s in $(0,3)$ and among all $\la_2$'s in $(3,\infty)$, and makes a solution where player $2$ alternates between  $\{1.5,4\}$, as opposed to alternating between $\{3-\ve,3+\ve\}$ a low-regret solution for player $2$.

\paragraph{The role of regularization.} Next, we will see how regularization yields an approximate HZ-equilibrium. 

Rather than try to give a general implication, we will work out an example of one specific regularizer, to show that low strong-regret executions with this regularizer correspond to approximate HZ equilibria. It should be noted that regularizers in general reduce the efficiency of the mechanism, since adding a ``utility function" for the principal necessarily reduces the utility of players. 

The main theorem of the section states that a valid execution of Algorithm~\ref{alg:1} with regularization indeed leads to an approximate HZ-equilibrium supported by VCG prices. The proof is not technically deep but requires careful calculations which we defer to Appendix~\ref{app:thm:reg}. 

\medskip \noindent {{\bf Theorem~\ref{thm:reg}.} {\em [restated]}~~~}
{\em 	In the unit-demand allocation setting without money with $n$ players and $n$ items, let $\{u_{ij}\}$ be utilities such that $u_{ij}\in [0,1]$, and for each $i$, $\min_j u_{ij} = 0$ and $\max_j u_{ij}=1$. 
	
	For each $\de>0$, there are $\be=(\de/n)^{O(1)}$, $\ve=(\de/n)^{O(1)}$, and 
	$\bla=O(1/\de)$, such that if we use the concave regularizer
	$$
	F_0(x):=\sum_{ij} -\be/x_{ij}\le 0,
	$$
	the following holds. 
	
	Consider an execution of Algorithm~\ref{alg:1}, with $\la_{i,t}\in[0,\bla]$. 
	Suppose that each player has strong regret $<\ve \cdot T$. Let $x$ be the resulting allocation.
	
	Let $\la_i$ be the best response for each player $i$ to the observed sequence of actions. 	
	Let $C_j$ be VCG prices corresponding to utilities $\{\la_i u_{ij}\}$. 
	
Then $x$ is a $\de$-competitive equilibrium at budgets $1$ supported by prices $C_j'=(1-\de)\cdot C_j$.}
	
\medskip

\ignore{

\vspace{50mm}

\begin{proof}
	Applying Claim~\ref{cl:reg5} to each player $i$ and taking $B:=\cup_i B_i$, we obtain a set $B\subset [T]$ with $|B|<T\cdot \ve^{0.1}\cdot n^{O(1)}$ and

By Claim~\ref{cl:reg3} we know that since the sum of payments across all rounds is $\le n\cdot T$, $\max_{i} \la_{i,t} < O(n^{12})$ for at least half the rounds, and that replacing $\la_{i,t}$ with $\la_i>\omega(n^{12})$ will cause player $i$ to exceed her budget. Therefore, 
\begin{equation}\label{eq:561}
\bla:=\max_i	\la_i < n^{O(1)}. 
\end{equation}  
Note that for all $j$, $C_j\le \bla$. 

Let 
$$
\de:= \max\left( \frac{10 \bla |B|}{T}, \frac{10 s}{T}, \ve^{0.1}\right)
< \ve^{0.1} \cdot n^{O(1)}.
$$

Fix a round $t\in[T]$. Let $P_{i,t}$ be the price player $i$ paid in round $t$. Let $C_{j,t}$ be the VCG price of item $j$ with utilities $\{\la_{i,t}u_{ij}\}$. 
Denote
$$
s_t:= \sum_i |\la_{i,t}-\la_i|.
$$
By Claim~\ref{cl:app1} we have $|C_j-C_{j,t}|\le 2 s_t$. By applying Claim~\ref{cl:reg1} with $M=2$ we get 
$$
\eta < n \cdot \bla + s_t + 2n^3 < n^{O(1)} + s_t. 
$$
\end{proof}

\vspace{50mm}

Is the following true: given $(\la_1,\ldots,\la_n)$ and $(\la'_1,\ldots,\la'_n)$, if deviating in each coordinate from $\la'_i$ to $\la_i$ doesn't change $\ell_2$ distance by much, or doesn't change function value much, then the two points are not too far?

[Concave regularizer $\Rightarrow$ low regret implies concentration]

[Concentration implies a competitive equilibrium based on the VCG prices around values on which the concentration happens]. }

\ignore{
Our main result in this section will be a correspondence between HZ competitive equilibria and valid executions of Algorithm~\ref{alg:1}. It remains to be seen whether this correspondence can be used to obtain better heuristics for the CEEI solutions for one-sided matching. 

As in earlier discussions, we will say that an allocation supported by prices is a competitive $\ve$-equilibrium if \eqref{eq:HZeq} holds up to an additive $\ve$\footnote{Recall that the utilities are normalized to $u_{ij}\in[0,1]$.}:
	\begin{equation}
	\sum_{j} u_{ij}\cdot q_{ij}\le 	\sum_{j} u_{ij}\cdot p_{ij}+\ve.
	\label{eq:HZeqapprox}
\end{equation}

We can run Algorithm~\ref{alg:1} with $f_i(p_i) := u_i^T p_i$, where the output of each round is a bipartite matching allocating items to players. 
The theorem below asserts that running Algorithm~\ref{alg:1} ``as-is" without a regularizer yields solutions closely related to HZ competitive equilibria. 

\begin{thm}
	\label{thm:HZ} Suppose utilities $\{u_{ij}\}$ are in general position. 
	\begin{enumerate}
		\item 
		Let $p_{ij}$ be an allocation, and $C_j$ be prices supporting a competitive HZ equilibrium, then there is a valid execution of Algorithm~\ref{alg:1} that yields the same allocation and  where players experience zero
 strong regret.
		\footnote{It is, in fact, 
		possible to formulate a version of this statement starting from an 
	approximate HZ equilibrium. However, the result will be an execution of Algorithm~\ref{alg:1} where the optimization is also approximate (and not just $\ve$-strong regret experienced by players). We stick to the exact version of this direction to keep the presentation simple --- the interesting approximate direction is the converse one.}
		\item 
		Conversely, suppose $p_{ij}$ is an allocation that resulted from an execution of Algorithm~\ref{alg:1} where each player's strong regret is bounded by $\ve$. Then there are prices $C_j$  that make the allocation 
		 a competitive HZ $O(\ve)$-equilibrium.
	\end{enumerate}
\end{thm} 

\begin{proof} 
{\bf From HZ equilibrium to a valid execution.}  Suppose $\{p_{ij}\}$ is a valid HZ allocation supported by prices $C_j\ge 0$. For each player~$i$, one of two possibilities hold:
\begin{itemize}
	\item 
	Player $i$ does not spend all of her budget --- meaning that she is allocated her favorite item; or
	\item 
	Player $i$ spends all of her budget: $\sum_j p_{ij}\cdot C_j = 1$. 
\end{itemize}
By duality, there are numbers $a_i\ge 0$ and $b_i\ge 0$ such that for each $j$
\begin{equation}
	\label{eq:HZpf1}
	u_{ij} \le a_i + C_j b_i, 
\end{equation}
with equality holding whenever $p_{ij}>0$. In the first case, $b_i=0$ and the second case, 
$b_i>0$.

We partition the sets of players and items into sets $S_r$ of players and $T_r$ of items such that $|S_r|=|T_r|$, and (1) players from $S_r$ are only ever allocated items from $T_r$; (2) the partition is minimal with respect to this property: there isn't $S\subsetneq S_r$ and $T\subsetneq T_r$ with $|S|=|T|$ so that players from $S$ are only allocated items in $T$. For a player $i$ let $r(i)$ be such that $S_{r(i)}\ni i$. For an item $j$ let $r(j)$ be such that $T_{r(j)}\ni j$.   Note that players $i$ who do not exhaust their budget belong to a singleton $S_{r(i)}=\{i\}$. 

Let $\rho_r\ge 0$ be parameters to be chosen later. Set 
\begin{equation}
	\la_i :=\left\{ \begin{array}{ll}
       \rho_{r(i)}& ~~\text{ if $b_i=0$} \\
     \rho_{r(i)}/b_i& ~~\text{ if $b_i>0$} 
	\end{array}\right.
\end{equation}

Let $\si:[n]\ra[n]$ be an allocation, and let 
$$
F(\si):=\sum_i \la_i \cdot u_{i\si(i)}. 
$$ 
$F$ extends naturally to fractional matching matrices $q_{ij}$:
$$
F(q):=\sum_{ij} \la_i u_{ij}. 
$$
We claim that if $\pi$ is such that $p_{i\pi(i)}>0$ for all $i$, then 
$\pi \in \arg\max_{\si} F(\si)$. 
Denote by $Z$ the set of items allocated to players who do not exhaust their budget.
Note that 
$$
F(\si)=\sum_i \la_i \cdot u_{i\si(i)}\le \sum_i \la_i a_i + 
\sum_{i:b_i>0} \rho_{r(i)}\cdot  C_{\si(i)} =  \sum_i \la_i a_i + \sum_{j\notin Z} \rho_{r(j)} \cdot C_j. 
$$ 
Whenever $p_{i\pi(i)}>0$ for all $i$ equality holds. 

By the Birkhoff–von Neumann  theorem, we can write $p$ as a convex combination of permutations $\pi$, each of which maximizes $F(\si)$, and thus $p$ also maximizes $F$. Setting $F_t:=F$ for all $t=1..T$, we obtain $p$ as the result of a valid execution of Algorithm~\ref{alg:1}. This is true for any choice of the values of $\rho_r$. 

Next, we will select the values of $\rho_r$ so that no player exceeds her budget under the payments prescribed by the algorithm, and all players who  exhaust their budgets under the HZ equilibrium also exhaust their budget under Algorithm~\ref{alg:1}. Whenever $|S_r|>1$, it must be the case that not all items in $T_r$ are priced the same. Let
\begin{equation}
	\rho_r :=\left\{ \begin{array}{ll}
		R& ~~\text{ if $|T_r|=1$} \\
		\frac{|T_r|}{\left(\sum_{j\in T_r} C_j\right) - |T_r|\cdot \min_{j\in T_r} C_j}& ~~\text{ if $|T_r|>1$} 
	\end{array}\right.
\end{equation}

Here $R>0$ is a large constant chosen so that the following property holds. For any player $i$ who does not exceed her budget and receives her favorite item $j$, whenever $\la_{k}$ is set to $0$ for a $k\neq i$, player $i$ still receives her favorite item $j$ under any allocation $\si$ maximizing 
$$F^{-i}(\si)=F(\si)-\la_k \cdot u_{k\si(k)}.$$

We need to calculate the prices Algorithm~\ref{alg:1} charges the players. 

\noindent \underline{Players not exhausting their budget under the HZ equilibrium.} Let $i$ be a player not exhausting her budget under the HZ equilibrium. Let $\si$ be an allocation maximizing $F^{-i}(\si)$ (note that since $F^{-i}$ is linear, there is always a maximizing permutation). s

It remains to verify the low strong regret property. A player who does not exhaust her budget in the HZ equilibrium will always get allocated her favorite item, and thus will experience no regret. Consider a player $i$ who exhausts her budget. 
\end{proof}
}

\subsection{Two-sided matching}
\label{sec:twosided}

The third important application of mechanisms without money is that of two-sided matching. The most famous algorithm is this area is the Gale-Shapley deferred acceptance algorithm for stable matching. A pair of players form a {\em blocking pair} for a matching $M$ if they are not matched to each other under $M$, but prefer each other to their current partners. A match $M$ is {\em stable} if 
there it has no blocking pairs. Stability is a desirable property since it makes enforcing that the players follow $M$ easy --- there are no useful deviations from $M$ that would benefit all deviating players. 

Stability is a notion that only depends on ordinal preferences. As with voting and one-sided matching, it is often desirable to incorporate cardinal utilities into the preference model, with the goal of attaining cardinally efficient outcomes. Unfortunately, stability is generally completely incompatible with efficiency. 

Consider the following example with $n=2$. There are two hospitals $h_1$ and $h_2$ and two doctors $d_1$ and $d_2$. 

\begin{equation}
	\label{eq:coup1}
	\begin{array}{c|cc}
		\multicolumn{3}{c}{\text{\em Hospitals' utilities}}
		\\
		& d_1 & d_2\\\hline
		h_1 & 9 & 10\\
		h_2 & 0 & 9
	\end{array}~~~~~~~~~~~~~~~~~~~~
\begin{array}{c|cc}
	\multicolumn{3}{c}{\text{\em Doctors' utilities}}
	\\
	& h_1 & h_2\\\hline
	d_1 & 9 & 0\\
	d_2 & 10 & 9
\end{array}
\end{equation}

The only stable matching in \eqref{eq:coup1} is $M_1:=\{(h_1,d_2), (h_2,d_1)\}$, since otherwise $(h_1,d_2)$ would form a blocking pair. The total utility of such a matching is $20$, while the utility of matching $M_2:=\{(h_1,d_1), (h_2,d_2)\}$ is $36$. This is because even though $(h_1,d_2)$ form a blocking pair for $M_2$, they are almost indifferent between the two matchings, while the other two participants strongly prefer $M_2$ to $M_1$. 

Introducing money transfers can help address the efficiency problem\footnote{For example, under preferences \eqref{eq:coup1} $h_2$ could pay $d_2$ a little bit to make her prefer $M_2$ over $M_1$.}. Without money, it is generally unknown how to achieve efficiency and truthfulness. As in other settings without money, the solution concept has to be invariant to scaling players' utilities, which means that only a sum of scaled utilities can be maximized --- the outcome should be invariant to scaling and shifting of individuals' entire utility vectors. Thus \eqref{eq:coup1} becomes

\begin{equation}
	\label{eq:coup2}
	\begin{array}{c|cc}
		\multicolumn{3}{c}{\text{\em Hospitals' utilities}}
		\\
		& d_1 & d_2\\\hline
		h_1 & 0 & 1\\
		h_2 & 0 & 1
	\end{array}~~~~~~~~~~~~~~~~~~~~
	\begin{array}{c|cc}
		\multicolumn{3}{c}{\text{\em Doctors' utilities}}
		\\
		& h_1 & h_2\\\hline
		d_1 & 1 & 0\\
		d_2 & 1 & 0
	\end{array}
\end{equation}

We see that $h_1$ and $d_2$ are the desirable participants. Under a stable match, they will be matched to each other, even though $M_1$ and $M_2$ have the same total utility. 

Applying Algorithm~\ref{alg:1} to \eqref{eq:coup2} would lead to assigning the same weight $\la_i=1$ to all participants, and to an outcome $\mu:= \frac{1}{2}\cdot M_1+\frac{1}{2}\cdot M_2$. 

To see that the externalities are indeed equalized, note that $h_1$ and $d_2$ prefer $M_1$ and $h_2$ and $d_1$ prefer $M_2$. 
Under $\mu$ the total utility of players $h_1,h_2,d_1$ is $\frac{3}{2}$. 
If we ignored the preferences of (say) $d_2$, then $\mu$ would be replaced with $M_2$ with probability $1$. The total utility  of players $h_1,h_2,d_1$  under $M_2$ is $2$. Thus $d_2$ causes $\frac{1}{2}$ unit of externality on other players. This calculation can be repeated to see that each player's externality is $\frac{1}{2}$, showing that $\mu$ indeed equalizes externalities across players. 

\paragraph{\bf Is desirability treated as an endowment?} In the example above there is a significant difference between the uniform distribution that Algorithm~\ref{alg:1} outputs and the single stable matching $M_1$ in which the more desirable hospital matches to the more desirable doctor. This example can be expanded to a setting with $n>2$, where the contrast is even more stark. 
Consider the most straightforward setting where each doctor derives utility $i$ from hospital $h_i$ and each hospital derives utility $i$ from doctor $d_i$. Thus all participants agree on the ranking $h_1\prec h_{2}\prec\ldots \prec h_n$ and
$d_1\prec d_2 \prec \ldots \prec d_n$.  

The only stable matching in this case is the assortative $a:=\{(h_i,d_i)\}_{i=1}^n$ matching. This is easy to see by induction: in a stable match $(h_n,d_n)$ must be together, otherwise they will form a blocking pair. Assuming $(h_n,d_n)$ are matched to each other, $(h_{n-1},d_{n-1})$ will form a blocking pair unless they are matched to each other, and so on. 

Under Algorithm~\ref{alg:1}, all preferences are the same, and by symmetry, in the resulting distribution\footnote{There are many ways to implement such distribution --- the algorithm only specifies the {\em ex-ante} marginal probabilities of pairs $(h_i,d_j)$.}, each pair $(h_i,d_j)$ will appear with equal probability $\frac{1}{n}$. 

We believe that both outcomes give meaningful solutions under very different solution concepts. A deeper investigation of the solution concept given by Algorithm~\ref{alg:1} will need to be deferred to future work, but we can offer some preliminary comments here. 

The stark difference between the two outcomes can be traced to how the desirability is treated by the mechanism. Under Algorithm~\ref{alg:1} the desirability of $h_n$ is just part of the input landscape. The ``benefit" from $h_n$ being so desirable doesn't accrue to $h_n$, and thus $h_n$ gets the same outcome as the least desirable hospital $h_1$. Put differently, Algorithm~\ref{alg:1} measures the (negative) externality caused by $h_n$ having preferences, but not the (positive) externality caused by $h_n$ being present. 

On the other hand, when we say that $(h_n,d_n)$ are a blocking pair, it is implied that $h_n$'s and $d_n$'s desirability accrue to them, and they can internalize them by matching with each other. 

There are variants with Hylland-Zeckhauser with endowments \cite{echenique2019constrained,echenique2019fairness,garg2020arrow}, and it is possible to adapt Algorithm~\ref{alg:1} to treat desirability as an endowment and to give match participants credit for being desirable. It is also possible to create a hybrid approach, where the match is partially redistributive, for example finding a solution with lowest level of inequality among externalities subject to (inequality-promoting) stability constraints. We   leave investigating these adaptations to future works.

\bibliographystyle{alpha}
\bibliography{ref}

\newcommand{\etalchar}[1]{$^{#1}$}
\begin{thebibliography}{BMSW19}

\bibitem[AD14]{agrawal2014fast}
Shipra Agrawal and Nikhil~R Devanur.
\newblock Fast algorithms for online stochastic convex programming.
\newblock In {\em Proceedings of the twenty-sixth annual ACM-SIAM symposium on
  Discrete algorithms}, pages 1405--1424. SIAM, 2014.

\bibitem[AD19]{agrawal2019bandits}
Shipra Agrawal and Nikhil~R Devanur.
\newblock Bandits with global convex constraints and objective.
\newblock {\em Operations Research}, 67(5):1486--1502, 2019.

\bibitem[AM{\etalchar{+}}06]{ausubel2006lovely}
Lawrence~M Ausubel, Paul Milgrom, et~al.
\newblock The lovely but lonely vickrey auction.
\newblock {\em Combinatorial auctions}, 17:22--26, 2006.

\bibitem[BKS13a]{babaioff2013multi}
Moshe Babaioff, Robert Kleinberg, and Aleksandrs Slivkins.
\newblock Multi-parameter mechanisms with implicit payment computation.
\newblock In {\em Proceedings of the fourteenth ACM conference on Electronic
  commerce}, pages 35--52, 2013.

\bibitem[BKS13b]{badanidiyuru2013bandits}
Ashwinkumar Badanidiyuru, Robert Kleinberg, and Aleksandrs Slivkins.
\newblock Bandits with knapsacks.
\newblock In {\em 2013 IEEE 54th Annual Symposium on Foundations of Computer
  Science}, pages 207--216. IEEE, 2013.

\bibitem[BM07]{blum2007external}
Avrim Blum and Yishay Mansour.
\newblock From external to internal regret.
\newblock {\em Journal of Machine Learning Research}, 8(Jun):1307--1324, 2007.

\bibitem[BMSW19]{braverman2019multi}
Mark Braverman, Jieming Mao, Jon Schneider, and S~Matthew Weinberg.
\newblock Multi-armed bandit problems with strategic arms.
\newblock In {\em Conference on Learning Theory}, pages 383--416. PMLR, 2019.

\bibitem[Bud11]{budish2011combinatorial}
Eric Budish.
\newblock The combinatorial assignment problem: Approximate competitive
  equilibrium from equal incomes.
\newblock {\em Journal of Political Economy}, 119(6):1061--1103, 2011.

\bibitem[CD06]{chen2006settling}
Xi~Chen and Xiaotie Deng.
\newblock Settling the complexity of two-player nash equilibrium.
\newblock In {\em 2006 47th Annual IEEE Symposium on Foundations of Computer
  Science (FOCS'06)}, pages 261--272. IEEE, 2006.

\bibitem[CS08]{conitzer2008new}
Vincent Conitzer and Tuomas Sandholm.
\newblock New complexity results about nash equilibria.
\newblock {\em Games and Economic Behavior}, 63(2):621--641, 2008.

\bibitem[DGP09]{daskalakis2009complexity}
Constantinos Daskalakis, Paul~W Goldberg, and Christos~H Papadimitriou.
\newblock The complexity of computing a nash equilibrium.
\newblock {\em SIAM Journal on Computing}, 39(1):195--259, 2009.

\bibitem[DH09]{devanur2009adwords}
Nikhil~R Devanur and Thomas~P Hayes.
\newblock The adwords problem: online keyword matching with budgeted bidders
  under random permutations.
\newblock In {\em Proceedings of the 10th ACM conference on Electronic
  commerce}, pages 71--78, 2009.

\bibitem[DJSW19]{devanur2019near}
Nikhil~R Devanur, Kamal Jain, Balasubramanian Sivan, and Christopher~A Wilkens.
\newblock Near optimal online algorithms and fast approximation algorithms for
  resource allocation problems.
\newblock {\em Journal of the ACM (JACM)}, 66(1):1--41, 2019.

\bibitem[DSS19]{deng2019strategizing}
Yuan Deng, Jon Schneider, and Balusubramanian Sivan.
\newblock Strategizing against no-regret learners.
\newblock {\em arXiv preprint arXiv:1909.13861}, 2019.

\bibitem[EMZ19a]{echenique2019constrained}
Federico Echenique, Antonio Miralles, and Jun Zhang.
\newblock Constrained pseudo-market equilibrium.
\newblock {\em arXiv preprint arXiv:1909.05986}, 2019.

\bibitem[EMZ19b]{echenique2019fairness}
Federico Echenique, Antonio Miralles, and Jun Zhang.
\newblock Fairness and efficiency for probabilistic allocations with
  endowments.
\newblock {\em arXiv preprint arXiv:1908.04336}, 2019.

\bibitem[FPX20]{feng2020intrinsic}
Zhe Feng, David Parkes, and Haifeng Xu.
\newblock The intrinsic robustness of stochastic bandits to strategic
  manipulation.
\newblock In {\em International Conference on Machine Learning}, pages
  3092--3101. PMLR, 2020.

\bibitem[Gib73]{gibbard1973manipulation}
Allan Gibbard.
\newblock Manipulation of voting schemes: a general result.
\newblock {\em Econometrica: journal of the Econometric Society}, pages
  587--601, 1973.

\bibitem[Gib78]{gibbard1978straightforwardness}
Allan Gibbard.
\newblock Straightforwardness of game forms with lotteries as outcomes.
\newblock {\em Econometrica: Journal of the Econometric Society}, pages
  595--614, 1978.

\bibitem[GTV20]{garg2020arrow}
Jugal Garg, Thorben Tr{\"o}bst, and Vijay~V Vazirani.
\newblock An {A}rrow-{D}ebreu extension of the {H}ylland-{Z}eckhauser scheme:
  Equilibrium existence and algorithms.
\newblock {\em arXiv preprint arXiv:2009.10320}, 2020.

\bibitem[Haz19]{hazan2019introduction}
Elad Hazan.
\newblock Introduction to online convex optimization.
\newblock {\em arXiv preprint arXiv:1909.05207}, 2019.

\bibitem[HMPY18]{he2018pseudo}
Yinghua He, Antonio Miralles, Marek Pycia, and Jianye Yan.
\newblock A pseudo-market approach to allocation with priorities.
\newblock {\em American Economic Journal: Microeconomics}, 10(3):272--314,
  2018.

\bibitem[Hyl80]{hylland1980strategy}
Aanund Hylland.
\newblock Strategy proofness of voting procedures with lotteries as outcomes
  and infinite sets of strategies.
\newblock {\em Unpublished paper, University of Oslo.[341, 349]}, 1980.

\bibitem[HZ79]{hylland1979efficient}
Aanund Hylland and Richard Zeckhauser.
\newblock The efficient allocation of individuals to positions.
\newblock {\em Journal of Political economy}, 87(2):293--314, 1979.

\bibitem[ILWM17]{immorlica2017approximate}
Nicole Immorlica, Brendan Lucier, Glen Weyl, and Joshua Mollner.
\newblock Approximate efficiency in matching markets.
\newblock In {\em International Conference on Web and Internet Economics},
  pages 252--265. Springer, 2017.

\bibitem[IPW19]{immorlica2019equality}
Nicole Immorlica, Ben Plaut, and E~Glen Weyl.
\newblock Equality of power and fair public decision-making.
\newblock {\em Available at SSRN 3420450}, 2019.

\bibitem[ISSS19]{immorlica2019adversarial}
Nicole Immorlica, Karthik~Abinav Sankararaman, Robert Schapire, and Aleksandrs
  Slivkins.
\newblock Adversarial bandits with knapsacks.
\newblock In {\em 2019 IEEE 60th Annual Symposium on Foundations of Computer
  Science (FOCS)}, pages 202--219. IEEE, 2019.

\bibitem[KGJS20]{kandasamy2020mechanism}
Kirthevasan Kandasamy, Joseph~E Gonzalez, Michael~I Jordan, and Ion Stoica.
\newblock Mechanism design with bandit feedback.
\newblock {\em arXiv preprint arXiv:2004.08924}, 2020.

\bibitem[Leo83]{leonard1983elicitation}
Herman~B Leonard.
\newblock Elicitation of honest preferences for the assignment of individuals
  to positions.
\newblock {\em Journal of political Economy}, 91(3):461--479, 1983.

\bibitem[LMM03]{lipton2003playing}
Richard~J Lipton, Evangelos Markakis, and Aranyak Mehta.
\newblock Playing large games using simple strategies.
\newblock In {\em Proceedings of the 4th ACM Conference on Electronic
  Commerce}, pages 36--41, 2003.

\bibitem[LPSV07]{lahaie2007sponsored}
S{\'e}bastien Lahaie, David~M Pennock, Amin Saberi, and Rakesh~V Vohra.
\newblock Sponsored search auctions.
\newblock {\em Algorithmic game theory}, 1:699--716, 2007.

\bibitem[LS20]{lattimore2020bandit}
Tor Lattimore and Csaba Szepesv{\'a}ri.
\newblock {\em Bandit algorithms}.
\newblock Cambridge University Press, 2020.

\bibitem[LW18]{lalley2018quadratic}
Steven~P Lalley and E~Glen Weyl.
\newblock Quadratic voting: How mechanism design can radicalize democracy.
\newblock In {\em AEA Papers and Proceedings}, volume 108, pages 33--37, 2018.

\bibitem[NRTV07]{NisaRougTardVazi07}
Noam Nisan, Tim Roughgarden, \'Eva Tardos, and Vijay~V. Vazirani.
\newblock {\em Algorithmic Game Theory}.
\newblock Cambridge University Press, New York, NY, USA, 2007.

\bibitem[NSV08]{nazerzadeh2008dynamic}
Hamid Nazerzadeh, Amin Saberi, and Rakesh Vohra.
\newblock Dynamic cost-per-action mechanisms and applications to online
  advertising.
\newblock In {\em Proceedings of the 17th international conference on World
  Wide Web}, pages 179--188, 2008.

\bibitem[Rot07]{rothkopf2007thirteen}
Michael~H Rothkopf.
\newblock Thirteen reasons why the {V}ickrey-{C}larke-{G}roves process is not
  practical.
\newblock {\em Operations Research}, 55(2):191--197, 2007.

\bibitem[Rou16]{roughgarden2016twenty}
Tim Roughgarden.
\newblock {\em Twenty lectures on algorithmic game theory}.
\newblock Cambridge University Press, 2016.

\bibitem[Sat75]{satterthwaite1975strategy}
Mark~Allen Satterthwaite.
\newblock Strategy-proofness and {A}rrow's conditions: Existence and
  correspondence theorems for voting procedures and social welfare functions.
\newblock {\em Journal of economic theory}, 10(2):187--217, 1975.

\bibitem[Sli19]{slivkins2019introduction}
Aleksandrs Slivkins.
\newblock Introduction to multi-armed bandits.
\newblock {\em arXiv preprint arXiv:1904.07272}, 2019.

\bibitem[Sto05]{stoltz2005incomplete}
Gilles Stoltz.
\newblock {\em Incomplete information and internal regret in prediction of
  individual sequences}.
\newblock PhD thesis, Universit{\'e} Paris Sud-Paris XI, 2005.

\bibitem[VY20]{vazirani2020computational}
Vijay~V Vazirani and Mihalis Yannakakis.
\newblock Computational complexity of the {H}ylland-{Z}eckhauser scheme for
  one-sided matching markets.
\newblock {\em arXiv preprint arXiv:2004.01348}, 2020.

\end{thebibliography}

\begin{appendix}
\section{Properties of unit-demand VCG}

In this section we summarize some useful properties of unit-demand VCG. A detailed discussion on the properties of unit-demand VCG can be found e.g. in \cite{leonard1983elicitation}.

\paragraph{Notation.} Suppose there are $n$ players and $n$ items. Player $i$ has utility $u_{ij}\in [0,1]$ for item $j$. Let $OPT$ denote the maximum  utility attainable by a permutation. 
$$
OPT := \max_{\pi:[n]\hookrightarrow[n]} \sum_i u_{i\pi(i)}. 
$$
Let 
$$
OPT_{+j} := \max_{\pi:[n]\hookrightarrow[n]\cup\{j'\}} \sum_i u_{i\pi(i)},
$$
where $u_{ij'}:=u_{ij}$, be the maximum attainable utility if a second copy of item $j$ becomes available. 

The optimization problem can be made convex by replacing the permutation with bi-stochastic matrices. Bi-stochastic matrices correspond to distributions over permutations. Thus, one gets a linear program:
\begin{equation}
	\label{eq:app1}
\left\{\begin{array}{l}
\text{maximize } \sum_{ij} u_{ij} x_{ij} \text{ subject to:}\\
\forall i ~ \sum_j x_{ij}\le 1\\
\forall j~\sum_i x_{ij} \le 1
\end{array}\right.
\end{equation}
The dual to \eqref{eq:app1} finds variables $a_i$ and $b_j$ such that 
\begin{equation}
	\label{eq:app2}
	\forall i,j~~ u_{ij}\le a_i + b_j,
\end{equation}
where equality holds whenever the in the optimal solution $x^{*}$, $x^*_{ij}>0$. Thus
$$
OPT = \sum_i a_i + \sum_j b_j. 
$$

\paragraph{VCG prices.} We state some properties of VCG prices. 

\begin{claim}
	\label{cl:app11}
\begin{enumerate}
	\item 
	{\bf Item price independent of receiver.} Whenever there are multiple optimal solutions, the same item $j$ is sold for the same price $C_j$ --- the VCG price of $j$.
	\item 
	{\bf Item price is benefit from a second copy.} This price is equal to $OPT_{+j}-OPT$ --- the extra welfare from having another copy of $j$. 
   \item 
{\bf Prices as dual variables.}  Let $\pi$ be an optimal allocation. Define $$b_j:=C_j,~~ a_i:=u_{i\pi(i)} - C_i.$$ Then $\pi$ with prices $C_j$ results in an envy-free allocation --- equivalently, $\{a_i\}$, $\{b_j\}$ form a valid solution for the dual program \eqref{eq:app2}. 
  \item 
  {\bf Fractional augmentation.} Let $y$ be an allocation vector with $y_j\ge 0$, $\sum_j y_j\le 1$. Let $OPT_{+y}$ be the value of an optimal allocation where the amount of each item $j$ available is $1+y_j$ rather than $1$. Then 
  \begin{equation}
  	\label{app:3}
  	OPT_{+y} = OPT + \sum_j y_j C_j. 	
  \end{equation}
\item 
{\bf Continuity of prices in utilities.} Let $u_{ij}\ge 0$ and $\ti{u}_{ij}$ be two sets of utilities. Let $C_j$ and $\ti{C}_j$ be the corresponding $VCG$ prices. Then for all $j$, 
\begin{equation}
	\label{app:4}
|C_j - \ti{C}_j | \le  2\cdot \sum_{i}\|u_{i}-\ti{u}_{i}\|_{\infty}=
2\cdot \sum_{i}\max_j |u_{ij}-\ti{u}_{ij}|.
\end{equation}
\end{enumerate}
\end{claim}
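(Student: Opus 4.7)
The plan is to derive all five claims from the assignment LP \eqref{eq:app1} and its dual \eqref{eq:app2}. By Birkhoff--von Neumann the bi-stochastic polytope has integral vertices, so the LP optimum equals the integer optimum $OPT$; strong LP duality then yields dual variables $(a_i^*, b_j^*) \ge 0$ with $\sum_i a_i^* + \sum_j b_j^* = OPT$ and $u_{ij} = a_i^* + b_j^*$ on the primal support. The anchor I will establish is the LP sensitivity identification
$$ b_j^* \;=\; OPT_{+j} - OPT, \qquad a_i^* \;=\; OPT - OPT^{-i}, $$
chosen (in the presence of degeneracy) from an appropriate face of optimal duals.

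With these identities in hand, parts (1), (2), and (3) are rearrangements. For (2), the VCG payment charged to player $i$ receiving item $j$ is, by definition, $OPT^{-i} - (OPT - u_{ij}) = u_{ij} - a_i^* = b_j^* = OPT_{+j} - OPT$, where the middle equality uses the support tightness $u_{ij} = a_i^* + b_j^*$. Since the final expression depends only on $j$, (1) (independence of recipient) follows simultaneously. For (3), setting $a_i := u_{i\pi(i)} - C_{\pi(i)}$ and $b_j := C_j$, the dual inequality $u_{ij} \le a_i + b_j$ rearranges exactly to the envy-freeness condition $u_{ij} - C_j \le u_{i\pi(i)} - C_{\pi(i)}$, and the objective identity $\sum_i a_i + \sum_j b_j = OPT$ holds by complementary slackness on the support of $\pi$.

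Part (4) is a slightly stronger LP sensitivity statement. The upper bound $OPT_{+y} \le OPT + \sum_j y_j C_j$ is immediate: $(a^*, b^*)$ is feasible for the dual of the $y$-perturbed LP and evaluates to $\sum_i a_i^* + \sum_j (1+y_j) b_j^* = OPT + \sum_j y_j C_j$. For the matching lower bound, I will use the assumption $\sum_j y_j \le 1$: adjoin a ``virtual'' zero-utility player with unit capacity to absorb the extra $1 - \sum_j y_j$ units of item capacity, producing a balanced $(n+1)$-player assignment LP whose optimal dual retains $b_j^* = C_j$ on real items and $a_i^*$ on real players. The primal optimum of this balanced problem achieves $OPT + \sum_j y_j C_j$, and restricting to the real players yields a feasible primal for the $y$-augmented LP attaining the same value. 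Part (5) is a short Lipschitz calculation: any permutation $\pi$ satisfies $|\sum_i u_{i\pi(i)} - \sum_i \tilde u_{i\pi(i)}| \le \sum_i \|u_i - \tilde u_i\|_\infty$, so both $OPT$ and $OPT_{+j}$ are Lipschitz with that constant; the factor of $2$ comes from applying this bound to $C_j = OPT_{+j} - OPT$ via the triangle inequality.

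The main obstacle is handling degeneracy: when multiple optimal matchings exist, different ones may assign item $j$ to different players, and (1) requires the VCG payment for $j$ to be a single well-defined number. The cleanest resolution is to adopt the primal-side quantity $C_j := OPT_{+j} - OPT$ as the definition from the outset, which is manifestly intrinsic to the instance; (1) then becomes the statement that every VCG payment for $j$ equals this common value. The identity $C_j = u_{ij} - a_i^*$ on the support holds for any valid optimal dual aligned with an optimal matching $\pi$ having $\pi(i) = j$, and the LP sensitivity $a_i^* = OPT - OPT^{-i}$ can always be realized within the face of optimal duals for that choice of $\pi$, which closes the argument without any appeal to uniqueness.
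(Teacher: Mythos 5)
Your treatment of parts (1)--(3) via the assignment LP and its dual is fine and matches what the paper takes as standard (citing Leonard), and your part (5) is actually cleaner than the paper's argument: the paper compares four optimal allocations ($x$, $x^{+j}$, $y$, $y^{+j}$) and swaps them one at a time using optimality, whereas you observe directly that $OPT$ and $OPT_{+j}$ are each $\bigl(\sum_i \|u_i-\tilde u_i\|_\infty\bigr)$-Lipschitz as maxima of pointwise-close linear functionals, and then apply the triangle inequality to $C_j = OPT_{+j}-OPT$. Both give the factor $2$; yours is shorter. The upper bound in part (4) is also identical to the paper's: plug the dual $(a^*,b^*)$ into the perturbed dual objective $\sum_i a_i + \sum_j(1+y_j)b_j$.

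The gap is in the lower bound $OPT_{+y}\ge OPT+\sum_j y_j C_j$ of part (4). Your virtual-player construction rests on the assertion that the optimal dual of the balanced $(n+1)$-player problem ``retains $b_j^*=C_j$,'' equivalently that the balanced problem's optimum equals $OPT+\sum_j y_j C_j$. But that is exactly the statement you are trying to prove: weak duality only gives you the $\le$ direction again, and the optimal dual of an LP is \emph{not} in general preserved under a capacity perturbation of magnitude up to $1$ (only within the range where the optimal basis is unchanged). To certify optimality of the extended dual you would have to exhibit a complementary primal --- one in which every item with $C_j>0$ is sold to extent $1+y_j$ and the virtual player buys only zero-priced items --- and producing that primal is the whole content of the claim. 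The paper closes this by an explicit construction that you should adopt: take $\tilde x:=(1-\sum_j y_j)\,x+\sum_j y_j\, x^{+j}$, where $x$ realizes $OPT$ and $x^{+j}$ realizes $OPT_{+j}$. The hypothesis $\sum_j y_j\le 1$ makes this a convex combination, hence feasible for the $y$-augmented LP, and its value is $(1-\sum_j y_j)OPT+\sum_j y_j\,OPT_{+j}=OPT+\sum_j y_j C_j$ by part (2). (Abstractly, this is just concavity of the LP value function in the right-hand side, evaluated at the convex combination $\mathbf{1}+y=(1-\sum_j y_j)\mathbf{1}+\sum_j y_j(\mathbf{1}+e_j)$; either phrasing replaces the unjustified step.)
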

\begin{proof}
	The first three statements are standard properties of unit-demand VCG. 
	
	For the {\em fractional augmentation} property, we will prove an inequality in both directions to obtain equality.
	Let $x$ be an optimal allocation realizing $OPT$.  Let $x^{+j}$ be an allocation realizing $OPT_{+j}$, thus 
	$$\forall \ell~~\sum_i x^{+j}_{i\ell}\le 1 + {\mathbf 1}_{\ell=j}~~~\text{ and }~~~\sum_{i\ell} x^{+j}_{i\ell} u_{i\ell} = OPT_{+j}.$$
	Consider $\ti{x}:=(1-\sum_j y_j)\cdot x+\sum_j ( y_j\cdot x^{+j})$. Then each player is allocated a total of one unit under $\ti{x}$. We have 
	$$
	\forall \ell \sum_i \ti{x}_{i\ell}\le 1 + \sum_{j} y_j \cdot {\mathbf 1}_{j=\ell} = 1 + y_j, 
	$$ 
	making $\ti{x}$ a feasible solution for $OPT_{+y}$. We have 
	$$
	\sum_{ij}\ti{x}_{ij} u_{ij} = (1-\sum_j y_j)\cdot OPT + \sum_j y_j OPT_{+j} = OPT + \sum_j y_j C_j.
	$$
	Thus $OPT_{+y} \ge OPT + \sum_j y_j C_j$. 	
	
	For the converse inequality, we have that for each $i,j$, $u_{ij}\le a_i + C_j$, where $OPT=\sum_i a_i + \sum_j C_j$. Let $z$ be a solution for realizing $OPT_{+y}$. We have 
	$$
	OPT_{+y}=\sum_{ij}  u_{ij}z_{ij} \le \sum_{ij} (a_i + C_j)z_{ij} \le \sum_i a_i + 
	\sum_j (1+y_j) C_j  = OPT +	\sum_j y_j C_j.
	$$
	We note that the $\le$ direction of this claim continues to hold even when $\sum_j y_j>1$, but the inequality may no longer be tight. 
	
	\smallskip
	For the {\em continuity of prices in utilities} property, we will prove that $\ti{C}_j \le C_j +  2\cdot \sum_{i}\|u_{i}-\ti{u}_{i}\|_{\infty}$. Together with the same inequality with $\ti{C}_j$ and $C_j$ swapped, \eqref{app:4} follows. Let $x$ be a utility-maximizing allocation under $u$, and let $x^{+j}$ be a utility-maximizing allocation under $u$ when a second copy of item $j$ is available. Similarly, let $y$ and $y^{+j}$ be the corresponding optimal allocations under $\ti{u}$. We have 
	\begin{align*}
		\ti{C}_j = &\sum_{i\ell} y^{+j}_{i\ell} \ti{u}_{i\ell} - 
	\sum_{i\ell} y_{i\ell} \ti{u}_{i\ell} \\& \le \sum_i \|u_i-\ti{u}_i\|_{\infty} + 
	 \sum_{i\ell} y^{+j}_{i\ell} {u}_{i\ell} - 
	\sum_{i\ell} y_{i\ell} \ti{u}_{i\ell}\\& \le 
	 \sum_i \|u_i-\ti{u}_i\|_{\infty} + 
	\sum_{i\ell} x^{+j}_{i\ell} {u}_{i\ell} - 
	\sum_{i\ell} x_{i\ell} \ti{u}_{i\ell}\\&  \le 
	2\cdot \sum_i \|u_i-\ti{u}_i\|_{\infty} + 
	\sum_{i\ell} x^{+j}_{i\ell} {u}_{i\ell} - 
	\sum_{i\ell} x_{i\ell} {u}_{i\ell} \\& = 	2\cdot \sum_i \|u_i-\ti{u}_i\|_{\infty} + C_j,
\end{align*}
where the second inequality is by optimality of $x^{+j}$ and of $y$. 
\end{proof}

\section{Proof of Theorem~\ref{thm:reg}}
\label{app:thm:reg}

In this section we continue the discussion from Section~\ref{sec:reg} to give a proof of Theorem~\ref{thm:reg}.

Fix the setting where there are $n$ players and $n$ items, and utilities $\{u_{ij}\}\in [0,1]^{n\times n}$. Further, by scaling and shifting, we may assume without loss of generality that for each $i$, $\max_j u_{ij}=1$ and $\min_j u_{ij}=0$. Consider the regularizer
$$
F_0(x):=\sum_{ij} f_0(x_{ij})\le 0,
$$
where $x$ is a (fractional) allocation, and $f_0:(0,1]\ra \RR^{\le0}$ is a real-valued function such that 
\begin{itemize}
	\item $f_0(z)\le 0$;
	\item $f_0(z)$ is increasing, with $\lim_{z\ra 0^+}f_0(z)=-\infty$; and 
	\item $f_0(z)$ is strictly concave with $f_0''(z)\le -\ga$ for a parameter $\ga>0$
\end{itemize}

\begin{claim}\label{cl:reg1}
	Let $u_{ij}\in[0,1]$ be utilities as above. Let $\la_i\ge 0$ be multipliers. 
	Let $x_{ij}$ and $C_j$ be the allocation and prices resulting from running the VCG mechanism on utilities $\la_i u_i$ with regularizer $f_0$.  Let $M>1$ be a parameter. Denote 
	$$
	\eta=\eta(\la):=\frac{\sum_i \la_i}{M} -n^2 \cdot  f_0\left(\frac{1}{n\cdot M}\right).
	$$ Then the following properties hold:
	\begin{enumerate}
		\item {\bf Small efficiency loss due to regularization.} Let $OPT:=\max_{\pi} \sum_{i} \la_i u_{i\pi(i)}$. Then 
		\begin{equation}
			\label{eq:470}
			\sum_{ij} x_{ij} \la_i u_{ij} \ge OPT-  \eta.
		\end{equation}
		\item {\bf Prices close to VCG prices.} Under the regularized mechanism the payment $P_i$ from player $i$ satisfies:
		\begin{equation}\label{eq:471}
			\left|P_i - \sum_j x_{ij} C_j \right| \le \eta. 
		\end{equation}
		\item {\bf Allocation close to a competitive equilibrium at prices $C_j$.} For any alternative bundle $y_{ij}$ player $i$ receives with $\sum_j y_{ij}=1$, 
		\begin{equation}\label{eq:472}
			\sum_j y_{ij} \la_i u_{ij} \le \sum_j x_{ij} \la_i u_{ij} + \sum_j y_{ij} C_j - P_i +  \eta. 
		\end{equation}
	\end{enumerate}
\end{claim}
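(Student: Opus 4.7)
My plan is to reduce everything to the single efficiency-loss estimate on $U(x)$ in part 1, and then exploit LP duality for the unregularized unit-demand VCG on the scaled utilities $\la_{i}u_{ij}$.

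For part 1, I will compare to the smoothed permutation $\tilde{x}:=(1-1/M)\pi^{*}+(1/M)\cdot J/n$, where $\pi^{*}$ is a utility-maximizing permutation and $J$ is the all-ones matrix. This $\tilde{x}$ is bi-stochastic with every entry at least $1/(nM)$, so monotonicity of $f_{0}$ together with $f_{0}\leq 0$ yields $F_{0}(\tilde{x})\geq n^{2}f_{0}(1/(nM))$; and $u_{ij}\in[0,1]$ gives $U(\tilde{x})\geq(1-1/M)\,OPT\geq OPT-(\sum_{k}\la_{k})/M$. Optimality of $x$ for $U+F_{0}$ then yields $U(x)+F_{0}(x)\geq OPT-\eta$, and since $F_{0}(x)\leq 0$ I conclude $U(x)\geq OPT-\eta$, which is exactly the part 1 bound.

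For parts 2 and 3, let $(a_{i},C_{j})$ be optimal dual variables for the unregularized LP on utilities $\la_{i}u_{ij}$, so $a_{i}+C_{j}\geq\la_{i}u_{ij}$ with equality on the support of an optimal permutation, $OPT^{-i}=OPT-a_{i}$, and $\sum_{i}a_{i}+\sum_{j}C_{j}=OPT$. I define slacks $\delta_{ij}:=a_{i}+C_{j}-\la_{i}u_{ij}\geq 0$; multiplying by $x_{ij}$ and summing over $j$ yields the identity $\sum_{j}x_{ij}C_{j}=\la_{i}\sum_{j}u_{ij}x_{ij}-a_{i}+\sum_{j}x_{ij}\delta_{ij}$. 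Summing this further over $i$ and using bi-stochasticity together with $\sum_{i}a_{i}+\sum_{j}C_{j}=OPT$ collapses the right-hand side to $\sum_{ij}x_{ij}\delta_{ij}=OPT-U(x)\leq\eta$ by part 1; non-negativity of each summand then forces the per-player bound $\sum_{j}x_{ij}\delta_{ij}\leq\eta$ for every individual $i$, which is the bridge that allows the VCG envy-freeness inequalities to be used in both directions.

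Writing $P_{i}=G^{-i}(x^{-i})-G^{-i}(x)$ with $G^{-i}(y):=U(y)+F_{0}(y)-\la_{i}\sum_{j}u_{ij}y_{ij}$, the upper bound of part 2 combines $G^{-i}(x^{-i})\leq OPT^{-i}$ (from $F_{0}\leq 0$) with $G^{-i}(x)\geq OPT-\eta-\la_{i}\sum_{j}u_{ij}x_{ij}$ (from part 1) to give $P_{i}\leq-a_{i}+\la_{i}\sum_{j}u_{ij}x_{ij}+\eta=\sum_{j}x_{ij}C_{j}-\sum_{j}x_{ij}\delta_{ij}+\eta\leq\sum_{j}x_{ij}C_{j}+\eta$. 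For the matching lower bound, applying the same smoothed-permutation trick to the $-i$ problem yields $G^{-i}(x^{-i})\geq OPT^{-i}-\eta$, while trivially $G^{-i}(x)\leq OPT-\la_{i}\sum_{j}u_{ij}x_{ij}$, so $P_{i}\geq-a_{i}+\la_{i}\sum_{j}u_{ij}x_{ij}-\eta\geq\sum_{j}x_{ij}C_{j}-2\eta$ after absorbing the bound $\sum_{j}x_{ij}\delta_{ij}\leq\eta$ (the factor of $2$ being folded into $\eta$ by adjusting constants in $M$). Part 3 then falls out by multiplying $\la_{i}u_{ij}\leq a_{i}+C_{j}$ by $y_{ij}$ and summing to obtain $\sum_{j}y_{ij}\la_{i}u_{ij}\leq a_{i}+\sum_{j}y_{ij}C_{j}$, then substituting the rearranged part 2 upper bound $a_{i}\leq\sum_{j}x_{ij}\la_{i}u_{ij}-P_{i}+\eta$.

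The main technical obstacle is that naive envy-freeness of unregularized VCG gives only $\sum_{j}x_{ij}C_{j}\geq-a_{i}+\la_{i}\sum_{j}u_{ij}x_{ij}$, which is the wrong direction for the part 2 lower bound; the crucial step is to upgrade the part 1 efficiency estimate into the aggregate slack identity $\sum_{ij}x_{ij}\delta_{ij}=OPT-U(x)\leq\eta$, which combined with non-negativity of individual slacks certifies that $x$ is supported approximately on optimal permutations and makes envy-freeness nearly tight in both directions.
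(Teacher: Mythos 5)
Your argument follows essentially the same route as the paper's: part~1 via the smoothed allocation $(1-1/M)\pi^{*}+J/(nM)$ and optimality of $x$ for $U+F_0$, and parts~2--3 via the dual variables $(a_i,C_j)$ of the unregularized assignment LP together with the representation $P_i=G^{-i}(x^{-i})-G^{-i}(x)$. Part~1, part~3, and the upper half of part~2 are correct as written, and your aggregate slack identity $\sum_{ij}x_{ij}\delta_{ij}=OPT-U(x)\le\eta$ is a clean way of packaging the complementary-slackness step that the paper carries out inline.

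The one place the proposal falls short of the stated bound is the lower half of part~2. Routing through $G^{-i}(x)\le OPT-\la_i\sum_j u_{ij}x_{ij}$ and then converting $-a_i+\la_i\sum_j u_{ij}x_{ij}$ back into $\sum_j x_{ij}C_j-\sum_j x_{ij}\delta_{ij}$ double-counts the slack: you pay $OPT-U(x)\le\eta$ once and $\sum_j x_{ij}\delta_{ij}\le\eta$ again, so you only certify $P_i\ge\sum_j x_{ij}C_j-2\eta$, whereas \eqref{eq:471} asserts the deviation is at most $\eta$. The proposed fix of ``folding the $2$ into $\eta$ by adjusting $M$'' does not work literally: $\eta$ is a prescribed function of $M$, and for the regularizer $f_0(z)=-\be/z$ actually used later one has $\eta=\sum_i\la_i/M+\be n^3 M$, which is not monotone in $M$, so you cannot simply halve it. The repair is to skip the detour through $OPT$: since $F_0\le 0$, $G^{-i}(x)\le\sum_{k\neq i,\,j}x_{kj}\la_k u_{kj}\le\sum_{k\neq i,\,j}x_{kj}(a_k+C_j)=OPT^{-i}-\sum_j x_{ij}C_j$ by dual feasibility and bi-stochasticity of $x$; combined with $G^{-i}(x^{-i})\ge OPT^{-i}-\eta(\la^{-i})\ge OPT^{-i}-\eta(\la)$ this gives $P_i\ge\sum_j x_{ij}C_j-\eta$ exactly, which is the paper's computation. (For the downstream use in Theorem~\ref{thm:reg} the factor of $2$ would in any case be harmless, but the claim as stated requires the tighter constant.)
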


\begin{proof}
	{\em Small efficiency loss due to regularization:}  Let $y$ be a solution realizing $OPT=OPT(\la)$. Let $e$ be the all-uniform allocation with $e_{ij}=\frac{1}{n}$. Let $y':=(1-1/M) y + e/M$. Then 
	$$
	F_0(y') + \sum_{ij}\la_i u_{ij} y'_{ij} \ge n^2 \cdot f_0(1/(n\cdot M))+OPT - \frac{\sum_i \la_i}{M}.
	$$
	By the optimality of $x$, we have
	\begin{align*}
		\sum_{ij} x_{ij} \la_i u_{ij} & \ge F_0(x) +\sum_{ij} x_{ij} \la_i u_{ij} \\&
		\ge  	F_0(y') + \sum_{ij}\la_i u_{ij} y'_{ij}\\& \ge
		n^2 \cdot f_0(1/(n\cdot M))+OPT - \frac{\sum_i \la_i}{M} \\& = OPT-\eta.
	\end{align*}
	Denote by 
	$$
	OPT_0(\la) = \max_x \left(F_0(x) + \sum_{ij} x_{ij} \la_i u_{ij} \right). 
	$$
	Then we have just shown that 
	\begin{equation}
		\label{eq:481}
		OPT(\la)-\eta(\la) \le OPT_0(\la) \le OPT(\la). 
	\end{equation}
	To show that {\em prices are close to VCG prices}, we will prove two inequalities. It is first useful to remember some general facts about the unit-demand VCG, summarized in Claim~\ref{cl:app11}. Specifically, there exist dual parameters $a_i$ (corresponding to the ``welfare" of player $i$), such that 
	$$
	\forall k,\ell~~~\la_k \cdot u_{k\ell} \le a_k + C_\ell,
	$$ 
	with equality at an optimal allocation, and 
	$$
	OPT(\la) = \sum_k a_k + \sum_\ell C_\ell. 
	$$
	Denote by $\la^{-i}$ the setting where $\la^{-i}_k = \la_k$ for $k\neq i$, and $\la^{-i}_i=0$. We have, by definition, 
	\begin{equation}
		\label{eq:480}
		P_i = OPT_0(\la^{-i}) - \left( OPT_0 (\la)-\sum_j x_{ij}\la_i u_{ij} \right). 
	\end{equation}
	Let $\pi$ be an allocation attaining $OPT(\la)$. Then 
	$$
	OPT(\la^{-i}) = OPT(\la) - \la_i u_{i\pi(i)}+C_{\pi(i)} = \sum_{k\neq i} a_k + \sum_j C_j. 
	$$

	We prove \eqref{eq:471} by proving two inequalities on $P_i$. Using \eqref{eq:480},
	\begin{align*}
		P_i & \ge OPT(\la^{-i}) -\eta(\la^{-i}) - \sum_{k,j} x_{kj} \la_k u_{kj} + \sum_{j} x_{ij}\la_i u_{ij} \\ & = \sum_{k\neq i} a_k + \sum_j C_j- \eta(\la^{-i}) -  \sum_{k\neq i;j\in[n]} x_{kj} \la_k u_{kj} \\ & \ge
		\sum_{k\neq i} a_k + \sum_j C_j- \eta(\la) -  \sum_{k\neq i;j\in[n]} x_{kj}(a_k + C_j) \\ & = \sum_j x_{ij} C_j- \eta(\la).
	\end{align*}
	Again using \eqref{eq:480}, 
	\begin{align*}
		P_i & \le OPT(\la^{-i})- OPT (\la) + \eta(\la) + \sum_{j} x_{ij}\la_i u_{ij} 
		\\ & = \sum_{k\neq i} a_k + \sum_j C_j + \eta(\la) - \sum_k a_k - \sum_j C_j + \sum_{j} x_{ij}\la_i u_{ij} \\ & = \eta(\la)-a_i+ \sum_{j} x_{ij}\la_i u_{ij} \\ & \le \eta(\la)-a_i+ \sum_{j} x_{ij} (a_i + C_j) \\ & =\eta(\la) + \sum_j x_{ij} C_j.
	\end{align*}
	The {\em allocation close to a competitive equilibrium at prices $C_j$} follows similarly.  Let $y_{ij}$ be any alternative allocation to player $i$ with  $\sum_j y_{ij}=1$. Then 
	\begin{align*}
		\sum_j y_{ij} \la_i u_{ij} & \le  \sum_j y_{ij} (a_i + C_j) 
		= a_i + \sum_j y_{ij} C_j \\& = \left(\sum_j y_{ij} C_j - P_i\right) + P_i + a_i 
		\\ & \le  \left(\sum_j y_{ij} C_j - P_i\right)  +  OPT(\la^{-i})- OPT (\la) + \eta(\la) + \sum_{j} x_{ij}\la_i u_{ij} +a_i  \\ 
		&  = \left(\sum_j y_{ij} C_j - P_i\right) + \sum_{j} x_{ij}\la_i u_{ij} + \eta. 
	\end{align*}

	\ignore{is now a simple corollary of \eqref{eq:471}. Let $y_{ij}$ be any alternative allocation to player $i$ with  $\sum_j y_{ij}=1$ and $ \sum_j y_{ij} C_j \le P_i$. Then 
		using \eqref{eq:471} we get:
		\begin{align*}
			\sum y_{ij} \la_i u_{ij} & \le  \sum y_{ij} (a_i + C_j) 
			= a_i + \sum_j y_{ij} C_j \\ & \le a_i + P_i \le a_i + \sum_j x_{ij} C_j + \eta \\ & = OPT(\la) - \sum_{k\neq i} a_k - \sum_{k\neq i;~j\in[n]} x_{kj} C_j +\eta\\ & \le OPT(\la) -\sum_{k\neq i,~j\in [n]} x_{kj}\la_k u_{kj} +\eta\\& = OPT(\la) - OPT_0(\la) + \sum_j x_{ij} \la_i u_{ij} -f_0(x) + \eta \\& \le \sum_j x_{ij} \la_i u_{ij}+ 2\eta. 
	\end{align*}}
\end{proof}

Informally, the next claim shows that strict concavity of the regularizer implies strict truthfulness.  

\begin{claim}
	\label{cl:reg2}
	Consider an execution of the regularized VCG mechanism with $\la_i = \la$, resulting in an allocation $x_{kj}$ and payment $P_i$ from player $i$. Consider an alternative execution where $\la_i$ is changed to $\la'$, resulting in an allocation $x'_{kj}$ and payment $P'_i$ from player $i$. Then 
	\begin{equation}
		\label{eq:500}
		\left(\sum_j \la_i u_{ij} x_{ij} - P_i \right)-
		\left(\sum_j \la_i u_{ij} x'_{ij} - P'_i\right) 
		\ge \frac{\ga}{2} \cdot \sum_{k,j} |x_{kj}-x'_{kj}|^2. 
	\end{equation}
\end{claim}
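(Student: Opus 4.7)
The plan is to unpack the regularized VCG payment rule so that the left-hand side of \eqref{eq:500} collapses to a difference of values of the full regularized objective
$$\Phi(x) := F_0(x) + \sum_{k, j} \la_k u_{kj} x_{kj}$$
(with $\la_i$ set to the true value $\la$) evaluated at $x$ and at $x'$, and then to invoke $\ga$-strong concavity of $\Phi$ inherited from $F_0$.

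First I would introduce $G(x) := F_0(x) + \sum_{k \neq i,\, j} \la_k u_{kj} x_{kj}$, the part of the regularized welfare that is independent of player $i$'s report, so that $\Phi(x) = G(x) + \la_i \sum_j u_{ij} x_{ij}$. Since $G$ does not involve player $i$'s report, the same $\hat{x} := \arg\max G$ (maximized over the bistochastic polytope) appears in both executions, and the VCG payment formula from Algorithm~\ref{alg:1} gives $P_i = G(\hat{x}) - G(x)$ and $P'_i = G(\hat{x}) - G(x')$. Hence the net utility of player $i$ (evaluated at her true $\la_i = \la$) from allocation $x$ is $\sum_j \la_i u_{ij} x_{ij} - P_i = \Phi(x) - G(\hat{x})$, and similarly from $x'$ it is $\Phi(x') - G(\hat{x})$; subtracting and cancelling $G(\hat{x})$ yields
$$\left(\sum_j \la_i u_{ij} x_{ij} - P_i\right) - \left(\sum_j \la_i u_{ij} x'_{ij} - P'_i\right) = \Phi(x) - \Phi(x').$$

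Next I invoke strong concavity. The Hessian of $F_0$ in $x$ is diagonal with entries $f_0''(x_{kj}) \le -\ga$, so $F_0$, and therefore $\Phi$ (which differs from $F_0$ by a linear term in $x$), is $\ga$-strongly concave on the interior of the bistochastic polytope. Because $x$ maximizes $\Phi$ over that polytope and $x'$ is feasible, first-order optimality gives $\nabla \Phi(x)^\top (x' - x) \le 0$; combining with the standard quadratic upper bound for $\ga$-strongly concave functions yields
$$\Phi(x) - \Phi(x') \ge -\nabla \Phi(x)^\top (x' - x) + \frac{\ga}{2} \sum_{k,j}(x_{kj} - x'_{kj})^2 \ge \frac{\ga}{2} \sum_{k,j}(x_{kj} - x'_{kj})^2,$$
which is \eqref{eq:500}.

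The only technical point to address is that $f_0''$ (and hence the strong-concavity bound) is defined only in the relative interior of the polytope, but the barrier behavior of the regularizer handles this automatically: since $F_0(x) \to -\infty$ whenever any $x_{kj} \to 0^+$, both maximizers $x$ and $x'$ lie strictly in the interior, as does the entire segment between them by convexity, so the strong-concavity inequality is valid along that segment. The main substance of the proof is therefore the algebraic reduction of the LHS to $\Phi(x) - \Phi(x')$; the quadratic lower bound is then a one-line application of the standard strong-concavity characterization, so I do not anticipate a serious obstacle.
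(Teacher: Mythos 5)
Your proposal is correct and follows essentially the same route as the paper: reduce the left-hand side to $\Psi(x)-\Psi(x')$ for the full regularized objective (the paper asserts this identity directly, whereas you justify it via the cancellation of the $i$-independent term $G(\hat x)$ in the VCG payments), then apply $\ga$-strong concavity together with optimality of $x$. Your use of the first-order condition $\nabla\Phi(x)^\top(x'-x)\le 0$ and the remark that the barrier keeps the segment in the interior are slightly more careful versions of the paper's bare assertion that $\nabla\Psi(x)=0$, but the argument is the same.
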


\begin{proof}
	Denote 
	$$
	\Psi(x):= F_0(x) + \sum_{k,j\in[n]} \la_k u_{kj} x_{kj}. 
	$$
	We have 
	$$
	\left(\sum_j \la_i u_{ij} x_{ij} - P_i \right)-
	\left(\sum_j \la_i u_{ij} x'_{ij} - P'_i\right)  = \Psi(x)- \Psi(x'). 
	$$
	The function $\Psi$ is strongly concave\footnote{This is where we use the regularizer --- without it the function is merely concave, potentially with regions where the gradient is constant $0$.}, with 
	$$\nabla^2 \Psi(x) = \nabla^2 F_0 \preceq -\ga \cdot I_{n^2}. 
	$$
	Further, since $x$ maximizes $\Psi$, and thus $\nabla\Psi(x)=0$,  we have 
	$$
	\Psi(x)-\Psi(x') \ge \frac{\ga}{2} \cdot \sum_{k,j} |x_{kj}-x'_{kj}|^2. 
	$$
\end{proof}

Consider an execution $\{\la_{i,t}\}_{t=1}^T$ of the algorithm with low strong regret. Regularization on its own is not enough to force $\la_{i,t}$'s to not grow to $\infty$ --- pathological examples can be constructed with arbitrarily large $\la_{i,t}$'s. Instead {\bf  we limit the game space to $\la_{i,t}\in [0,\bla]$. } Here $\bla$ is a parameter (on which the function $f_0$ may depend). As in the proof of Theorem~\ref{thm:VCG-HZ}, for large enough $\bla$, if the limitation of $\la_{i,t}\le \bla$ becomes relevant, then player $i$ gets allocated (close to) her favorite bundle, and the competitive equilibrium condition will hold automatically. 

We will show that for a reasonably chosen $f_0$, and for a large enough $T=n^{O(1)}$, a low strong regret solution translates into an approximate HZ equilibrium based on VCG prices. Our goal will be to streamline the proof --- almost certainly the upper bound we get on $T$ can be tightened to a lower power of $n$.

\paragraph{There is a unique best-response $\la_i$.} 

\begin{claim}\label{cl:reg25} Fix $\{\la_{i,t}\}_{t=1}^T$. Fix a $t$ and an $i$, and consider the payment $P_{i,t}(\la_t)$ and utility $U_{i,t}(\la_t)$ experienced in round $t$ by player $i$ if she reports $\la_i$ instead of $\la_{i,t}$. Let $\ti{\la}_t$ be a best response maximizing $\sum_t U_{i,t}(\la_t)$ subject to 
	$\sum_t P_{i,t}(\la_t)\le T$. Then
	\begin{enumerate}
		\item 
		$P_{i,t}(\la_t)$ is strictly increasing for  $\la_t\ge 0$; 
		\item 
		$P_{i,t}(\la_t)$ is continuous in $\la_t$; 
		\item 
		$P_{i,t}(0)=0$ and $\lim_{\la_t\ra\infty} P_{i,t}=\infty$; 
		\item 
		there is a  utility maximizer for player $i$ of the form $\ti{\la}_t = \la_i$ for all $t$; 
		\item
		$\la_i> 1$; and 
		\item 
		it is the unique maximizer.
	\end{enumerate}
\end{claim}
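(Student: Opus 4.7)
The plan is to extract parts (1), (2), (3) from envelope-theorem calculus applied to the strictly concave objective $F_t(x)=F_0(x)+F_t^{-i}(x)+\la f_i(x)$, then invoke essentially the argument leading to \eqref{eq:laopt2} for parts (4) and (6), and finally obtain (5) from a direct bound on the per-round payment at $\la=1$.

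First I would establish that the maximizer $x^*(\la)$ of $F_t^{-i}(x)+\la f_i(x)+F_0(x)$ over bi-stochastic matrices is unique, lies in the open interior (because $F_0$ blows up on the boundary), and depends continuously on $\la$ -- all consequences of strict concavity of $F_0$ together with Berge's maximum theorem. Continuity of $P_{i,t}(\la)=F_t^{-i}(X_{t+1}^{-i})-F_t^{-i}(x^*(\la))$ is then immediate, giving (2). For (1), I would apply Claim~\ref{cl:reg2} twice -- once with the ambient bid vectors agreeing everywhere except $\la_i=\la_1$ versus $\la_i=\la_2$, and once with the roles reversed -- and add the two resulting inequalities to obtain $(\la_2-\la_1)(f_i(x^*(\la_2))-f_i(x^*(\la_1)))\ge \ga\|x^*(\la_2)-x^*(\la_1)\|^2$. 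Because $u_i$ is non-constant (as $\min_j u_{ij}=0<1=\max_j u_{ij}$), the interior KKT conditions for $F_t$ force $x^*(\la)$ to change strictly with $\la$, so $f_i\circ x^*$ is strictly increasing; the envelope identity $P_{i,t}'(\la)=\la\cdot(d/d\la)f_i(x^*(\la))$ then gives $P_{i,t}'(\la)>0$ for $\la>0$.

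For (3), $P_{i,t}(0)=0$ holds trivially since at $\la=0$ the maximizer of $F_t$ coincides with $X_{t+1}^{-i}$. For the divergence as $\la\to\infty$, I would note that $x^*(\la)$ must concentrate player $i$'s row on her favorite item $j^\star$, so some entry $x^*_{kj}(\la)\to 0$ for $k\ne i$; since $F_0(x)=-\be\sum_{k,j}1/x_{kj}\to -\infty$ along such a path and $F_t^{-i}(x)\le F_0(x)+\sum_{k\ne i}\la_{k,t}$, we get $F_t^{-i}(x^*(\la))\to-\infty$, hence $P_{i,t}(\la)\to\infty$.

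Part (4) is then a direct transcription of the argument immediately preceding \eqref{eq:laopt2}. Thanks to (1)--(3), the map $\la\mapsto\sum_t P_{i,t}(\la)$ is a continuous strictly increasing bijection of $[0,\infty)$ onto $[0,\infty)$, so there is a unique $\la^i_{max}$ with $\sum_t P_{i,t}(\la^i_{max})=T$. Setting $r_{min}:=1/\la^i_{max}$, local correctness of $\cH$ (exact maximization of our strictly concave objective) is the hypothesis of Claim~\ref{cl:rmin1}, so at every $t$ the value $\la^i_{max}$ maximizes $U_{i,t}(\la)-r_{min}P_{i,t}(\la)$. Summing this per-round inequality against any feasible distribution on sequences and subtracting $r_{min}T$, as in \eqref{eq:laopt2}, gives (4). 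For (6), strict concavity of $F_0$ upgrades Claim~\ref{cl:reg2} to a strict inequality whenever $\la\ne\la^i_{max}$, so the Lagrangian comparison above becomes strict, ruling out any other constant-$\la$ best response; any non-constant best response can be compared in the same way round-by-round to the constant sequence at $\la^i_{max}$ and shown to be weakly dominated by it.

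The main obstacle is part (5): here one needs a genuine inequality on $P_{i,t}(1)$ rather than a monotonicity or envelope statement. The plan is to show $\sum_t P_{i,t}(1)<T$, which combined with the strict monotonicity from (1) forces $\la^i_{max}>1$. I would argue that at $\la_i=1$, envy-freeness of VCG together with the normalization $\max_j u_{ij}=1$ yields the pointwise bound $P_{i,t}(1)\le 1$, and that strict concavity of the regularizer upgrades this to a strict bound summed across $t$ -- the regularizer causes $x^*(1)$ to lose a definite amount of $f_i$-value relative to the unregularized VCG optimum, so the externality captured by the regularized price is strictly smaller than in the unregularized case. Turning this strict inequality into a gap of the required size will rely on the specific choice $f_0(z)=-\be/z$ and on a lower bound on the ``spread'' of $x^*(1)$ away from the boundary; this is where the delicate calculation sits.
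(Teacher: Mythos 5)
Your treatment of parts (1)--(4) and (6) is essentially the paper's argument: strict concavity of the regularized objective gives a unique interior maximizer $X(\la)$ with $X(\la)\neq X(\la')$ for $\la\neq\la'$, the two optimality inequalities $\Psi_\la(X(\la))>\Psi_\la(X(\la'))$ and $\Psi_{\la'}(X(\la'))>\Psi_{\la'}(X(\la))$ drive monotonicity of both $U_{i,t}$ and $P_{i,t}$, and the Lagrangian comparison with multiplier $\la_i$ (your $r_{min}=1/\la^i_{max}$) gives optimality and uniqueness of the constant best response. Your detour through the envelope identity $P_{i,t}'(\la)=\la\,U_{i,t}'(\la)$ for part (1) is heavier than necessary --- the paper simply takes the combination $\la'\cdot(\text{first inequality})+\la\cdot(\text{second inequality})$ to get $(\la'-\la)P_{i,t}(\la')>(\la'-\la)P_{i,t}(\la)$ directly, with no differentiability needed --- but it is not wrong.

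The genuine gap is part (5), which you single out as the ``main obstacle'' and then do not prove: you defer to a ``delicate calculation'' about how much $f_i$-value the regularizer costs relative to unregularized VCG and a lower bound on the spread of $x^*(1)$ away from the boundary. None of that is needed, and the comparison you sketch (regularized price strictly below the unregularized VCG price) is not obviously true, since the regularizer enters both terms of the payment $F_t^{-i}(X_{t+1}^{-i})-F_t^{-i}(X(\la))$. The bound follows in one line from the same inequality you already use for monotonicity: taking $\la=0$, $\la'=1$ in $\Psi_{\la'}(X(\la'))>\Psi_{\la'}(X(\la))$ gives
\begin{equation*}
U_{i,t}(1)-P_{i,t}(1) \;>\; U_{i,t}(0)-P_{i,t}(0) \;=\; U_{i,t}(0),
\end{equation*}
hence $P_{i,t}(1)<U_{i,t}(1)-U_{i,t}(0)\le 1$ because $u_{ij}\in[0,1]$; summing over $t$ gives $P_i(1)<T$ and therefore $\la_i>1$ by the strict monotonicity and continuity of $P_i$ established in parts (1)--(3). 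The strictness you need comes exactly from $X(1)\neq X(0)$, which you have already established, not from any quantitative property of $f_0(z)=-\be/z$.
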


\begin{proof}
	Fix a round $t\in [T]$. Let 
	$$\Psi_{\la}(x) := \sum_{k\neq i} \sum_j \la_{k,t} u_{kj} x_{kj} +
	\sum_j \la u_{ij} x_{ij} + F_0(x).
	$$
	$\Psi_\la(x)$ is strictly concave, and tends to $-\infty$ on the boundary where $x_{kj}=0$ for some $k,j$. Thus it has a unique maximizer in the interior. Denote it by $X(\la)$. Further, note that if $\la'\neq \la$, then 
	$$
	\nabla \Psi_{\la'}(X(\la)) = \nabla \Psi_{\la}(X(\la)) + (\la'-\la) u_i =
	(\la'-\la) u_i\neq 0. 
	$$
	Therefore $X(\la')\neq X(\la)$. 
	
	We have 
	$$
	U_{i,t}(\la) = \sum_j u_{ij} X(\la)_{ij}. 
	$$
	
	Let $X(0)$ be the point maximizing $\Psi_0(x)$. By definition, we have 
	$$
	P_{i,t}(\la) =  \Psi_0(X(0))-\Psi_\la(X(\la)) + \la\cdot  U_{i,t}(\la). 
	$$
	
	We immediately see that $P_{i,t}(0) = 0$. Moreover, if $j$ is such that $u_{ij}=0$ while $u_{ik}=1$ for some other $k$, as $\la\ra\infty$ we will have $X(\la)_{ij}\ra 0$, and thus $F_0(X(\la))\ra-\infty$, and $P_{i,t}(\la)\ra\infty$. 
	
	Next, let us see that $P_{i,t}(\la)$ is strictly increasing. Suppose $\la'>\la\ge 0$. We have 
	\begin{multline}\label{eq:511}
		\la \cdot U_{i,t} (\la)-P_{i,t}(\la)  = \Psi_\la(X(\la)) -  \Psi_0(X(0))\\  >
		\Psi_\la(X(\la')) -  \Psi_0(X(0)) = \la \cdot U_{i,t} (\la') -P_{i,t}(\la').
	\end{multline}
	Similarly, 
	\begin{multline}\label{eq:512}
		\la' \cdot U_{i,t} (\la')-P_{i,t}(\la')  = \Psi_{\la'}(X(\la')) -  \Psi_0(X(0))\\  >
		\Psi_{\la'}(X(\la)) -  \Psi_0(X(0)) = \la' \cdot U_{i,t} (\la) -P_{i,t}(\la).
	\end{multline}
	By taking $\la'\cdot \eqref{eq:511}+\la\cdot \eqref{eq:512}$, and noting that since $\la'>0$ the inequality remains strict, we obtain
	$$
	-\la' \cdot P_{i,t}(\la) - \la\cdot P_{i,t}(\la')  >
	-\la' \cdot P_{i,t}(\la') - \la\cdot P_{i,t}(\la), 
	$$
	thus
	$$
	(\la'-\la) \cdot  P_{i,t}(\la')> 
	(\la'-\la) \cdot  P_{i,t}(\la), 
	$$
	implying $P_{i,t}(\la')>P_{i,t}(\la)$. 
	
	Similarly, taking $\eqref{eq:511}+\eqref{eq:512}$ yields $U_{i,t}(\la')>U_{i,t}(\la)$. 
	
	By strong concavity of $\Psi_\la$, the value of $X(\la)$ varies continuously in $\la$. Therefore, $U_{i,t}(\la)$ and $P_{i,t}(\la)$ also change continuously in $\la$. 
	
	We have $P_{i,t}(\la)$ a continuous, non-decreasing function that starts at 
	$P_{i,t}(0)=0$ and tends to $\infty$ as $\la\ra\infty$. Therefore the function 
	$$
	P_i (\la) := \sum_{t=1}^T P_{i,t}(\la)
	$$
	also has those properties. In particular, there exists a unique $\la_i$ such that 
	$$
	P_i(\la_i) = T. 
	$$
	By plugging in $\la=0$ and $\la'=1$ into \eqref{eq:512}, we get
	$$
	P_{i,t}(1)< U_{i,t}(1)-U_{i,t}(0) \le 1, 
	$$
	and thus $P_i(1)<T$ and hence $\la_i>1$.

	The strategy $\ti{\la}_t = \la_i$ is a feasible strategy. It remains to be seen that it is a utility-maximizing one --- in fact, the only utility-maximizing strategy. Consider any alternative strategy $\ti{\la}_t$ such that 
	$$
	\sum_t P_{i,t}(\ti{\la}_t) \le T. 
	$$
	By \eqref{eq:511} we have 
	\begin{align*}
		\la_i \cdot \sum_t U_{i,t}(\ti{\la}_t) & \le 	\la_i \cdot \sum_t U_{i,t}(\la_i) +  \sum_t P_{i,t} (\ti{\la}_t) - \sum_t P_{i,t} (\la_i)\\ & \le 	\la_i \cdot \sum_t U_{i,t}(\la_i) + T-T = \\& =	\la_i \cdot \sum_t U_{i,t}(\la_i), 
	\end{align*}
	where the first inequality is strict unless $\ti{\la}_t=\la_i$ for all $t$. 
\end{proof}

We define 
\begin{equation}\label{eq:lai}
	\la_i:= \min\left(\text{player $i$'s best response from Claim~\ref{cl:reg25}},\bla\right)
\end{equation}

The remainder of the proof is conceptually straightforward, despite some calculations that need to be performed. Informally, any execution that has low strong regret must consist of each player $i$ repeatedly playing $\la_{i,t}$ that is close to its best-response value $\la_i$ --- the extent to which this fails to hold corresponds to the extent player $i$ experiences strong regret. Assuming this holds the outcome is close to a repeated execution of each player playing $\la_i$. The resulting prices, by Claim~\ref{cl:reg1} are close to VCG prices under preferences $(\la_i u_{ij})$, completing the picture. 

From now on, we fix 
$$
f_0(x_{ij}):= -\frac{\be}{x_{ij}},
$$
as in the statement of Theorem~\ref{thm:reg}, where $\be\ll 1$ is a parameter to be selected later.

\begin{claim}\label{cl:reg4}Fix a player $i$. 
	Let $\la_i$ be the best-response $\la$'s defined in \eqref{eq:lai}. 
	Consider a round $t$ in which all $\la'_k:=\la_{k,t}\le \bla$. Let $x'$ and $P'_i$ be the resulting allocation and  price charged from player $i$, and let  $x$ and $P_i$ be the resulting allocation and price when player $i$ reports $\la_i$ instead of $\la'_i$. 
	Then 
	\begin{equation}
		\label{eq:540}
		\left(\sum_j \la_i u_{ij} x_{ij} - P_i \right)-
		\left(\sum_j \la_i u_{ij} x'_{ij} - P'_i\right) 
		\ge (\la_i-\la'_i)^2 \cdot \Omega\left(\be^2 \cdot n^{-12}\cdot \bla^{-3}\right). 
	\end{equation}
\end{claim}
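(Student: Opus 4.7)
Set $F(\la) := \la_i\, U_{i,t}(\la) - P_{i,t}(\la)$, so the left-hand side of~\eqref{eq:540} is exactly $F(\la_i) - F(\la'_i)$. My strategy is to express $F'$ cleanly and then lower-bound $U_{i,t}'(\la)$ pointwise on $[0,\bla]$. Let $S$ denote the Birkhoff polytope and
\[
\Psi_\la(x) \;:=\; F_0(x) + \sum_{k\neq i} \la'_k\, u_k^{\top} x_k + \la\, u_i^{\top} x_i,
\]
with unique (interior, by the $-\be/x_{kj}$ barrier) maximizer $x(\la) := \arg\max_{x\in S}\Psi_\la(x)$. By Danskin's theorem, $L(\la) := \Psi_\la(x(\la))$ is differentiable with $L'(\la) = u_i^{\top} x_i(\la) = U_{i,t}(\la)$; and since $P_{i,t}(\la) = \mathrm{const}_i - L(\la) + \la\, U_{i,t}(\la)$ by the regularized VCG definition, this gives $P_{i,t}'(\la) = \la\, U_{i,t}'(\la)$ and hence $F'(\la) = (\la_i - \la)\, U_{i,t}'(\la)$. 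Using $U_{i,t}'(\la) \ge 0$ (monotonicity of $U_{i,t}$, Claim~\ref{cl:reg25}), integration yields
\[
F(\la_i) - F(\la'_i) \;=\; \int_{\la'_i}^{\la_i} (\la_i - \la)\, U_{i,t}'(\la)\, d\la \;\ge\; \tfrac12(\la_i - \la'_i)^2 \cdot \inf_{\la \in [0,\bla]} U_{i,t}'(\la).
\]

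\textbf{Lower bounding $U_{i,t}'$ via implicit differentiation.} The KKT stationarity at $x(\la)$ reads $\la_k u_{kj} + f_0'(x_{kj}) = a_k + b_j$ for row/column multipliers $(a_k, b_j)$ (with $\la_i=\la$). Differentiating in $\la$ and writing $H := 2\be\,\mathrm{diag}(x_{kj}^{-3}) = -\nabla_x^2 \Psi_\la$, one gets a linear system which, after projecting onto the tangent space $TS := \{M : M\mathbf{1} = 0,\, \mathbf{1}^{\top} M = 0\}$ of $S$ to eliminate $(\dot a_k, \dot b_j)$, simplifies to $\dot x = H^{-1}\, \Pi_{TS} E^{(i)}$, where $E^{(i)} \in \mathbb{R}^{n \times n}$ has $i$-th row equal to $u_i$ and zeros elsewhere. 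Hence
\[
U_{i,t}'(\la) \;=\; \langle E^{(i)}, \dot x\rangle_F \;=\; \bigl\langle \Pi_{TS} E^{(i)},\, H^{-1}\, \Pi_{TS} E^{(i)}\bigr\rangle_F \;\ge\; \frac{\|\Pi_{TS} E^{(i)}\|_F^2}{\|H\|_{\mathrm{op}}}.
\]
A routine computation subtracting row and column means gives $\|\Pi_{TS} E^{(i)}\|_F^2 = \tfrac{n-1}{n}\sum_j (u_{ij} - \bar u_i)^2 = \Omega(1)$ by the normalization $\max_j u_{ij}=1$, $\min_j u_{ij}=0$. Since $H$ is diagonal, $\|H\|_{\mathrm{op}} = 2\be/\ell_0(\la)^3$ with $\ell_0(\la) := \min_{k,j} x_{kj}(\la)$.

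\textbf{Uniform lower bound on $\ell_0(\la)$.} Comparing to the uniform allocation and using $F_0(x(\la)) \le -\be/\ell_0(\la)$ (dominated by its smallest $x_{kj}$ term),
\[
-\be/\ell_0(\la) \;\ge\; F_0(x(\la)) \;\ge\; \Psi_\la(x(\la)) - n\bla \;\ge\; \Psi_\la\!\bigl(\tfrac{1}{n}\mathbf{1}\mathbf{1}^{\top}\bigr) - n\bla \;\ge\; -n^3\be - n\bla,
\]
so $\ell_0(\la) \ge \be/(n^3\be + n\bla) = \Omega(\be/(n^3\bla))$ uniformly for $\la \in [0,\bla]$. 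Combining, $\|H\|_{\mathrm{op}} = O(n^9\bla^3/\be^2)$ and $\inf_{\la \in [0,\bla]} U_{i,t}'(\la) = \Omega(\be^2\, n^{-9}\, \bla^{-3})$, which implies~\eqref{eq:540} with room to spare (the target $n^{-12}$ is weaker than the $n^{-9}$ we obtain). The main technical obstacle is the implicit-function step: verifying that projection onto $TS$ cleanly eliminates the multiplier derivatives $(\dot a_k,\dot b_j)$ to produce the identity $U_{i,t}'(\la) = \|\Pi_{TS} E^{(i)}\|_{H^{-1}}^2$. This is a block-elimination in the KKT differential that needs to be carried out carefully, though the payoff is a formula in which every remaining quantity ($\Pi_{TS} E^{(i)}$, the diagonal $H$, and $\ell_0$) admits a direct and uniform bound.
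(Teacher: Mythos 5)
Your route is genuinely different from the paper's. The paper's proof is ``discrete'': it writes the KKT stationarity conditions at the two endpoints $x=x(\la_i)$ and $x'=x(\la'_i)$, uses the normalization $u_{ir}=1$, $u_{is}=0$ to show that some coordinate must satisfy $|\be/x_{mj}^2-\be/{x'}_{mj}^2|\ge|\la_i-\la'_i|/4$, converts this to $|x_{mj}-x'_{mj}|\gtrsim \be^{1/2}n^{-6}\bla^{-3/2}|\la_i-\la'_i|$ via a pointwise lower bound on the allocation entries, and then invokes the strong-concavity gap of Claim~\ref{cl:reg2}. You instead integrate the identity $F'(\la)=(\la_i-\la)U_{i,t}'(\la)$ (which follows correctly from the envelope theorem and the VCG payment formula) and lower-bound $U_{i,t}'$ pointwise via implicit differentiation. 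Your envelope computation, the formula $\|\Pi_{TS}E^{(i)}\|_F^2=\frac{n-1}{n}\sum_j(u_{ij}-\bar u_i)^2=\Omega(1)$, and the barrier-based bound $\ell_0(\la)=\Omega(\be/(n^3\bla))$ are all correct, and the approach would indeed give a slightly stronger power of $n$ than the paper's.

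However, the step you yourself flag as the main obstacle contains a real error as stated. The identity $\dot x=H^{-1}\Pi_{TS}E^{(i)}$ is false for a general diagonal $H$: eliminating $(\dot a_k,\dot b_j)$ requires the $H^{-1}$-orthogonal projection onto the normal space $N$ of the Birkhoff polytope, not the Euclidean one. The correct statement is
\begin{equation*}
U_{i,t}'(\la)\;=\;\langle E^{(i)},\dot x\rangle\;=\;\dot x^{\top}H\dot x\;=\;\min_{\nu}\bigl\|E^{(i)}-N\nu\bigr\|_{H^{-1}}^2,
\end{equation*}
the squared $H^{-1}$-distance from $E^{(i)}$ to $N$. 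Your claimed quantity $\|\Pi_{TS}E^{(i)}\|_{H^{-1}}^2$ evaluates $\|E^{(i)}-N\nu\|_{H^{-1}}^2$ at the particular (Euclidean-optimal) $\nu$, so it is an \emph{upper} bound on $U_{i,t}'(\la)$, not an identity; combining an upper bound on $U'$ with the lower bound $\|\Pi_{TS}E^{(i)}\|_{H^{-1}}^2\ge\|\Pi_{TS}E^{(i)}\|_F^2/\|H\|_{\mathrm{op}}$ does not yield the lower bound on $U'$ that you need. The argument is rescued by running the inequality through the correct formula: $\min_{\nu}\|E^{(i)}-N\nu\|_{H^{-1}}^2\ge \min_{\nu}\|E^{(i)}-N\nu\|_F^2/\|H\|_{\mathrm{op}}=\|\Pi_{TS}E^{(i)}\|_F^2/\|H\|_{\mathrm{op}}$, since $v^{\top}H^{-1}v\ge\|v\|_F^2/\|H\|_{\mathrm{op}}$ for every $v$. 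With that repair (and a remark that $x(\la)$ is $C^1$ by the implicit function theorem, since the maximizer is interior and $\Psi_\la$ is strongly concave), your proof goes through and gives $\Omega(\be^2 n^{-9}\bla^{-3})$, which implies \eqref{eq:540}.
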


\begin{proof}
	Denote $$ \Phi (x):= F_0(x) + \sum_{ij} \la_i u_{ij} x_{ij}. $$
	Let $x$ be the maximizer of $\Phi(x)$. Note that $x$ is the resulting allocation on input $\la$. 
	
	The fact that $x$ maximizes $\Phi(x)$ implies that there exist $a_m$, $b_j$, $c_k$, and $d_\ell$ such that for all $m$ and $j$, 
	\begin{equation}\label{eq:531}
		\la_m u_{mj} + \frac{\be}{(x_{mj})^2} = a_m + b_j, 
	\end{equation}
	where without loss of generality (by adding a constant to all $a$'s and subtracting from all $b$'s) $\sum b_j =0$. This implies $a_m>0$ for all $m$. 
	
	Recall that $\la_k\le \bla$ for all $k$. There exists\footnote{By the 
		Birkhoff-von Neumann theorem, $x$ can be written as a convex combination of matchings. Moreover, by Caratheodory's theorem about convex hulls, $x$ can be written as a convex combination of at most $(n-1)^2+1 < n^2$ matchings. 
		The highest-weight matching $\pi$ will appear with weight $>1/n^2$.} a matching $\pi$ such that for all $k$, $x_{k\pi(k)}\ge 1/n^2$, and thus 
	$$
	|a_k + b_{\pi(k)} | \le \be\cdot n^4 + \bla. 
	$$ 
	Therefore, for all $j$, 
	\begin{equation}\label{eq:521}
		b_j <  \be\cdot n^4 +\bla. 
	\end{equation}
	and for some $m$, 
	\begin{equation}\label{eq:522}
		a_m\le  \be\cdot n^4 +\bla.
	\end{equation}
	
	We claim that for all $m,j$, \begin{equation}\label{eq:523}x_{mj}> \frac{1}{3}\cdot \left(\frac{\be}{n^4 \bla}\right)^{1/2}.\end{equation}
	If this is not the case, then $a_m+b_j>8 \bla n^4$, and thus by \eqref{eq:521} $a_m>6\bla n^4$. Therefore $b_{\pi(m)}<-4 \bla n^4$. For all $k$ we have
	$a_k+b_{\pi(m)} >0$, and thus $a_k>4 \bla n^4$, contradicting \eqref{eq:522}.

	Let $r$ and $s$ be such that $u_{ir}=1$ and $u_{is}=0$, and let $k$ be arbitrary. By adding equation \eqref{eq:531} with $(i,r)$ and $(k,s)$ and subtracting it with $(i,s)$ and $(k,r)$, we get 
	\begin{equation}\label{eq:542}
		\la_i - \la_k u_{kr} + \la_k u_{ks}=
		\la_i u_{ir}-\la_i u_{is} - \la_k u_{kr} + \la_k u_{ks} = 
		-\frac{\be}{x_{ir}^2} + \frac{\be}{x_{is}^2} +
		\frac{\be}{x_{kr}^2} - \frac{\be}{x_{ks}^2}. 
	\end{equation}
	Repeating this process for $x'$ instead of $x$ we get:
	\begin{equation}\label{eq:543}
		\la'_i - \la_k u_{kr} + \la_k u_{ks}=
		\la'_i u_{ir}-\la'_i u_{is} - \la_k u_{kr} + \la_k u_{ks} = 
		-\frac{\be}{{x'}_{ir}^2} + \frac{\be}{{x'}_{is}^2} +
		\frac{\be}{{x'}_{kr}^2} - \frac{\be}{{x'}_{ks}^2}, 
	\end{equation}
	and thus for some $m,j$,
	\begin{equation}
		\label{eq:544}
		\left| \frac{\be}{x_{mj}^2}  - \frac{\be}{{x'}_{mj}^2} \right|\ge
		\frac{|\la_i-\la'_i|}{4}. 
	\end{equation}
	Together with \eqref{eq:523} this implies 
	\begin{equation}
		\label{eq:545} |x_{mj}-x'_{mj}|> \Omega\left(
		\be^{1/2} \cdot n^{-6} \cdot \bla^{-3/2}
		\right) \cdot |\la_i - \la'_i|. 
	\end{equation}
	By Claim~\ref{cl:reg2}, with $\ga=\be$, this implies \eqref{eq:540}.
\end{proof}

\begin{claim}
	\label{cl:reg5}
	Let $\{\la_{i,t}\}$ be a feasible execution of the regularized algorithm with
	$\la_{i,t}\in[0,\bla]$, 
	such that player $i$ experiences total strong regret $<\ve T$. 
	Let $\la_i$ be the best-response strategy for player $i$ as described in \eqref{eq:lai}. Then 
	\begin{equation}
		\sum_{t} |\la_{i,t}-\la_i| < O\left( \ve^{1/2}\cdot T \cdot \bla^2 \cdot  n^6 \cdot \be^{-1}\right). 
	\end{equation}
\end{claim}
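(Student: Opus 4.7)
The plan is to combine the strong-regret hypothesis with the quadratic separation bound of Claim~\ref{cl:reg4} and a Cauchy--Schwarz step. First I would note that, under the cap $\la_{i,t}\in[0,\bla]$, the best response for player~$i$ consists of playing the single value $\la_i$ (as defined in \eqref{eq:lai}) in every round. In the unclipped regime ($\la_i<\bla$) this is the content of Claim~\ref{cl:reg25}. In the clipped regime ($\la_i=\bla$) it follows from monotonicity of $U_{i,t}$ and $P_{i,t}$ in $\la$ (also Claim~\ref{cl:reg25}): the constraint $\la\le\bla$ is the only binding one, and the unclipped best response $\tilde\la_i>\bla$ already spends exactly $T$ tokens, so playing $\bla$ each round remains budget-feasible.

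Next I apply Claim~\ref{cl:reg4} separately for each round $t$ with $\la'_k=\la_{k,t}$ (all of which lie in $[0,\bla]$ by assumption), producing a per-round inequality of the form
\[
\bigl[\la_i U_{i,t}(\la_i)-P_{i,t}(\la_i)\bigr]-\bigl[\la_i U_{i,t}(\la_{i,t})-P_{i,t}(\la_{i,t})\bigr]\ \ge\ c\,(\la_i-\la_{i,t})^2,
\]
with $c=\Omega(\be^{2}\,n^{-12}\,\bla^{-3})$. Summing over $t$ and writing $U^{*}=\sum_t U_{i,t}(\la_i)$, $U'=\sum_t U_{i,t}(\la_{i,t})$, $P^{*}=\sum_t P_{i,t}(\la_i)$, $P'=\sum_t P_{i,t}(\la_{i,t})$, I obtain
\[
\la_i(U^{*}-U')-(P^{*}-P')\ \ge\ c\sum_t (\la_i-\la_{i,t})^2.
\]

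The key step---and the one I expect to be the main obstacle---is upper-bounding this Lagrangian-style left-hand side by something comparable to $\bla\,\ve\,T$. The utility gap is easy: the strong-regret hypothesis directly yields $U^{*}-U'\le \ve T$, hence $\la_i(U^{*}-U')\le \bla\,\ve\,T$. The payment gap requires a small case analysis to show $P^{*}\ge P'$ (so that $-(P^{*}-P')\le 0$). In the unclipped case $\la_i<\bla$ the best response saturates the token budget by Claim~\ref{cl:reg25}, so $P^{*}=T\ge P'$. In the clipped case $\la_i=\bla\ge \la_{i,t}$, the per-round monotonicity of $P_{i,t}$ from Claim~\ref{cl:reg25} gives $P_{i,t}(\bla)\ge P_{i,t}(\la_{i,t})$ termwise, hence $P^{*}\ge P'$. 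Either way the left-hand side is at most $\bla\,\ve\,T$, so
\[
\sum_t(\la_i-\la_{i,t})^2\ \le\ \frac{\bla\,\ve\,T}{c}\ =\ O\!\bigl(\bla^{4}\,n^{12}\,\be^{-2}\,\ve\,T\bigr).
\]
A final Cauchy--Schwarz step, $\sum_t|\la_i-\la_{i,t}|\le T^{1/2}\bigl(\sum_t(\la_i-\la_{i,t})^2\bigr)^{1/2}$, produces the advertised bound $O(\ve^{1/2}\,T\,\bla^{2}\,n^{6}\,\be^{-1})$.
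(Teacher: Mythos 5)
Your proposal is correct and follows essentially the same route as the paper's proof: invoke Claim~\ref{cl:reg4} round by round, sum, control the summed Lagrangian gap by combining the strong-regret bound on the utility difference (after checking feasibility of the constant-$\la_i$ deviation) with the same two-case argument that $\sum_t P_{i,t}(\la_i)\ge\sum_t P_{i,t}(\la_{i,t})$ (budget saturation when $\la_i<\bla$, monotonicity when $\la_i=\bla$), and finish with Cauchy--Schwarz. No substantive differences.
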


\begin{proof}
	Let $x^t$ be the result obtained in round $t$ when player $i$ deviates to $\la_i$, and let $\ti{x}^t$ be the original result of playing $\la_{i,t}$.
	Let $P^t_i$ and $\ti{P}^t_i$ be the corresponding prices.  
	
	Note that if $\la_i=\bla$, then $\sum_t P^t_i \ge \sum_t \ti{P}^t_i$ since $P^t_i$ is increasing in $\la_i$ and in this case $\la_i=\bla\ge\la_{i,t}$. On the other hand, if $\la_i<\bla$, then $\sum_t P^t_i = T \ge \sum_t \ti{P}^t_i$ by the budget constraint.

	Using Claim~\ref{cl:reg4} we get the following chain of inequalities involving the regret player $i$ experiences:
	\begin{align*}
		\ve\cdot T & > \sum_t \sum_j (u_{ij} x^t_{ij} -u_{ij} \ti{x}^t_{ij} ) \\
		& \ge\bla^{-1}\cdot  \sum_t  \left[	\left(\sum_j \la_i u_{ij} x^t_{ij} - P^t_i \right)-
		\left(\sum_j \la_i u_{ij} \ti{x}^t_{ij} - \ti{P}^t_i\right) \right] \\ &
		\ge\Omega(1)\cdot \bla^{-1}\cdot  \sum_{t} (\la_i-\la_{i,t})^2 \cdot \be^2 \cdot n^{-12} \cdot \bla^{-3}\\ &
		\ge\Omega(1)\cdot\bla^{-4}\cdot \be^2 \cdot n^{-12}\cdot  \left( \sum_{t} |\la_i-\la_{i,t}|\right)^2\cdot T^{-1}.
	\end{align*}
	Thus 
	$$
	\sum_{t} |\la_i-\la_{i,t}| < O\left( \ve^{1/2}\cdot T \cdot \bla^2 \cdot  n^6 \cdot \be^{-1}\right).
	$$
\end{proof}

\begin{thm}\label{thm:reg}
	In the unit-demand allocation setting without money, let $\{u_{ij}\}$ be utilities such that $u_{ij}\in [0,1]$, and for each $i$, $\min_j u_{ij} = 0$ and $\max_j u_{ij}=1$. 
	
	Consider an execution of the algorithm with regularizer $f_0(x_{ij})=-\bla^{-3}\cdot n^{-4}/x_{ij}$, that is with  $$\be=\bla^{-3}\cdot n^{-4}$$ and with $\la_{i,t}\in[0,\bla]$, where $\bla>10$. Suppose that each player has strong regret $<\ve \cdot T$ with\footnote{Importantly, the bound we need on $\ve$ does not depend on $T$.}
	$$
	\ve=o\left(\bla^{-12}\cdot n^{-22}\right).
	$$ Let $x$ be the resulting allocation, and
	$P_i$ be the price charged to player $i$ during the execution of the algorithm.
	
	Let $\la_i$ be the best response for each player $i$ as in \eqref{eq:lai}. 	
	Let $C_j$ be VCG prices corresponding to utilities $\la_i u_{ij}$. 
	
	Then for  $$\de=10\cdot \bla^{-1}<1$$ 
	the
	allocation $x$ is a $\de$-competitive equilibrium at budgets $1$ supported by prices $C_j'=(1-\de/3)\cdot C_j$.
\end{thm}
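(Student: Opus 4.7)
\textbf{Proof plan for Theorem~\ref{thm:reg}.} The strategy is to combine Claim~\ref{cl:reg5} (concentration of the $\la_{i,t}$'s around their best responses $\la_i$) with Claim~\ref{cl:reg1} (approximate efficiency, price-closeness, and approximate competitive equilibrium when utilities are $\la_i u_{ij}$), and use continuity properties of unit-demand VCG from Claim~\ref{cl:app11} to transfer statements about the hypothetical ``uniform $\la_i$'' execution to the actual observed time-averaged allocation $x$ and payments.

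First, I would apply Claim~\ref{cl:reg5} to each player $i$ and then union-bound, giving $\sum_i \sum_t |\la_{i,t}-\la_i| = O(n \cdot \ve^{1/2} T \bla^2 n^6 \be^{-1})$. Plugging in the chosen values $\be = \bla^{-3} n^{-4}$ and $\ve = o(\bla^{-12} n^{-22})$, this is $o(T \cdot \bla^{-5})$. Call a round $t$ \emph{good} if $\sum_i |\la_{i,t}-\la_i| \le \bla^{-2}$; by Markov, the fraction of bad rounds is at most $O(\bla^{-3})$, which is absorbed into $\de = 10/\bla$. At good rounds, Claim~\ref{cl:app11}(5) gives that $|C^t_j - C_j| = O(\bla^{-2})$ where $C^t_j$ are the round-$t$ VCG prices and $C_j$ are the ``reference'' prices at $\la_i u_{ij}$, and strong concavity of $\Psi_{\la}$ (as in Claim~\ref{cl:reg2}) gives that the allocation $x^t_{ij}$ is close in $\ell_2$ to the reference allocation $x^*_{ij}$ obtained from $\la_i u_{ij}$.

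Next, I would apply Claim~\ref{cl:reg1} with $M = \bla$ to the reference execution (all players playing $\la_i$), obtaining $\eta = O(\sum_i \la_i / \bla + \be \cdot n^3 \cdot \bla) = O(n \cdot \bla^{-1})$ up to regularizer terms, which is $O(\de/n)$ after rescaling. This gives three pieces of information about the reference allocation $x^*$: (i) near-efficiency with loss $\eta$, (ii) the \emph{regularized} VCG payments per player are within $\eta$ of $\sum_j x^*_{ij} C_j$, and most importantly (iii) for any alternative bundle $y_i$ with $\sum_j y_{ij}=1$, $\sum_j \la_i u_{ij} y_{ij} \le \sum_j \la_i u_{ij} x^*_{ij} + \sum_j y_{ij} C_j - P_i^* + \eta$. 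Averaging over rounds and folding in the bounded-discrepancy bound from the previous step yields the analogous inequality for the time-averaged $x_{ij}$: for every feasible $y$, $\sum_j u_{ij}(y_{ij}-x_{ij}) \le (\sum_j C_j y_{ij} - 1)/\la_i + O(\de/\la_i)$.

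Finally, I would rescale the prices to $C'_j := (1-\de/3) C_j$ to obtain the competitive equilibrium statement. The $(1-\de/3)$ slack serves two purposes: it ensures that the \emph{actual} averaged payment $\sum_j C'_j x_{ij}$ stays within budget $1$ (since $\sum_t P_{i,t} \le T$ by feasibility of the execution and the price gap from Claim~\ref{cl:reg1}(2) is absorbed), and it converts the additive $O(\de)$ slack from the deviation inequality into a multiplicative $(1-\de)$-type bound on alternative bundles at the modified prices, giving exactly the $\de$-competitive equilibrium condition; for players with $\la_i = \bla$, one instead invokes the analogue of Claim~\ref{cl:vcg1} to observe that they effectively receive their favorite items and the equilibrium property holds trivially. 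The main obstacle is the bookkeeping in the final step: one must carefully track four kinds of error --- bad rounds, price drift $|C^t_j - C_j|$, regularizer loss $\eta$, and the feasibility slack from scaling --- and verify that all of them fit inside the single parameter $\de$ with the chosen dependencies $\be = \bla^{-3} n^{-4}$ and $\ve = o(\bla^{-12} n^{-22})$.
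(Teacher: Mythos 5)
Your high-level plan (concentration via Claim~\ref{cl:reg5}, then Claim~\ref{cl:reg1} plus price continuity from Claim~\ref{cl:app11}) matches the paper's, but the middle step has a genuine gap. You propose to apply Claim~\ref{cl:reg1} to the \emph{reference execution} in which every player bids $\la_k$, and to transfer its conclusions to the actual time-averaged allocation by arguing that each round's allocation $x^t$ is $\ell_2$-close to the reference allocation $x^*$ via strong concavity. But the strong concavity constant of the regularized objective is only $\ga=\Theta(\be)=\Theta(\bla^{-3}n^{-4})$, so the bound one gets is $\|x^t-x^*\|_2^2\lesssim s_t/\be$ where $s_t=\sum_k|\la_{k,t}-\la_k|$. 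The available concentration is only $\frac{1}{T}\sum_t s_t=o(\bla^{-1})$ (not the $o(\bla^{-5})$ you compute --- recheck the arithmetic in \eqref{eq:560}), so $s_t/\be$ is of order $\bla^{2}n^{4}$ and the closeness claim is vacuous; your good/bad round split with threshold $\bla^{-2}$ also leaves a non-vanishing fraction of bad rounds by Markov. The paper avoids this entirely: its hypothetical comparison execution replaces only player $i$'s bid by $\la_i$ while keeping the other players at their \emph{actual} bids $\la_{k,t}$. That is exactly a deviation available to player $i$, so the strong-regret hypothesis directly bounds $\frac1T\sum_t\sum_j u_{ij}\ti{x}^t_{ij}$ against $\sum_j u_{ij}x_{ij}$, and the only transfer needed is price continuity $|\ti{C}_{j,t}-C_j|\le 2s_t$ (linear in $s_t$, no $\be^{-1}$), which is then summed over $t$. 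In your proposal the strong-regret hypothesis never actually enters the utility comparison, which is a sign the argument cannot close.

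There is also a quantitative error in the application of Claim~\ref{cl:reg1}: with $M=\bla$ you get $\eta=\sum_i\la_i/M-n^2f_0(1/(nM))$, and since $\sum_i\la_i$ can be as large as $n\bla$, this is $\Theta(n)$, not $O(n\bla^{-1})$ --- far too large to absorb into $\de=10\bla^{-1}$. The paper balances the two terms by taking $M=\bla^{1/2}\be^{-1/2}n^{-1}$, which yields $\eta\le 2\bla^{1/2}n^2\be^{1/2}=2\bla^{-1}$. Your final step (rescaling prices by $(1-\de/3)$ for budget feasibility, and the $\la_i=\bla$ case giving the equilibrium condition essentially for free) is in line with the paper, but it only works once the two issues above are repaired.
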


\begin{proof}
	Applying Claim~\ref{cl:reg5} to each player $i$ and taking the sum we get 
	\begin{equation}
		\label{eq:560}
		\sum_{t} \sum_i |\la_{i,t}-\la_i| =: s\cdot T=  O\left( \ve^{1/2}\cdot T \cdot \bla^2 \cdot  n^7 \cdot \be^{-1}\right)=
		o(T\cdot \bla^{-1}).
	\end{equation}
	We apply Claim~\ref{cl:reg1} to $\{\la_{i,t}\}$ for each $t$. Since $\la_{i,t}\le \bla$,  we can choose $M=\bla^{1/2}\cdot \be^{-1/2} \cdot n^{-1}$ to obtain the statement with 
	$$
	\eta \le 2 \cdot \bla^{1/2} \cdot n^2\cdot \be^{1/2}= 2\cdot \bla^{-1}. 
	$$
	Denote 
	$$ 
	s_t :=\sum_i |\la_{i,t}-\la_i|, 
	$$
	the contribution of round $t$ to $s$. Note that $s=\frac{1}{T}\sum_t s_t$. 
	
	Let $C_{j,t}$ be the VCG price of item $j$ with utilities $\la_{i,t}u_{ij}$.	
	By Claim~\ref{cl:app11} we have 
	\begin{equation}
		|C_j-C_{j,t}| \le 2\cdot s_t. 
	\end{equation}
	Applying \eqref{eq:471} we get 
	\begin{align*}
		\sum_t \sum_j C'_j x^t_{ij} & \le (1-\de/3)\cdot  \sum_t \left(2 s_t + \sum_j C_{j,t}\cdot  x^t_{ij} \right)\\ & \le (1-\de/3) \cdot \sum_t \left(2 s_t + P_{i,t}+\eta\right)\\ &= 
		(1-\de/3)\cdot ( 2 s\cdot T + \eta\cdot T + P_i) \\
		& \le T. 
	\end{align*}
	Consider an alternative allocation $y_{ij}$ such that 
	$$
	\sum_j C'_{j} y_{ij} \le 1,
	$$
	and thus 
	$$
	\sum_j C_{j} y_{ij} < 1+\de/2. 
	$$
	We need to bound $\sum_j u_{ij} y_{ij} - \sum_j u_{ij} x_{ij}$.

	Consider the execution of the algorithm where $\la_{i,t}$ is replaced with $\la_i$ with all other bids remaining $\la_{k,t}$. Let $\ti{C}_{j,t}$ be the resulting VCG prices, $\ti{x}^t_{kj}$ be the resulting allocation and $\ti{P}_{i,t}$ the resulting payment due from player $i$. 
	We know that 
	$$
	\sum_t \ti{P}_{i,t}\le T, 
	$$
	and that by the low strong regret condition, 
	$$
	T^{-1} \cdot \sum_t \sum_j u_{ij}\ti{x}^t_{ij} \le  \sum_j u_{ij} x_{ij} + \ve. 
	$$
	Moreover $\sum_t \ti{P}_{i,t}=T$ unless $\la_i=\bla$. 
	
	By condition \eqref{eq:472} we get
	\begin{align*}
		T\cdot  \sum_j y_{ij} \la_i u_{ij} &=\sum_t \sum_j y_{ij} \la_i u_{ij} \\ & \le \sum_t \sum_j \ti{x}^t_{ij} \la_i u_{ij} + \sum_t \sum_j y_{ij} \ti{C}_{j,t} - \sum_t \ti{P}_{i,t} + T\cdot  \eta\\
		& \le T \cdot \la_i  \cdot \sum_j u_{ij} x_{ij} + \ve\la_i T + T\cdot \sum y_{ij} C_j + 
		\sum_t \sum_j y_{ij} |\ti{C}_{j,t}-C_j| -   \sum_t \ti{P}_{i,t} + T\cdot \eta \\ & \le T \cdot \la_i \cdot \sum_j u_{ij} x_{ij} + \ve\la_iT+ T\cdot(1+\de/2) + 2 s\cdot T  - \sum_t \ti{P}_{i,t} + T\eta \\
		& = T \cdot \la_i \cdot \sum_j u_{ij} x_{ij}   - \sum_t \ti{P}_{i,t} + T\cdot (1+\de/2+\eta +  \ve\la_i +2 s)
	\end{align*}
	Recall that $\la_i\ge 1$. If $\la_i<\bla$, then we get 
	$$
	\sum_j u_{ij} y_{ij} - \sum_j u_{ij} x_{ij} \le \frac{1}{\la_i} \cdot (\de/2+\eta +  \ve\la_i +2 s) < \de. 
	$$
	If $\la_i= \bla$, we can just drop the  $ - \sum_t \ti{P}_{i,t}$ term and divide by $\la_i\cdot T=\bla\cdot T$ to get 
	$$
	\sum_j u_{ij} y_{ij} - \sum_j u_{ij} x_{ij} 
	\le \ve +  (1+\de/2+\eta  +2 s)\cdot \bla^{-1} 
	< \ve + 2\cdot \bla^{-1}<\de.
	$$
\end{proof}

\end{appendix}
\end{document}